\newtheorem{lemma}{Lemma}
\newtheorem{theorem}{Theorem}
\newtheorem{observation}{Observation}
\newtheorem{corollary}{Corollary}
\newtheorem{definition}{Definition}
\newtheorem{claim}{Claim}
\newtheorem*{claim*}{Claim}
\newtheorem{proposition}{Proposition}
\newtheorem{remark}{Remark}
\newenvironment{claimproof}[1]{\emph{Proof. }\space#1}{\hfill $\blacksquare$\medskip\par}
\newcommand{\lbdelta}{3/956}
\newcommand{\lbdeltafrac}{\frac{3}{956}}
\newcommand{\lbepsilon}{3/3836}
\newcommand{\lbepsilonfrac}{\frac{3}{3836}}
\newcommand{\lbnone}{n(\frac{1}{4}-\delta)/(\frac{1}{4}+\frac{\delta}{3})}
\newcommand{\mms}{\text{MMS}}
\newcommand{\MMS}{\mathrm{MMS}}
\newcommand{\I}{\mathcal{I}}
\title{Breaking the $3/4$ Barrier for Approximate Maximin Share}
\author{Hannaneh Akrami\thanks{Max Planck Institute for Informatics and Graduiertenschule Informatik, Universit\"at des Saarlandes}\\ \texttt{\small hakrami@mpi-inf.mpg.de} \and Jugal Garg\thanks{University of Illinois at Urbana-Champaign. Supported by NSF Grant CCF-1942321}\\ \texttt{\small jugal@illinois.edu}}
\date{}
\begin{document}
\maketitle
\begin{abstract}
    We study the fundamental problem of fairly allocating a set of indivisible goods among $n$ agents with additive valuations using the desirable fairness notion of maximin share (MMS). MMS is the most popular share-based notion, in which an agent finds an allocation fair to her if she receives goods worth at least her MMS value. An allocation is called MMS if all agents receive at least their MMS value. 
    Since MMS allocations need not exist when $n>2$, a series of works showed the existence of approximate MMS allocations with the current best factor of $\frac34 + O(\frac1n)$. However, a simple example in~\cite{deuermeyer1982scheduling, babaioff2021fair, simple} showed the limitations of existing approaches and proved that they cannot improve this factor to $3/4 + \Omega(1)$.
    In this paper, we bypass these barriers to show the existence of $(\frac{3}{4} + \lbepsilonfrac)$-MMS allocations by developing new reduction rules and analysis techniques.
\end{abstract}
\section{Introduction}
Fair allocation of resources (goods) is a fundamental problem in the intersection of computer science, economics, and social choice theory. This age-old problem arises naturally in a wide range of real-life settings, which was formally introduced in the seminal work of Steinhaus in the 1940s~\cite{steinhaus1948problem}. 
Depending on what properties the goods have and what notion of fairness is considered, one can address a wide range of problems. Extensive work has been done for the case of \emph{divisible} goods, where goods can be fractionally allocated, e.g.,~\cite{Varian74,Foley67,AzizM16b,AzizM16}. 

More recently, fair division of indivisible goods has received significant attention due to their applications in various multi-agent settings. Formally, an instance of fair division of indivisible goods consists of a set $N = \{1,2, \ldots, n\}$ of agents, a set $M$ of $m$ indivisible goods, and valuation vector $\mathcal{V} = (v_1, \ldots, v_n)$ where $v_i: 2^M \rightarrow \mathbb{R}_{\geq 0}$ is the valuation function of agent $i$. The goal is to find an allocation $A = \langle A_1, A_2, \ldots, A_n  \rangle$,  in which agent $i$ gets $A_i$, and $A$ satisfies some fairness criteria.

Two main categories of fairness are envy-based notions and share-based notions. Roughly speaking, in envy-based notions, an agent finds an allocation fair by comparing her bundle with other agents' bundles. Under allocation $A$, if certain conditions are met for all agents (e.g., $v_i(A_i) \geq v_i(A_j)$ for all $i,j \in N$ in the case of envy-freeness), then $A$ is fair. Popular examples of envy-based notions are envy-freeness (EF) and its relaxations envy-freeness up to any good (EFX) \cite{EFXjournal}, and envy-freeness up to one good (EF1) \cite{EF1}. 

In share-based notions, an agent finds an allocation fair only through the value she obtains from her bundle (irrespective of what others receive). For each agent $i$, if the value $i$ receives is at least some threshold $t_i$, then the allocation is said to be fair. An example of a share-based notion is proportionality. An allocation $A$ is proportional if all agents receive their proportional share, i.e., $v_i(A_i) \geq v_i(M)/n$ for all agents $i \in N$. It is easy to see that proportionality is too strong to be satisfied in the discrete setting.\footnote{As a counter-example, consider two agents and one good with a positive utility to both of the agents. Note that no matter how we allocate this good, one agent receives $0$ utility, which rules out the existence of proportional allocations and any approximation of proportionality.} This necessitates studying relaxed fairness notions when goods are indivisible. 

In this paper, we consider a natural relaxation of proportionality called \emph{maximin share (MMS)}, introduced by Budish \cite{budish2011combinatorial}. It is also preferred by participating agents over other notions, as shown in real-life experiments by~\cite{GatesGD20}. Maximin share of an agent is the maximum value she can guarantee to obtain if she divides the goods into $n$ bundles (one for each agent) and receives a bundle with the minimum value. Basically, for an agent $i$, assuming that all agents have $i$'s valuation function, the maximum value one can guarantee for all the agents is the $i$'s maximin share, denoted by $\mms_i$. Formally, for a set $S$ of goods and any positive integer $d$, let $\Pi_d(S)$ denote the set of all partitions of $S$ into $d$ bundles. Then,
\begin{align}
    \MMS_i^d(S) := \max_{P \in \Pi_d(S)} \min_{j=1}^d v_i(P_j).\nonumber 
\end{align}
For all agents $i$, $\MMS_i = \MMS_i^n(M)$. An allocation is MMS if all agents value their bundles at least as much as their MMS values. Formally, allocation $A$ is MMS if $v_i(A_i) \geq \mms_i$ for all agents $i \in N$. 

Since MMS allocations do not always exist when there are three or more agents with additive valuations \cite{procaccia2014fair,feige2021tight}, the focus shifted to study approximations of MMS. An allocation $A$ is $\alpha$-MMS if $v_i(A_i) \geq \alpha\cdot \mms_i$ for all agents $i \in N$. 
We note that the MMS notion is closely related to the popular max-min objective or the classic Santa Claus problem ($\max_A \min_i v_i(A_i)$)~\cite{BansalS06}. Unlike the max-min objective, the ($\alpha$-)MMS objective satisfies the desirable scale-invariance property. In the case of agents with identical valuations, an exact MMS allocation exists, and in this case, finding $\alpha$-MMS allocation is equivalent to $\alpha$-approximation of the Santa Claus problem. The best approximation factor known for the max-min objective under additive valuations is $\tilde{O}(m^\varepsilon)$ for any $\varepsilon>0$~\cite{ChakrabartyCK09}.

For the MMS problem, Procaccia and Wang \cite{procaccia2014fair} showed the existence of $2/3$-MMS allocations. Many follow-up works have improved the approximation factor \cite{barman2020approximation, ghodsi2018fair, garg2019approximating, amanatidis2017approximation, kurokawa2018fair, garg2021improved} with the current best result of $\alpha = \frac{3}{4} + \min(\frac{1}{36}, \frac{3}{16n-4})$ \cite{simple}. However, since the work of Ghodsi et al. \cite{ghodsi2018fair}, 
the best known constant approximation factor for MMS has remained $3/4$ for large $n$. In this work, we break this $3/4$ wall by proving the existence of $(\frac{3}{4} + \lbepsilonfrac)$-MMS allocations. 

After Ghodsi et al. \cite{ghodsi2018fair} proved the existence of $3/4$-MMS allocations and gave a PTAS to compute one, Garg and Taki \cite{garg2021improved} gave a simple algorithm with complicated analysis proving the existence of $(\frac{3}{4} + \frac{1}{12n})$-MMS allocations and also computing a $3/4$-MMS allocation in polynomial time. Very recently, Akrami et al. \cite{simple} simplified the analysis of (a slight modification of) the Garg-Taki algorithm significantly and proved the existence of $(\frac{3}{4} + \min(\frac{1}{36}, \frac{3}{16n-4}))$-MMS allocations. However, a simple example in~\cite{deuermeyer1982scheduling, babaioff2021fair, simple} shows that no constant factor better than $3/4$ can be obtained for approximate MMS using Garg-Taki algorithm. In Section \ref{technical}, we discuss the known techniques' barriers in more detail and how our algorithm overcomes these barriers.

\begin{table}[]
    \centering
    \begin{tabular}{|c|l|l|}
        \hline
        & \textbf{Existence} & \textbf{Non-existence} \\
        \hline
        \hline
        $n=3$ & $11/12$ \cite{feige2022improved}& $>39/40$ \cite{feige2021tight}\\
        \hline
        $n=4$ & $4/5$ \cite{ghodsi2018fair, babaioff2022fair}& $>67/68$ \cite{feige2021tight}\\
        \hline
        \multirow{5}{2.5em}{$n > 4$} & $2/3$  \cite{procaccia2014fair,amanatidis2017approximation,kurokawa2018fair,garg2019approximating}& \\ 
        & $2/3(1 + 1/(3n-1))$ \cite{barman2020approximation}& $> 1 - \mathcal{O}(\frac{1}{2^n})$ \cite{procaccia2014fair}\\
        & $3/4$ \cite{ghodsi2018fair}& \\
        & $3/4 + 1/(12n)$ \cite{garg2021improved}& $> 1 - \frac{1}{n^4}$ \cite{feige2021tight}\\
        & $3/4 + \min(1/36, 3/(16n-4))$ \cite{simple}& \\
        & \cellcolor{gray!25}$3/4 + \lbepsilon$ \textbf{(\cref{thm:main})}& \\
        \hline
    \end{tabular}
    \caption{Summary of the approximate MMS results when agents have additive valuations}
    \label{tab:my_label}
\end{table}

The complementary problem is to find upper bounds on the largest $\alpha$ for which $\alpha$-MMS allocations exist. Feige et al.~\cite{feige2021tight} constructed an example with three agents and nine goods for which no allocation is better than $39/40$-MMS. For $n \ge 4$, their construction gives an example for which no allocation is better than $(1-n^{-4})$-MMS.
Table~\ref{tab:my_label} summarizes all these results. We note that most of these existence results can be easily converted into PTAS for finding such an allocation using the PTAS for finding the MMS values~\cite{woeginger1997polynomial}. 

\subsection{Further related work}

\noindent{\bf Special cases.}
There has been a line of work on the instances with a limited number of agents or goods.
When $m \leq n+3$, an MMS allocation always exists \cite{amanatidis2017approximation}.
Feige et al. \cite{feige2021tight} improved this bound to $m \leq n+5$.
For $n=2$, MMS allocations always exist \cite{bouveret2016characterizing}.
For $n=3$, the MMS approximation was improved from $3/4$ \cite{procaccia2014fair}
to $7/8$ \cite{amanatidis2017approximation} to $8/9$ \cite{gourves2019maximin}, and then to $11/12$~\cite{feige2022improved}. For $n=4$, Ghodsi et al. \cite{ghodsi2018fair} showed the existence of $4/5$-MMS. For $n \geq 5$, the best known factor is the general $(\frac{3}{4} + \min(\frac{1}{36}, \frac{3}{16n-4}))$ bound given by Akrami et al. \cite{simple}.
\medskip

\noindent{\bf Ordinal approximation.} An alternative way of relaxing MMS is guaranteeing $1$-out-of-$d$ maximin share (MMS) for $d>n$, which is the maximum value that an agent can ensure by partitioning the goods into $d$ bundles and choosing the least preferred bundle. This notion only depends on the bundles' ordinal ranking and is not affected by a small perturbation in the value of every single good (as long as the ordinal ranking of the bundles does not change). A series of works studied this notion \cite{ordinalOne, hosseini2021guaranteeing} with the state-of-the-art being the existence of $1$-out-of-$\lfloor \frac{3n}{2} \rfloor$ MMS allocations for goods \cite{Hosseini2021OrdinalMS}. 
\medskip

\noindent{\bf Chores.} MMS can be analogously defined for fair division of chores. MMS allocations do not always exist for chores \cite{aziz2017algorithms}, which motivated the study of approximate MMS ~\cite{aziz2017algorithms,barman2020approximation}, with the current best approximation ratio being very recently improved from $11/9$ \cite{huang2021algorithmic} to $13/11$~\cite{huang2023reduction}. In the case of $n=3$, $19/18$-MMS allocations exist~\cite{feige2022improved}.

MMS in the chores setting is closely related to the well-studied variants of bin-packing and job scheduling problems. In particular, the recent paper~\cite{huang2023reduction} utilizes the Multifit algorithm for makespan minimization to obtain the best approximation factor. Therefore, many ideas which are already developed are proven to be useful when dealing with chores. 
On the other hand, when dealing with goods, the related variants of bin packing and scheduling problems do not make much sense where the objective becomes to maximize the number/capacity of bins or maximize the minimum processing time of a machine while allocating all the items. Therefore, new ideas specific to this problem are required. Furthermore, although the explicit study of MMS for goods started much before chores, the advancement in approximate MMS for chores has been faster. Also, the current best factor $(13/11)$ is much better than the analogous factor for goods $(3/4+\lbepsilon)$, despite the extensive work by many researchers on the goods problem.

For ordinal approximation, the best-known factor for existence is $1$-out-of-$\lfloor \frac{3n}{4} \rfloor$ MMS allocations for chores. The discrepancy carries on to the ordinal approximations of MMS. While the best known $d$ for which $1$-out-of-$d$ MMS allocations exist in the goods setting is $\lfloor 3n/2\rfloor$ \cite{Hosseini2021OrdinalMS}, the analogous factor for the chores setting is $\lfloor 3n/4\rfloor$ \cite{Hosseini2022OrdinalMS}.
\medskip

\noindent{\bf Other settings.} The MMS notion has also been studied when agents have more general valuations than additive, e.g.,~\cite{barman2020approximation,ghodsi2018fair,li2021fair,uziahu2023fair}. Generalizations have also been studied where restrictions are imposed on the set of feasible allocations, such as matroid constraints \cite{gourves2019maximin}, cardinality constraints \cite{biswas2018fair}, and graph connectivity constraints \cite{bei2022price,truszczynski2020maximin}. Strategyproof versions of fair division have also been studied~\cite{barman2019fair,amanatidis2016truthful,amanatidis2017truthful,aziz2019strategyproof}. MMS has also inspired other notions of fairness, like weighted MMS \cite{farhadi2019fair}, AnyPrice Share (APS) \cite{babaioff2021fair}, Groupwise MMS \cite{barman2018groupwise,chaudhury2021little}, $1$-out-of-$d$ share \cite{hosseini2021guaranteeing}, and self-maximizing shares \cite{babaioff2022fair}. MMS has also been studied in best-of-both-worlds settings, where both ex-ante and ex-post guarantees are sought~\cite{BabaioffEF22}.

\section{Preliminaries}
For all $n \in \mathbb{N}$, let $[n]=\{1,2, \ldots, n\}$. A fair division instance $\mathcal{I}=(N,M,\mathcal{V})$ consist of a set of agents $N=[n]$, a set of goods $M=[m]$ and a vector of valuation functions $\mathcal{V}=(v_1, v_2, \ldots, v_n)$ such that for all $i \in [n]$, $v_i:2^M \rightarrow \mathbb{R}_{\geq 0}$ indicates how much agent $i$ likes each subset of the goods. In this paper, we assume the valuation functions are additive, i.e., for all $i\in [n]$ and $S \subseteq M$, $v_i(S) = \sum_{g \in S} v_i(\{g\})$. For ease of notation, for all $g \in M$, we use $v_i(g)$ or $v_{i,g}$ instead of $v_i(\{g\})$.

For a set $S$ of goods and any positive integers $d$, let $\Pi_d(S)$ denote the set of all partitions of $S$ into $d$ bundles. Then for any  valuation function $v$,
\begin{align}
    \MMS_v^d(S) := \max_{P \in \Pi_d(S)} \min_{j=1}^d v(P_j). \label{MMS-def}    
\end{align}
When the instance $\mathcal{I}=(N,M,\mathcal{V})$ is clear from the context, we denote $\mms_{v_i}^n$ by $\mms_i(\I)$ or $\mms_i$ for all $i \in [n]$. For each agent $i$, let $P^i=(P^i_1, P^i_2, \ldots, P^i_n)$ be a partition of $M$ into $n$ bundles admitting the MMS value of agent $i$. Formally, $\mms_i = \min_{j \in [n]} v_i(P^i_j)$. We call such a partition, an MMS partition of agent $i$. An allocation $X$ is MMS if for all agents $i \in N$, $v_i(X_i) \geq \mms_i$. Similarly, for any $0 < \alpha \leq 1$, an allocation $X$ is $\alpha$-MMS if $v_i(X_i) \geq \alpha\cdot \mms_i$ for all agents $i \in N$. 
\begin{definition}[Ordered instance]
     An instance $\I=(N,M,\mathcal{V})$ is ordered if there exists an ordering of the goods $(g_1, g_2, \ldots, g_m)$ such that for all agents $i \in N$, $v_i(g_1) \geq v_i(g_2) \geq \ldots \geq v_i(g_m)$.
\end{definition}
It is known that the hardest instances of MMS are the ordered instances~\cite{barman2020approximation}. We use the notations used in \cite{simple}. 
\begin{definition}[\cite{simple}]
For the fair division instance $\I = ([n], [m], \mathcal{V})$,
$\mathtt{order}(\I)$ is defined as the instance $([n], [m], \mathcal{V}')$, where
for each $i \in [n]$ and $j \in [m]$, $v'_i(j)$ is
the $j^{\text{th}}$ largest number in the multiset $\{v_i(g) \mid g \in [m]\}$.
\end{definition}

The transformation $\mathtt{order}$ is $\alpha$-\emph{MMS-preserving}, i.e.,
for a fair division instance $\I$, given an $\alpha$-MMS allocation of $\mathtt{order}(\I)$, one can compute an $\alpha$-MMS allocation of $\I$ in polynomial time~\cite{barman2020approximation}. Given any ordered instance $\I=([n],[m],\mathcal{V})$, without loss of generality, we assume $v_i(1) \geq v_i(2) \geq \ldots \geq v_i(m)$ for all $i \in [n]$.
\begin{lemma}[\cite{barman2020approximation}]\label{order-preserves}
    Given an instance $\I$ and an $\alpha$-MMS allocation of $\mathtt{order}(\I)$, one can compute an $\alpha$-MMS allocation of $\I$ in polynomial time.
\end{lemma}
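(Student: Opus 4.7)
The plan is to give a constructive reduction that, from an $\alpha$-MMS allocation $A'$ of $\mathtt{order}(\I)$, produces an $\alpha$-MMS allocation $A$ of $\I$. The first observation is that for every agent $i$, $\MMS_i(\I) = \MMS_i(\mathtt{order}(\I))$: the MMS value depends only on the multiset of values $\{v_i(g) : g \in [m]\}$, and by construction $\{v'_i(j) : j \in [m]\}$ is exactly this multiset, so any partition of $[m]$ in $\mathtt{order}(\I)$ can be mirrored (via a suitable bijection on goods) to a partition of $[m]$ in $\I$ with identical minimum value for agent $i$. Thus it suffices to guarantee $v_i(A_i) \geq \alpha \cdot \MMS_i(\mathtt{order}(\I))$.

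The algorithm processes positions $j = 1, 2, \ldots, m$ in increasing order. At step $j$, let $i(j)$ denote the (unique) agent with $j \in A'_{i(j)}$ in the ordered allocation. Let agent $i(j)$ pick, from among the goods of $\I$ not yet assigned, the one that maximizes $v_{i(j)}$; add it to $A_{i(j)}$. This clearly runs in polynomial time and produces a valid allocation of $\I$.

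The key step is the invariant: at step $j$, the good selected has value at least $v'_{i(j)}(j)$ for agent $i(j)$. To see this, fix the agent $i = i(j)$ and let $h^i_1, h^i_2, \ldots, h^i_m$ be the goods of $\I$ sorted in decreasing order of $v_i$, so that $v_i(h^i_k) = v'_i(k)$. Before step $j$, exactly $j-1$ goods have been removed, so by pigeonhole at least one of $h^i_1, \ldots, h^i_j$ remains available; the greedy choice is therefore worth at least $v_i(h^i_j) = v'_i(j)$ to agent $i$. Summing over $j \in A'_i$ yields
\[
v_i(A_i) \;\geq\; \sum_{j \in A'_i} v'_i(j) \;=\; v'_i(A'_i) \;\geq\; \alpha \cdot \MMS_i(\mathtt{order}(\I)) \;=\; \alpha \cdot \MMS_i(\I),
\]
which is the desired guarantee for every agent $i$.

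The argument is essentially bookkeeping; the only delicate point is the pigeonhole invariant above, which is what forces us to process positions in increasing order of $j$ (so that when a position with a large $v'_i(j)$ is handled, few goods have been removed). No new technique is required, and the reduction is that of Barman and Krishnamurthy; I would simply include it for completeness.
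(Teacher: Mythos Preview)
Your proof is correct and is precisely the standard picking-sequence argument from \cite{barman2020approximation}. The paper itself does not supply a proof of this lemma; it merely cites the result, so there is nothing further to compare.
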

\begin{definition}[Normalized instance]
    An instance $\mathcal{I}=(N,M,\mathcal{V})$ is \emph{normalized}, if for all $i,j \in [n]$, $v_i(P^i_j)=1$.
\end{definition}
Note that since $v_i$ is additive, if $\I$ is normalized, then for all MMS partitions of $i$ like $Q=(Q_1, \ldots, Q_n)$ and for all $j \in [n]$ we have $v_i(Q_j)=1$.
\cite{simple} shows that given any instance $\mathcal{I}=(N,M,\mathcal{V})$, one can compute a normalized instance $\mathcal{I}'=(N,M,\mathcal{V'})$ such that any $\alpha$-MMS allocation for $\mathcal{I}'$ is an $\alpha$-MMS allocation for $\mathcal{I}$. Their algorithm converting an instance to a normalized instance is shown in \cref{algo:normalize}. We note that since finding an agent's MMS value is NP-hard, this is not a polynomial-time algorithm, but a PTAS exists.
\begin{algorithm}[!t]
\caption{$\mathtt{normalize}(N, M, \mathcal{V})$}
\label{algo:normalize}
\begin{algorithmic}[1]
\For{$i \in N$}
    \State Compute agent $i$'s MMS partition $P^i$.
    \State $\forall j \in N$, $\forall g \in P^i_j$, let $v'_{i,g} \leftarrow v_{i,g} / v_i(P^i_j)$.
\EndFor
\State \Return $(N, M, \mathcal{V}')$.
\end{algorithmic}
\end{algorithm}
\begin{lemma}[\cite{simple}]\label{thm:normalize}
Let $\I' = (N, M, \mathcal{V}') = \mathtt{normalize}(\I = (N, M, \mathcal{V}))$. Then for any allocation $A$,
$v_i(A_i) \ge v'_i(A_i)\MMS_i(\I)$ for all $i \in N$.
\end{lemma}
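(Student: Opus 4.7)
The statement is immediate from unpacking the definition of \texttt{normalize} and the fact that every bundle in an MMS partition has value at least the MMS value. My plan is a one-paragraph direct calculation.

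Fix an agent $i \in N$ and let $P^i = (P^i_1, \ldots, P^i_n)$ be the MMS partition computed on line 2 of \cref{algo:normalize}. By the defining property of an MMS partition,
\[
v_i(P^i_j) \;\ge\; \min_{k \in [n]} v_i(P^i_k) \;=\; \MMS_i(\I) \qquad \text{for every } j \in [n].
\]
The algorithm then sets $v'_{i,g} = v_{i,g}/v_i(P^i_j)$ whenever $g \in P^i_j$; rearranging, $v_{i,g} = v'_{i,g}\cdot v_i(P^i_j)$. Combining with the inequality above gives the pointwise bound $v_{i,g} \ge v'_{i,g}\cdot \MMS_i(\I)$ for every good $g \in M$ (where $j$ is the unique index with $g \in P^i_j$).

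Now I would sum this inequality over $g \in A_i$ and use additivity of both $v_i$ and $v'_i$:
\[
v_i(A_i) \;=\; \sum_{g \in A_i} v_{i,g} \;\ge\; \sum_{g \in A_i} v'_{i,g}\cdot \MMS_i(\I) \;=\; v'_i(A_i)\cdot \MMS_i(\I),
\]
which is exactly the claimed inequality. Since $i$ was arbitrary, this holds for all $i \in N$.

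There is essentially no obstacle; the proof is just unfolding the scaling performed by \texttt{normalize} and applying $v_i(P^i_j) \ge \MMS_i(\I)$ good by good. The only subtlety worth flagging is that the partition $P^i$ used in the scaling must be a genuine MMS partition of $v_i$ (so that every bundle, not merely the minimum one, has value at least $\MMS_i(\I)$); this is guaranteed by line 2 of the algorithm. One could additionally remark, though it is not needed for this lemma, that the same calculation applied to $A_i = P^i_j$ yields $v'_i(P^i_j) = 1$, justifying the ``normalized'' terminology.
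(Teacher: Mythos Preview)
Your proof is correct. The paper does not actually reprove this lemma; it is quoted from \cite{simple} and used as a black box, so there is no ``paper's own proof'' to compare against. Your one-line unpacking of the scaling in \cref{algo:normalize} together with the bundlewise bound $v_i(P^i_j)\ge \MMS_i(\I)$ is exactly the intended argument, and the side remark that $v'_i(P^i_j)=1$ is a nice sanity check.
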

\Cref{thm:normalize} implies that $\mathtt{normalize}$ is $\alpha$-MMS-preserving,
since if $A$ is an $\alpha$-MMS allocation for the normalized instance $(N, M, \mathcal{V}')$,
then $A$ is also an $\alpha$-MMS allocation for the original instance $(N, M, \mathcal{V})$.
\cite{simple} give some structural property of ordered normalized instances which we repeat here in \cref{upper-13}. For completeness, we repeat its proof in Appendix~\ref{app:mp}.
\begin{restatable}{lemma}{upper}\cite{simple}\label{upper-13}
    Let $([n], [m], \mathcal{V})$ be an ordered and normalized fair division instance.
    For all $k \in [n]$ and agent $i \in [n]$, if $v_i(k) + v_i(2n-k+1) > 1$,
    then $v_i(2n-k+1) \leq 1/3$ and $v_i(k) > 2/3$.
\end{restatable}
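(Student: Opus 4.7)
The plan is to reason directly about agent $i$'s MMS partition $P^i = (P^i_1, \ldots, P^i_n)$, where each bundle has value exactly $1$ by normalization, and then use pigeonhole on how the top-valued goods must distribute across these $n$ bundles.

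First, I would observe that the hypothesis $v_i(k) + v_i(2n-k+1) > 1$ together with the ordering $v_i(k) \geq v_i(2n-k+1)$ gives $v_i(k) > 1/2$. Hence no bundle $P^i_j$ can contain two goods from the top-$k$ block $\{1,\ldots,k\}$, since that would push its value past $1$. So the top $k$ goods occupy $k$ distinct bundles; call these the marked bundles and the remaining $n-k$ ones unmarked.

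Next, I would argue that no marked bundle can contain any good from positions $k+1,\ldots,2n-k+1$: such a bundle already carries a good of value at least $v_i(k)$, and any additional good from this range contributes at least $v_i(2n-k+1)$, so the bundle's value would exceed $v_i(k)+v_i(2n-k+1) > 1$. Consequently, the $2(n-k)+1$ goods in positions $k+1,\ldots,2n-k+1$ all sit inside the $n-k$ unmarked bundles. Assuming $k < n$, pigeonhole forces some unmarked bundle to contain at least three of them, so $3\,v_i(2n-k+1) \le 1$, yielding $v_i(2n-k+1) \le 1/3$ and therefore $v_i(k) > 1 - 1/3 = 2/3$.

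The only subtlety is the boundary $k = n$, which I would dispatch separately: then the top $n+1$ goods must be packed into $n$ bundles, so some bundle contains at least two goods with indices $a < b \le n+1$, giving bundle value at least $v_i(n) + v_i(n+1) = v_i(k)+v_i(2n-k+1) > 1$, contradicting normalization. Hence the hypothesis is never satisfied when $k=n$ and the statement holds vacuously. I do not expect real technical difficulty here; the main thing to get right is the pigeonhole arithmetic $2(n-k)+1$ middle goods into $n-k$ unmarked bundles (which is exactly tight enough to force a bundle of size $3$) and the separate handling of the degenerate $k=n$ case.
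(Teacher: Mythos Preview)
Your proposal is correct and follows essentially the same argument as the paper: both place the top $k$ goods into distinct bundles, exclude the middle block $\{k+1,\ldots,2n-k+1\}$ from those bundles, and then pigeonhole $2(n-k)+1$ goods into $n-k$ bundles to force three in one bundle. The only difference is that you explicitly treat the degenerate case $k=n$ (showing the hypothesis is vacuous there), which the paper's writeup leaves implicit.
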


\subsection{Reduction rules}
Given any instance $\mathcal{I}$, a reduction rule $R(\mathcal{I})$ is a procedure that allocates a subset $S \subseteq M$ of goods to an agent $i$ and outputs the instance $\mathcal{I'} = (N \setminus \{i\}, M \setminus S, \mathcal{V})$. 
\begin{definition}[Valid reductions]
    Let $R$ be a reduction rule and $R(\mathcal{I}) = (N',M',\mathcal{V})$ such that $\{i\} = N \setminus N'$ and $S = M \setminus M'$. Then $R$ is a ``valid $\alpha$-reduction'' if 
    \begin{enumerate}
        \item $v_i(S) \geq \alpha\cdot \mms^{|N|}_{v_i}(M)$, and
        \item for all $j \in N'$, $\mms^{|N|-1}_{v_j}(M') \geq \mms^{|N|}_{v_j}(M)$.
    \end{enumerate}

    Furthermore, a reduction rule $R$ is a ``valid reduction for agent $j \in N'$'', if $\mms^{|N|-1}_{v_j}(M') \geq \mms^{|N|}_{v_j}(M)$ where $N'$ and $M'$ are the set of remaining agents and remaining goods respectively after the reduction. 
\end{definition}
Note that if $R$ is a valid $\alpha$-reduction and an $\alpha$-MMS allocation $A$ exists for $R(\mathcal{I})$, then an $\alpha$-MMS allocation exists for $\mathcal{I}$. Such an allocation can be obtained by allocating $S$ to $i$ and allocating the rest of the goods as they are allocated under $A$.
\begin{lemma}\label{simple-valid}
    Given an instance $\I = (N, M, \mathcal{V})$, let $S \subseteq M$ be such that $v_i(S) \leq \mms_i$ and $|S| \leq 2$. Then allocating $S$ to an arbitrary agent $j \neq i$, is a valid reduction for agent $i$.
\end{lemma}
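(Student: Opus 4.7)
The plan is to verify the single required condition: $\MMS^{|N|-1}_{v_i}(M \setminus S) \ge \MMS^{|N|}_{v_i}(M) = \MMS_i$. To do so I will start from an MMS partition $P^i = (P^i_1, \ldots, P^i_n)$ of agent $i$ on $M$, where by definition $v_i(P^i_k) \ge \MMS_i$ for every $k \in [n]$, and then exhibit a partition of $M \setminus S$ into $n-1$ bundles each of value $\ge \MMS_i$. The argument splits naturally according to how the at-most-two goods of $S$ sit in $P^i$.

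First, suppose all goods of $S$ lie in a single bundle $P^i_k$. Delete $S$ from $P^i_k$ and dump whatever remains of $P^i_k \setminus S$ into any other bundle $P^i_\ell$, $\ell \neq k$. By additivity $v_i(P^i_\ell \cup (P^i_k \setminus S)) \ge v_i(P^i_\ell) \ge \MMS_i$, and the remaining $n-2$ bundles are untouched, so the resulting $n-1$ bundles partition $M \setminus S$ and each has $v_i$-value $\ge \MMS_i$.

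Next, suppose $|S| = 2$ with the two goods in distinct bundles, say $g_1 \in P^i_k$ and $g_2 \in P^i_\ell$ with $k \neq \ell$. Consider merging the two truncated bundles into one, i.e., $B := (P^i_k \setminus \{g_1\}) \cup (P^i_\ell \setminus \{g_2\})$. Using additivity and the hypothesis $v_i(S) \le \MMS_i$:
\[
v_i(B) = v_i(P^i_k) + v_i(P^i_\ell) - v_i(S) \ge 2\MMS_i - \MMS_i = \MMS_i.
\]
Together with the $n-2$ untouched bundles $\{P^i_t : t \neq k, \ell\}$, this gives $n-1$ bundles partitioning $M \setminus S$, each of value at least $\MMS_i$. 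Hence $\MMS^{n-1}_{v_i}(M \setminus S) \ge \MMS_i$, which is exactly the definition of a valid reduction for agent $i$.

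There is no real obstacle here; the only place the hypothesis $v_i(S) \le \MMS_i$ is actually used is the merging step in the second case, and the cardinality bound $|S| \le 2$ is essential because it guarantees that at most two bundles of $P^i$ are touched, so a single merge suffices to absorb the deficit. If $|S|$ could exceed two, one would need to merge more bundles and the single inequality $v_i(S) \le \MMS_i$ would no longer be enough.
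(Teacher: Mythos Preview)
Your proof is correct and follows essentially the same approach as the paper: start from an MMS partition of agent $i$, case-split on whether the (at most two) goods of $S$ fall in one bundle or two, and in the latter case merge the two affected bundles and use $v_i(S) \le \MMS_i$ to bound the merged value. The only cosmetic difference is that in the single-bundle case the paper simply discards that bundle (obtaining a partition of a \emph{subset} of $M\setminus S$, which suffices), whereas you explicitly redistribute the leftover goods into another bundle; both are fine.
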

\begin{proof}
    Let $P = (P_1, P_2, \ldots, P_n)$ be an MMS partition of $M$ for agent $i$. Let $g_1, g_2 \in S$. In case $|S|=1$, $g_1=g_2$. Without loss of generality, we assume $g_1 \in P_1$. If $g_2 \in P_1$, then $(P_2, \ldots, P_n)$ is a partition of a subset of $M \setminus S$ into $n-1$ bundles with minimum value at least $\mms^n_{v_i}(M)$. Therefore, $\mms^{n-1}_{v_i}(M \setminus S) \geq \mms^n_{v_i}(M)$. In case $g_2 \notin P_1$, without loss of generality, let us assume $g_2 \in P_2$. Then $v_i(P_1 \cup P_2 \setminus S) = v_i(P_1) + v_i(P_2) - v_i(S) \geq \mms^n_{v_i}$. Therefore, $(P_1 \cup P_2 \setminus S, P_3, \ldots, P_n)$ is a partition of $M \setminus S$ into $n-1$ bundles with minimum value at least $\mms^n_{v_i}(M)$. Hence also in this case, $\mms^{n-1}_{v_i}(M \setminus S) \geq \mms^n_{v_i}(M)$. Thus, allocating $S$ to an arbitrary agent $j \neq i$, is a valid reduction for agent $i$.
\end{proof}

Now we define four reduction rules that we use in our algorithm. 
\begin{definition} For an ordered instance $\mathcal{I}=(N,M,\mathcal{V})$ and $\alpha > 0$, reduction rules $R_1^{\alpha}$, $R_2^{\alpha}$, $R_3^{\alpha}$ and $R_4^{\alpha}$ are defined as follows.
    \begin{itemize}
        \item \textbf{$R_1^{\alpha}(\mathcal{I}):$} If $v_i(1) \geq \alpha$ for some $i \in N$, allocate $\{1\}$ to agent $i$ and remove $i$ from $N$.
        \item \textbf{$R_2^{\alpha}(\mathcal{I}):$} If $v_i(\{2n-1, 2n, 2n+1\}) \geq \alpha$ for some $i \in N$, allocate $\{2n-1, 2n, 2n+1\}$ to agent $i$ and remove $i$ from $N$. 
        \item \textbf{$R_3^{\alpha}(\mathcal{I}):$} If $v_i(\{3n-2,3n-1,3n,3n+1\}) \geq \alpha$ for some $i \in N$, allocate $\{3n-2, 3n-1, 3n, 3n+1\}$ to agent $i$ and remove $i$ from $N$.
        \item \textbf{$R_4^{\alpha}(\mathcal{I}):$} If $v_i(\{1,2n+1\}) \geq \alpha$ for some $i \in N$, allocate $\{1,2n+1\}$ to agent $i$ and remove $i$ from $N$.
    \end{itemize}
\end{definition} 

We note that $R_1^{\alpha}$, $R_2^{\alpha}$, $R_4^{\alpha}$ in addition to one more rule of allocating $\{n,n+1\}$ to an agent is used in~\cite{garg2021improved, simple}. Our algorithm does not use the rule of allocating $\{n,n+1\}$. Moreover, $R_3^{\alpha}$ (allocating $\{3n-2,3n-1,3n,3n+1\}$) is used in our work and not elsewhere. 
\begin{lemma}\label{valid-rules}
    Given any $\alpha > 0$ and an ordered instance $\I$, $R_1^\alpha$, $R_2^\alpha$, and $R_3^\alpha$ are valid reductions for all the remaining agents.
\end{lemma}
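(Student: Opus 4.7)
The plan is to verify, for each of $R_1^\alpha$, $R_2^\alpha$, and $R_3^\alpha$, and for every remaining agent $j$, the inequality $\MMS_{v_j}^{n-1}(M \setminus S) \geq \MMS_{v_j}^n(M)$, where $S$ denotes the set of goods removed. In each case the strategy is to begin with an arbitrary MMS partition $P_1, \ldots, P_n$ of agent $j$ (so $v_j(P_k) \geq \MMS_{v_j}^n(M)$ for every $k$) and rearrange its restriction to $M \setminus S$ into $n-1$ bundles whose minimum $v_j$-value is still $\geq \MMS_{v_j}^n(M)$.

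For $R_1^\alpha$ the argument is immediate: good $1$ lies in some bundle $P_l$, and the remaining bundles $\{P_k : k \neq l\}$, together with $P_l \setminus \{1\}$ absorbed into any one of them, form a partition of $M \setminus \{1\}$ into $n-1$ bundles, each of value at least $v_j(P_k) \geq \MMS_{v_j}^n(M)$.

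For $R_2^\alpha$ (and analogously $R_3^\alpha$), I use a pigeonhole-plus-swap argument exploiting the ordering. Let $T$ be the set of top $2n+1$ (resp.\ $3n+1$) goods and let $S$ be the set to be allocated, so $S \subseteq T$ with $|S| = 3$ (resp.\ $|S| = 4$). Since $|T| > (|S|-1)n$, pigeonhole supplies a bundle $P_l$ of $j$'s MMS partition that contains at least $|S|$ elements of $T$. The crucial observation is that any good in $(T \cap P_l) \setminus S$ has position at most $\min(S)-1$---at most $2n-2$ for $R_2^\alpha$ and at most $3n-3$ for $R_3^\alpha$---so, because $\I$ is ordered, its $v_j$-value is at least $v_j(s)$ for every $s \in S$. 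Moreover, $|(T \cap P_l) \setminus S| = |T \cap P_l| - |S \cap P_l| \geq |S| - |S \cap P_l| = |S \setminus P_l|$, so one can fix an injection $s \mapsto g_s$ from $S \setminus P_l$ into $(T \cap P_l) \setminus S$. For each $s \in S \setminus P_l$, sitting in a bundle $P_{k(s)}$ with $k(s) \neq l$, replace $s$ by $g_s$; setting $P_k' = (P_k \setminus S) \cup \{g_s : k(s) = k\}$ for $k \neq l$ yields
\[
v_j(P_k') = v_j(P_k) - v_j(S \cap P_k) + \sum_{s \in S \cap P_k} v_j(g_s) \;\geq\; v_j(P_k) \;\geq\; \MMS_{v_j}^n(M),
\]
since $v_j(g_s) \geq v_j(s)$ for each matched pair. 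Absorbing the leftovers $P_l \setminus (S \cup \{g_s : s \in S \setminus P_l\})$ into any one of these $n-1$ bundles only increases its value, and the resulting bundles partition $M \setminus S$ as required.

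The main obstacle is arranging the swap so that two things hold at once: $(T \cap P_l) \setminus S$ must be large enough to cover $S \setminus P_l$, and each of its elements must be large enough (in value) to replace its assigned $s$. Both follow from the single structural fact that $(T \cap P_l) \setminus S$ consists exclusively of goods at positions strictly smaller than $\min(S)$---the first via counting, the second via the ordered property of $\I$. Everything else is bookkeeping.
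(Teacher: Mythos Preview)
Your proof is correct and follows essentially the same pigeonhole-and-swap strategy as the paper. Both locate a bundle $P_l$ containing at least $|S|$ goods among the top $|T|$ items, use those goods as value-dominating replacements for the elements of $S$ sitting in other bundles, and then discard $P_l$ (absorbing leftovers). The paper matches the three goods $g_1<g_2<g_3$ in $P_l\cap[2n+1]$ to $2n-1,2n,2n+1$ by rank, whereas you set up an arbitrary injection from $S\setminus P_l$ into $(T\cap P_l)\setminus S$; your formulation is a bit more explicit about the case $S\cap P_l\neq\emptyset$, but the underlying argument is the same.
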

\begin{proof}
    For a remaining agent $i$, let $P=(P_1, \ldots, P_n)$ be an MMS partition of $M$ for $i$. It suffices to prove that after each of these reduction rules, there exists a partition of the remaining goods for each remaining agent into $n-1$ bundles with a minimum value of $\mms^n_i(M)$ for agent $i$.
    \begin{itemize}
        \item $R^\alpha_1$: Let $1 \in P_k$. Then removing $P_k$ from $P$ results in a partition of a subset of $M \setminus \{1\}$ into $n-1$ bundles of value at least $\mms^n_i(M)$ for agent $i$.
        \item $R^\alpha_2$: By the pigeonhole principle, there exists $k$ such that $|P_k \cap \{1,2, \ldots, 2n+1\}| \geq 3$. Let $g_1, g_2, g_3 \in P_k \cap \{1,2, \ldots, 2n+1\}$ and $g_1 < g_2 < g_3$. Replace $g_1$ with $2n-1$, $g_2$ with $2n$ and $g_3$ with $2n+1$ and remove $P_k$ from $P$. Note that the value of the remaining bundles can only increase. Thus, the result is a partition of a subset of $M \setminus \{2n-1,2n,2n+1\}$ into $n-1$ bundles with a minimum $\mms^n_i(M)$ for agent $i$.
        \item $R^\alpha_3$: The proof is very similar to $R^\alpha_2$ case. 
	   By the pigeonhole principle, there exists $k$ such that $|P_k \cap \{1,2, \ldots, 3n+1\}| \geq 4$. Let $g_1, g_2, g_3, g_4 \in P_k \cap \{1,2, \ldots, 3n+1\}$ and $g_1 < g_2 < g_3 < g_4$. Replace $g_1$ with $3n-2$, $g_2$ with $3n-1$, $g_3$ with $3n$ and $g_4$ with $3n+1$ and remove $P_k$ from $P$. Note that the value of the remaining bundles can only increase. Thus, the result is a partition of a subset of $M \setminus \{3n-2,3n-1,3n,3n+1\}$ into $n-1$ bundles with a minimum value of $\mms^n_i(M)$ for agent $i$. \qedhere
    \end{itemize}
\end{proof}
\begin{proposition}\label{thm:vr-upper-bounds}
    If $\mathcal{I}$ is ordered and for a given $\alpha \geq 0$, none of the rules $R^\alpha_1$, $R^\alpha_2$ or $R^\alpha_3$ is applicable, then
    \begin{enumerate}
        \item for all $k \geq 1$, $v_i(k) < \alpha$, and
        \item for all $k > 2n$, $v_i(k) < \alpha/3$, and
        \item for all $k > 3n$, $v_i(k) < \alpha/4$.
    \end{enumerate}
\end{proposition}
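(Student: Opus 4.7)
The plan is to use the non-applicability of each reduction rule to get an upper bound on the sum of a small block of consecutive (by index) good values, and then combine these upper bounds with the ordering property $v_i(1) \ge v_i(2) \ge \cdots \ge v_i(m)$ to extract a per-item bound by averaging.

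First I would verify that from the hypotheses we directly extract, for every remaining agent $i$:
\begin{align*}
  v_i(1) &< \alpha, \\
  v_i(2n-1) + v_i(2n) + v_i(2n+1) &< \alpha, \\
  v_i(3n-2) + v_i(3n-1) + v_i(3n) + v_i(3n+1) &< \alpha,
\end{align*}
since otherwise the corresponding rule $R_1^\alpha$, $R_2^\alpha$, or $R_3^\alpha$ would fire on $i$.

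For item (1), since the instance is ordered, $v_i(k) \le v_i(1) < \alpha$ for every $k \ge 1$. For item (2), for any $k > 2n$ (i.e.\ $k \ge 2n+1$), the ordering gives $v_i(k) \le v_i(2n+1)$, and the ordering on the triple $\{2n-1,2n,2n+1\}$ yields $3\,v_i(2n+1) \le v_i(2n-1)+v_i(2n)+v_i(2n+1) < \alpha$, hence $v_i(k) \le v_i(2n+1) < \alpha/3$. Item (3) follows by the identical averaging argument applied to the block $\{3n-2,3n-1,3n,3n+1\}$: $4\,v_i(3n+1) \le v_i(3n-2)+v_i(3n-1)+v_i(3n)+v_i(3n+1) < \alpha$, so for $k > 3n$, $v_i(k) \le v_i(3n+1) < \alpha/4$.

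There is no real obstacle here; the proposition is essentially a direct rewriting of the negations of the reduction triggers using the ordering. The only mild subtlety is to state clearly that the bounds on $v_i(\cdot)$ coming from ``$R_j^\alpha$ not applicable'' hold for \emph{every} remaining agent $i$, so the per-agent bounds in (1)--(3) follow uniformly. I would present the proof as three short paragraphs, one per item, each two lines long.
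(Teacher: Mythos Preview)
Your proposal is correct and follows essentially the same argument as the paper: for each item, use the non-applicability of the corresponding rule to bound the sum of the relevant block of goods, then use the ordering and averaging to bound the individual good's value. There is no substantive difference between your proof and the paper's.
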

\begin{proof}
    We prove each case separately.
    \begin{enumerate}
        \item Since $R^\alpha_1$ is not applicable, $v_i(k) \leq v_i(1) < \alpha$ for all agents $i$ and all $k \geq 1$.
        \item Since $R^\alpha_2$ is not applicable, $3v_i(k) \leq 3v_i(2n+1) \leq v_i(2n-1)+v_i(2n)+v_i(2n+1) < \alpha$ for all agents $i$ and all $k>2n$. Therefore, $v_i(k) < \alpha/3$.
        \item Similar to the former case, since $R^\alpha_3$ is not applicable, $4v_i(k) \leq 4v_i(3n+1) \leq v_i(3n-2) + v_i(3n-1)+v_i(3n)+v_i(3n+1) < \alpha$ for all agents $i$ and all $k>3n$. Therefore, $v_i(k) < \alpha/4$. \qedhere
    \end{enumerate} 
\end{proof}

\begin{definition}[$\alpha$-irreducible and $\delta$-ONI]
We call an instance $\mathcal{I}$ $\alpha$-irreducible if none of the rules $R^\alpha_1$, $R^\alpha_2$, $R^\alpha_3$ or $R^\alpha_4$ is applicable. Moreover, we call an instance $\delta$-ONI if it is ordered, normalized, and $(3/4 + \delta)$-irreducible. 
\end{definition}

\section{Technical overview}\label{technical}
Most algorithms for approximating MMS, especially those with a factor of at least $3/4$~\cite{ghodsi2018fair,garg2021improved,simple}, utilize two simple tools: valid reductions and bag filling. Although these tools are easy to use in a candidate algorithm, the novelty of these works is in the analysis, which is challenging. Like previous works, the analysis is the most difficult part of our algorithm based on these tools. Unlike previous works, we also need to use a new reduction rule and initialize bags differently, which are counterintuitive.

First, we discuss the algorithm given by \cite{simple}, which is a slight modification of the algorithm in \cite{garg2021improved}. For $\alpha \leq 3/4$, \cite{simple} showed how to obtain an ordered normalized $\alpha$-irreducible instance from any arbitrary instance such that the transformation is $\alpha$-MMS preserving.\footnote{\cite{simple} uses $R^\alpha_1$, $R^\alpha_2$, $R^\alpha_4$ and one more rule as reduction rules. However, all that matters in their proof is that the applied reduction rules are valid $\alpha$-reduction rules.} That is, given an $\alpha$-MMS allocation for the resulting ordered normalized irreducible instance, one can obtain an $\alpha$-MMS allocation for the original instance. In the first phase of their algorithm, they obtain an ordered normalized $\alpha$-irreducible instance $\hat{\I}$ and in the second phase, they compute an $\alpha$-MMS allocation for $\hat{\I}$. Let $\hat{\I} = ([n], [m], \mathcal{V})$. Without loss of generality, we can assume that $m \geq 2n$ (\cref{2n-goods}).

In the second phase, they initialize $n$ bags with the first $2n$ goods as follows. 
\begin{equation}
    \label{eq:B_i}
    B_k := \{k , 2n-k+1\} \text{ for } k\in [n]
\end{equation}
See Figure \ref{B-bags} for a better intuition.
\begin{figure}[t]\centering
\begin{tikzpicture}
[scale=1,
 good/.style={circle, draw=black, thick, minimum size=30pt},
]

\draw[black, very thick] (-0.4-0.25,0.8) rectangle (0.4+0.25,3.4);

\node[good]      at (0,2.75)      {$\scriptstyle{2n}$};
\node[good]      at (0,1.5)      {$\scriptstyle{1}$};

\node at (0, 0.5) {$\scriptstyle{B_1}$};

\filldraw[color=black!60, fill=black!5, thick](1.6, 2) circle (0.02);
\filldraw[color=black!60, fill=black!5, thick](1.7, 2) circle (0.02);
\filldraw[color=black!60, fill=black!5, thick](1.8, 2) circle (0.02);

\draw[black, very thick] (-1.4-0.25 +4.5,0.8) rectangle (-0.6+0.25 +4.5,3.4);

\node[good, scale=0.74]      at (0 +3.5,2.75)      {$\scriptstyle{2n-k+1}$};
\node[good]      at (0 +3.5,1.5)      {$\scriptstyle{k}$};

\node at (0 +3.5, 0.5) {$\scriptstyle{B_k}$};

\filldraw[color=black!60, fill=black!5, thick](0.1+5, 2) circle (0.02);
\filldraw[color=black!60, fill=black!5, thick](0.2+5, 2) circle (0.02);
\filldraw[color=black!60, fill=black!5, thick](0.3+5, 2) circle (0.02);

\draw[black, very thick] (-0.4-5.25 +12,0.8) rectangle (0.4+0.25-5 +12,3.4);

\node[good]      at (0 +7,2.75)      {$\scriptstyle{n+1}$};
\node[good]      at (0 +7,1.5)      {$\scriptstyle{n}$};

\node at (0 +7, 0.5) {$\scriptstyle{B_n}$};

\end{tikzpicture}
\caption{Configuration of Bags $B_1, B_2, \ldots, B_n$}
\label{B-bags}
\end{figure}
As long as an agent $i$ values a bag $B_k$ at least $\alpha$, allocate $B_k$ to $i$ and remove $B_k$ and $i$. Then, as long as an unallocated bag exists (and thus a remaining agent), pick an arbitrary remaining bag $B_k$ and add unassigned goods $g>2n$ until some remaining agent $i$ values it at least $\alpha$. Then, allocate $B_k$ to $i$ and continue. The second phase is called the bag-filling phase. Algorithm \ref{algo:bagFill} shows the pseudocode of the bag-filling phase of \cite{simple}.

\begin{algorithm}[!t]
\caption{$\mathtt{bagFill}(\I, \alpha)$}
\label{algo:bagFill}
\textbf{Input:} Ordered normalized $\alpha$-irreducible instance $\I = ([n], [m], \mathcal{V})$ and approximation factor $\alpha$ \\
\textbf{Output:} (Partial) allocation $A = \langle A_1, \ldots, A_n \rangle$.
\begin{algorithmic}[1]
\For{$k \in [n]$}
    \State $B_k = \{k, 2n+1-k\}$.
\EndFor
\State $U_G = [m] \setminus [2n]$  \Comment{unassigned goods}
\State $U_A = [n]$  \Comment{unsatisfied agents}
\State $U_B = [n]$  \Comment{unassigned bags}
\While{$U_A \neq \emptyset$}
    \If{$\exists i \in U_A$, $\exists k \in U_B$, such that $v_i(B_k) \ge \alpha$}
        \State $A_i = B_k$
        \State $U_A = U_A \setminus \{i\}$
        \State $U_B = U_B \setminus \{k\}$
    \Else
        \State $g$ = arbitrary good in $U_G$
        \State $k$ = arbitrary bag in $U_B$
        \State $B_k = B_k \cup \{g\}$.
        \State $U_G = U_G \setminus \{g\}$
    \EndIf
\EndWhile
\State \Return $\langle A_1, \ldots, A_n \rangle$
\end{algorithmic}
\end{algorithm}

To prove that the algorithm's output is $\alpha$-MMS, it suffices to prove that we never run out of goods in the bag-filling phase or, equivalently, all agents receive a bag at some point during the algorithm. To prove this, they categorize agents into two groups. Let $N^1 = \{i \in N \mid \forall k \in [n]: v_i(B_k) \leq 1 \}$ and $N^2 = N \setminus N^1 = \{i \in N \mid \exists k \in [n]: v_i(B_k) > 1 \}$. We note that the sets $N^1$ and $N^2$ are defined based on the instance $\hat{\mathcal{I}}$ at the beginning of phase $2$, and they do not change throughout the algorithm.

\paragraph{\boldmath Agents in $N^1$} Proving that all agents in $N^1$ receive a bag is easy. Using the fact that at the beginning of Phase $2$, the instance is ordered, normalized, and $\alpha$-irreducible, they prove $v_i(g) < 1/4$ for all $i \in N$ and all $g \in M \setminus [2n]$. This helps to prove that any bag which is not assigned to an agent $i \in N^1$ while $i$ was available has a value at most $1$ to $i$. Therefore, since $v_i(M)=n$, running out of goods is impossible before agent $i$ receives a bag. 

\paragraph{\boldmath Agents in $N^2$} The main bulk and difficulty of the analysis of \cite{garg2021improved} is to prove that all agents in $N^2$ receive a bag. By normalizing the instance, \cite{simple} managed to simplify this argument significantly. \cite{simple} prove $v_i(g) < 1/12$ for all $i \in N^2$ and all $g \in M \setminus [2n]$. This helps to bound the value of the bags that receive some goods in the bag-filling phase by $5/6$ for all available $i \in N^2$. Again, if the number of such bags is high enough, it is easy to prove that the algorithm does not run out of goods in the bag-filling phase. The difficult case is when the total value of the bags which are of value more than $1$ to some agent $i \in N^2$ is large. Roughly speaking, in this case, it seems that the bags which receive goods in the bag-filling phase and their values are bounded by $5/6$ cannot compensate for the large value of the bags that do not require any goods in the bag-filling phase. This is where the normalized property of $\hat{\I}$ simplifies the matter significantly. Intuitively, there are many goods with a high value that happened to be paired in the same bag in the bag initialization phase. Since the instance is normalized, we know that in the MMS partition of $i$, these goods cannot be in the same bag. This implies that many bags in the MMS partition of $i$ have at most $1$ good in common with the goods in $[2n]$. This means that the value of the remaining goods (the goods in $M \setminus [2n]$) must be large since they fill the bags in the MMS partition such that the value of each bag equals $1$. Hence, enough goods remain in $M \setminus [2n]$ to fill the bags. 

There are two main obstacles to generalizing this algorithm to obtain $\alpha$-MMS allocations when $\alpha>3/4$. The first obstacle lies in the first phase of the algorithm. $R^\alpha_4$ is a valid $\alpha$-reduction when $\alpha \leq 3/4$ and $R^\alpha_1$ and $R^\alpha_2$ are not applicable. This no longer holds when $\alpha > 3/4$. In this case, the MMS value of the agents can indeed decrease after applying $R^\alpha_4$. When $\alpha = 3/4 + \mathcal{O}(1/n)$, \cite{garg2021improved} and \cite{simple} managed to resolve this issue by adding some dummy goods after each iteration of $R^\alpha_4$ and proving that the total value of these dummy goods is negligible. Essentially, since we only need to guarantee the last agent a value of $\alpha$, the idea is to divide the excess $1-\alpha$ among all agents and improve the factor. However, this can only improve the factor by at most $\mathcal{O}(1/n)$. If $\alpha>3/4+\epsilon$ for a constant $\epsilon>0$, the same technique does not work since the value of dummy goods cannot be reasonably bounded. We resolve this issue in Section \ref{hard-sec}. Unlike the previous works, we allow the MMS values of the remaining agents to drop. Although the MMS values of the agents can drop, we show that they do not drop by more than a multiplicative factor of $(1-4\epsilon)$ after an arbitrary number of applications of $R^{3/4+\epsilon}_k$ for $k \in [4]$. Basically, while for $\alpha \leq 3/4$, one can get $\alpha$-irreduciblity for free (i.e., without losing any approximation factor on MMS), for $\alpha = 3/4+\epsilon$ and $\epsilon>0$, we lose an approximation factor of $(1-4\epsilon)$.

The second obstacle is that for goods in $M \setminus [2n]$, we do not get the neat bound of $v_i(g) < 1/4$ for $i \in N$. Instead, we get this bound with an additive factor of $\mathcal{O(\epsilon)}$. This even complicates the analysis for agents in $N^1$, which was trivial in~\cite{simple}. Furthermore, a tight example in~\cite{deuermeyer1982scheduling, babaioff2021fair, simple} shows that this algorithm cannot do better than $3/4+\mathcal{O}(1/n)$ and all the agents are in $N^1$ in this example. To overcome this hurdle, we further categorize the agents in $N^1$. One group consists of the agents with a reasonable bound on the value of good $2n+1$, and the other agents, the \emph{problematic} ones, do not. 

We break the problem into two cases depending on the number of these problematic agents. In Section \ref{small-N11}, we consider the case when the number of problematic agents is not too large. In this case, we work with a slight modification of the algorithm in \cite{simple}, and using an involved analysis, we show that it gives a $(3/4+\epsilon)$-MMS allocation. Otherwise, we introduce a new reduction rule for the first time that allocates the two most valuable goods to an agent. Although allocating these goods seem counterintuitive, surprisingly, that seems to be the only way to obtain a $(3/4+\epsilon)$-MMS allocation for the tight example in \cite{deuermeyer1982scheduling,babaioff2021fair, simple}. In Section \ref{large-N11}, we give another algorithm to handle the case where the number of problematic agents is too large. In this case, we first apply the reduction rules (including the new one), and then initialize the bags with three goods, unlike the previous works. Precisely, we set $C_k := \{k, 2n-k+1, 2n+k\}$ and then do bag-filling.

To summarize, the structure of the rest of the paper is as follows. In Section \ref{hard-sec}, given any instance $\I=(N,M,\mathcal{V})$ and $\epsilon > 0$, for $\delta \geq 4\epsilon/(1-4\epsilon)$ we obtain an ordered normalized $(3/4+\delta)$-irreducible ($\delta$-ONI) instance $\I'=(N',M',\mathcal{V}')$ such that $N' \subseteq N$, $M' \subseteq M$ and all agents in $N \setminus N'$ receive a bag of value at least $(3/4 +\epsilon) \mms_i(\I)$. Moreover, we prove from any $(3/4+\delta)$-MMS allocation for $\I'$, one can obtain a $\min\left(3/4+\epsilon, (3/4 + \delta) (1 - 4\epsilon) \right)$-MMS allocation for $\I$. 

In Section \ref{easy-sec}, we prove a $(3/4+\delta)$-\mms~allocation exists for all $\delta$-ONI instances for any $\delta \leq \lbdelta$.  Therefore, we prove that for $4\epsilon/(1-4\epsilon) \leq \delta \leq \lbdelta$, a $\min\left(3/4+\epsilon, (3/4 + \delta) (1 - 4\epsilon) \right)$-MMS exists for all instances. Setting $\delta = \lbdelta$ and $\epsilon = \delta/(4(\delta+1)) = \lbepsilon$, there always exists a $(3/4+\lbepsilon)$-\mms~allocation.

\section{\boldmath Reduction to $\delta$-ONI instances}\label{hard-sec}
In this section, for any $\epsilon>0$ and $\delta \geq 4\epsilon/(1-4\epsilon)$ we show how to obtain a $\delta$-ONI instance $\I'$ from any arbitrary instance $\I$, such that from any $\alpha$-MMS allocation for $\I'$, one can obtain a $\min\left(3/4+\epsilon, (1-4\epsilon) \alpha \right)$-MMS allocation for $\I$. To obtain such an allocation, first, we obtain a $(3/4+\epsilon)$-irreducible instance, and we prove that the MMS value of no remaining agent drops by more than a multiplicative factor of $(1-4\epsilon)$. Then, we normalize and order the resulting instance, giving us a $\delta$-ONI instance (for $\delta \geq 4\epsilon/(1-4\epsilon)$). In the rest of this section, by $R_k$ we mean $R^{(3/4+\epsilon)}_k$ for $k \in [4]$.

We start with transforming the instance into an ordered one using the $\mathtt{order}$ algorithm. Then we scale the valuations such that for all $i \in N$, $\mms_i = 1$. Then, as long as one of the reduction rules $R_1$, $R_2$, $R_3$, or $R_4$ is applicable, we apply $R_k$ for the smallest possible $k$. \cref{algo:reduce} shows the pseudocode of this procedure.

\begin{algorithm}[!t]
\caption{$\mathtt{reduce}((N, M, \mathcal{V}), \epsilon)$}
\label{algo:reduce}
\begin{algorithmic}[1]
\State $\I \leftarrow \mathtt{order}(N,M,\mathcal{V})$
\For{$i \in N$}
    \State $v_{i,g} \leftarrow v_{i,g}/\mms_i, \forall g\in [m]$
\EndFor
\While{$R^{(3/4+\epsilon)}_1$ or $R^{(3/4+\epsilon)}_2$ or $R^{(3/4+\epsilon)}_3$ or $R^{(3/4+\epsilon)}_4$ is applicable}
    \State $\I \leftarrow R^{(3/4+\epsilon)}_k(\I)$ for smallest possible $k$
\EndWhile
\State \Return $\I$.
\end{algorithmic}
\end{algorithm}
In this section, we prove the following two theorems. 
\begin{restatable}{theorem}{theoremOne}\label{thm:1}
    Given an instance $\I = (N, M, \mathcal{V})$ and $\epsilon \geq 0$, let $\I' = (N', M', \mathcal{V}') = \mathtt{reduce}(\I, \epsilon)$. For all agents $i \in N'$, $\mms_i(\I') \geq 1-4\epsilon$.
\end{restatable}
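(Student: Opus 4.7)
The plan is to track $\mms_j$ of each remaining agent $j$ through the while loop of $\mathtt{reduce}$. Immediately after $\mathtt{order}$ and the scaling step, $\mms_j = 1$ for every $j$. By \cref{valid-rules}, each of $R_1^{(3/4+\epsilon)}$, $R_2^{(3/4+\epsilon)}$, and $R_3^{(3/4+\epsilon)}$ is a valid reduction for every remaining agent and therefore preserves $\mms_j$. Hence every drop in $\mms_j$ comes from applications of $R_4^{(3/4+\epsilon)}$, and the task reduces to bounding the cumulative loss across all such applications.

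At any iteration where $R_4^{(3/4+\epsilon)}$ fires, the smallest-$k$-first rule forces $R_1, R_2, R_3$ to be inapplicable at that moment. \cref{thm:vr-upper-bounds} with $\alpha = 3/4+\epsilon$ then gives, for every remaining agent $j$, $v_j(1) < 3/4 + \epsilon$ and $v_j(2n+1) < 1/4 + \epsilon/3$, so $v_j(\{1, 2n+1\}) < 1 + 4\epsilon/3$. Given any MMS partition $P^j = (P^j_1, \ldots, P^j_n)$ of the current instance for $j$ with $v_j(P^j_k) \geq \mms_j^{\text{cur}}$ for all $k$, locate $1 \in P^j_a$ and $2n+1 \in P^j_b$. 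If $a = b$, deleting the bundle $P^j_a$ exhibits an $(n-1)$-partition of a subset of $M \setminus \{1, 2n+1\}$ whose minimum value is still at least $\mms_j^{\text{cur}}$, so the MMS is unchanged. If $a \neq b$, merging $P^j_a$ and $P^j_b$ and removing $\{1, 2n+1\}$ yields a bundle of value at least $2\mms_j^{\text{cur}} - v_j(1) - v_j(2n+1)$, while all other bundles are intact.

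The principal obstacle is that a naive inductive propagation of the merge bound, namely $V_t \geq 2 V_{t-1} - (1 + 4\epsilon/3)$, does not remain above $1 - 4\epsilon$ after even a handful of merging iterations. To avoid this blow-up, I would not propagate the per-step bound dynamically; instead, I would exhibit a single witness partition for the final instance. Fix a normalized MMS partition $P^j = (P^j_1, \ldots, P^j_{n_0})$ of the original scaled instance, so every $v_j(P^j_k) = 1$ and $v_j(M) = n_0$. Track which goods of each $P^j_k$ are removed across all reductions of $\mathtt{reduce}$: each application of $R_1^{(3/4+\epsilon)}$ removes one good of value $< 3/4+\epsilon$; each application of $R_2^{(3/4+\epsilon)}$ or $R_3^{(3/4+\epsilon)}$ removes goods of combined value $< 3/4+\epsilon$; and each application of $R_4^{(3/4+\epsilon)}$ removes a pair of goods of combined value $< 1 + 4\epsilon/3$.

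Finally, I would group the $n_0$ original bundles into $|N'|$ super-bundles so that, after the removed goods are subtracted, every super-bundle retains value at least $1 - 4\epsilon$ for $j$. The key accounting is that, by the total-value conservation $v_j(M) = n_0$, pairing each ``damaged'' original bundle (one that has lost significant value to removals) with enough ``surplus'' bundles (untouched ones still worth $1$) compensates the deficit: the total damage across all reductions is bounded by the per-rule bounds above, the total surplus is $|N'|$ minus the total damage, and a careful matching that also respects which original bundles were merged at each $R_4$ step yields super-bundles of value $\geq 1 - 4\epsilon$. Exhibiting this partition of $M'$ into $|N'|$ bundles certifies $\mms_i(\I') \geq 1 - 4\epsilon$, proving the theorem; the main technical work, and the place where the $4\epsilon$ slack is precisely spent, is showing that the matching is always realizable under the irreducibility-driven bounds on removed values.
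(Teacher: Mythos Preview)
Your outline has the right high-level shape (fix an original MMS partition for the surviving agent, track removals, regroup into $|N'|$ super-bundles), but it contains a concrete error and leaves the crucial step unspecified.

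The concrete error is the claim that ``each application of $R_2^{(3/4+\epsilon)}$ or $R_3^{(3/4+\epsilon)}$ removes goods of combined value $< 3/4+\epsilon$.'' This is false: $R_2$ removes $\{2n-1,2n,2n+1\}$, and nothing bounds $v_j$ of that triple by $3/4+\epsilon$ (the rule fires because \emph{some} agent values it at least $3/4+\epsilon$; agent $j$ may value it much more). After the $R_1$ phase each single good satisfies $v_j(g)<3/4+\epsilon$, so the triple can be worth nearly $9/4$. With this bound broken, your ``damage versus surplus'' bookkeeping on the original partition does not close, and there is no way to absorb the $R_2/R_3$ removals by value accounting alone. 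The paper handles $R_2$ and $R_3$ purely combinatorially: it shows that the specific goods removed by any $R_2$ (resp.\ $R_3$) at some step still occupy positions $\geq 2n'-1,2n',2n'+1$ (resp.\ $\geq 3n'-2,\ldots,3n'+1$) at any later step with $n'\le n$ agents (\cref{two-removal-cor}), so by \cref{redund} those removals remain valid reductions whenever they are performed. This lets the paper reorder all removals so that every $R_4$ removal happens first, and then $R_2,R_3$ come afterward at no cost to $\mms_i$.

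The second gap is that your final ``careful matching'' paragraph is exactly where the theorem lives, and you have not indicated how the $4\epsilon$ arises. The paper's \cref{long-lemma} does this work: it classifies the $R_4$-removed goods into heavy ($v_i<3/4+\epsilon$) and light ($v_i<1/4+\epsilon/3$), lets $k$ be the number of original MMS bundles touched by these goods, and proves by induction on $k$ that the touched bundles can be regrouped into $k-r_4$ bundles of value $\geq 1-4\epsilon$. The induction is not routine: the base and the boundary case $k=r_4+1$ require separate verification for $k\in\{2,3,4\}$, and it is precisely the $k=4$ computation $3(1-(3/4+\epsilon))+(1-3(1/4+\epsilon/3))=1-4\epsilon$ that makes the constant tight. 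Your global surplus/deficit argument does not obviously deliver per-super-bundle guarantees, and without the heavy/light split and the small-$k$ checks it is unclear how you would ever pin down $4\epsilon$ rather than some looser constant.
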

\begin{restatable}{theorem}{theoremTwo}\label{thm:2}
    Given an instance $\I$ and $\epsilon \geq 0$, let $\hat{\I} = \mathtt{order}(\mathtt{normalize}(\mathtt{reduce}(\I, \epsilon)))$. Then $\hat{\I}$ is ordered, normalized and $(\frac{3}{4}+\frac{4\epsilon}{1-4\epsilon})$-irreducible ($\frac{4\epsilon}{1-4\epsilon}$-ONI). Furthermore, from any $\alpha$-MMS allocation of $\hat{\I}$ one can obtain a $\min(3/4+\epsilon, (1-4\epsilon)\alpha)$-MMS allocation of $\I$.  
\end{restatable}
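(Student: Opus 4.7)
Setting $\I^{\text{red}} := \mathtt{reduce}(\I, \epsilon)$ and $\I^{\text{norm}} := \mathtt{normalize}(\I^{\text{red}})$, so that $\hat{\I} = \mathtt{order}(\I^{\text{norm}})$, the plan is to verify in turn: (i) $\hat{\I}$ is ordered and normalized, (ii) $\hat{\I}$ is $(3/4 + 4\epsilon/(1-4\epsilon))$-irreducible, and (iii) any $\alpha$-MMS allocation of $\hat{\I}$ lifts back to a $\min(3/4 + \epsilon, (1-4\epsilon)\alpha)$-MMS allocation of $\I$. Claim (i) is routine: orderedness comes from the outer $\mathtt{order}$, and because $\mathtt{order}$ only permutes each agent's value list (leaving the multiset of values unchanged), the MMS value of each agent is preserved at $1$, and applying the per-agent sorting bijection to the MMS partition of $\I^{\text{norm}}$ yields an MMS partition of $\hat{\I}$ in which every bundle still sums to exactly $1$ under $\hat{v}_i$.

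The heart of the proof, and the main obstacle, is claim (ii). The difficulty is that after the outer $\mathtt{order}$ re-sorts values, the good at position $k$ in $\hat{\I}$ is in general different from the good at position $k$ in $\I^{\text{red}}$, so the $R_k$-bounds for $\I^{\text{red}}$ do not transfer term-by-term. To circumvent this, I will apply \cref{thm:1} to conclude $\mms_i(\I^{\text{red}}) \geq 1 - 4\epsilon$ for every remaining agent. Because $\mathtt{normalize}$ divides each good's value by the total value of the MMS bundle containing it (a quantity $\geq \mms_i(\I^{\text{red}}) \geq 1-4\epsilon$), the normalized values satisfy the pointwise bound $w_i(g) \leq u_i(g)/(1-4\epsilon)$, where $u_i$ and $w_i$ are the valuations in $\I^{\text{red}}$ and $\I^{\text{norm}}$. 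An elementary monotone-dominance lemma (pointwise domination of nonnegative sequences implies domination of their sorted $k$-th order statistics) then upgrades this to $\hat{v}_i(k) \leq u_i(k)/(1-4\epsilon)$ for every position $k$, using that $\I^{\text{red}}$ is already ordered. Summing this inequality over the indices in the $R_1, R_2, R_3, R_4$ tests and combining with the $(3/4+\epsilon)$-irreducibility of $\I^{\text{red}}$ bounds every $R_k$-sum in $\hat{\I}$ strictly below $(3/4+\epsilon)/(1-4\epsilon)$, and the algebraic identity $(3/4+\epsilon)/(1-4\epsilon) = 3/4 + 4\epsilon/(1-4\epsilon)$ closes the case.

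For claim (iii), given an $\alpha$-MMS allocation $A$ of $\hat{\I}$, first invoke \cref{order-preserves} on the outer $\mathtt{order}$ to transport it to an $\alpha$-MMS allocation $A'$ of $\I^{\text{norm}}$. Then \cref{thm:normalize} yields $u_i(A'_i) \geq v^{\text{norm}}_i(A'_i)\cdot \mms_i(\I^{\text{red}}) \geq \alpha(1-4\epsilon)$ in the scaled units of $\I^{\text{red}}$ (using $\mms_i(\I^{\text{norm}})=1$), for every agent remaining after $\mathtt{reduce}$. Agents removed inside $\mathtt{reduce}$ by some $R_k$ were already allocated a bag of scaled value $\geq 3/4+\epsilon$. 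Undoing the scaling step of $\mathtt{reduce}$ (multiplying all values by $\mms_i(\I)$) and applying \cref{order-preserves} once more to the initial $\mathtt{order}$ inside $\mathtt{reduce}$ produces an allocation of $\I$ in which every agent receives at least $\min(3/4+\epsilon,(1-4\epsilon)\alpha)\cdot \mms_i(\I)$, as desired.
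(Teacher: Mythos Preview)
Your proposal is correct and follows essentially the same approach as the paper's proof: both use \cref{thm:1} to bound the blow-up from $\mathtt{normalize}$ by $1/(1-4\epsilon)$, deduce $(3/4+4\epsilon/(1-4\epsilon))$-irreducibility of $\hat{\I}$ from the $(3/4+\epsilon)$-irreducibility of $\I^{\text{red}}$, and then lift an $\alpha$-MMS allocation back through \cref{order-preserves} and \cref{thm:normalize}. You are in fact slightly more careful than the paper in two places: you make explicit the elementary order-statistics fact that pointwise domination $w_i(g)\le u_i(g)/(1-4\epsilon)$ persists after sorting (the paper states the conclusion without isolating this step), and you explicitly undo the initial $\mathtt{order}$ inside $\mathtt{reduce}$ via a second application of \cref{order-preserves}, which the paper leaves implicit.
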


Note that once $R_1$ is not applicable, we have $v_i(1) < 3/4 + \epsilon$ for all remaining agents $i$. Since we never increase the values, $R_1$ can no longer apply. So $\mathtt{reduce}(\I, \epsilon)$ first applies $R_1$ as long as it is applicable and then applies the rest of the reduction rules.
Since $R_1$ is a valid reduction rule for all the remaining agents $i$ by \cref{valid-rules}, $\mms_i \geq 1$ after applications of $R_1$. So to prove \cref{thm:1} without loss of generality, we assume $R_1$ is not applicable on $\I = ([n], M, \mathcal{V})$. Let $\I' = (N',M', \mathcal{V}) = \mathtt{reduce}(\I, \epsilon)$. For the rest of this section, we fix agent $i = N'$. Let $P = (P_1, P_2, \ldots, P_n)$ be the initial MMS partition of $i$ (in $\I$). We construct a partition $Q = (Q_1, Q_2, \ldots, Q_{|N'|})$ of $M'$ such that $v_i(Q_j) \geq 1-4\epsilon$ for all $j \in [|N'|]$. 

Let $G_2$, $G_3$, and $G_4$ be the set of goods removed by applications of $R_2$, $R_3$, and $R_4$, respectively. Also, let $r_2 = |G_2|/3$, $r_3 = |G_3|/4$, and $r_4 = |G_4|/2$ be the number of times each rule is applied, respectively. Note that in the end, all that matters is that we construct a partition $Q$ of $M \setminus (G_2 \cup G_3 \cup G_4)$ into $n-(r_2+r_3+r_4)$ bundles of value at least $1-4\epsilon$ for $i$. For this sake, it does not matter in which order the goods are removed. Therefore, without loss of generality, we assume all the goods in $G_4$ are removed first, and then the goods in $G_2$ and $G_3$ are removed in their original order. Note that we are not applying the reduction rules in a different order. We are removing the same goods that would be removed by applying the reduction rules in their original order. Only for the sake of our analysis, we remove these goods in a different order. For better intuition, consider the following example. Assume $\mathtt{reduce}(\I, \epsilon)$ first applies $R_2$ removing $\{a_1, a_2, a_3\}$, then $R_4$ removing $\{b_1, b_2\}$, then another $R_2$ removing $\{c_1, c_2, c_3\}$ and then $R_3$ removing $\{d_1, d_2, d_3, d_4\}$. Without loss of generality we can assume that first $\{b_1, b_2\}$ is removed, then $\{a_1, a_2, a_3\}$, then $\{c_1, c_2, c_3\}$ and then $\{d_1, d_2, d_3, d_4\}$.

We know that when there are $n$ agents, removing $\{2n-1,2n,2n+1\}$ (or $\{3n-2,3n-1,3n,3n+1\}$) and an agent is a valid reduction for $i$ by \cref{valid-rules}. With the same argument, it is not difficult to see that removing $\{g_1, g_2, g_3\}$ where $g_1 \geq 2n-1$, $g_2 \geq 2n$ and $g_3 \geq 2n+1$ (or $\{g_1,g_2,g_3,g_4\}$ where $g_1 \geq 3n-2$, $g_2 \geq 3n-1$, $g_3 \geq 3n$ and $g_4 \geq 3n+1$) and an agent is also a valid reduction for $i$. For completeness, we prove this in \cref{redund}.
\begin{lemma}\label{redund}
    Let $\I = (N, M, \mathcal{V})$ be an ordered instance and $i \in N$.
    \begin{enumerate}
        \item Let $g_1 \geq 2n-1$, $g_2 \geq 2n$ and $g_3 \geq 2n+1$. Then $\mms^{n-1}_{v_i}(M \setminus \{g_1,g_2,g_3\}) \geq \mms^n_{v_i}(M)$.
        \item Let $g_1 \geq 3n-2$, $g_2 \geq 3n-1$, $g_3 \geq 3n$ and $g_4 \geq 3n+1$. Then $\mms^{n-1}_{v_i}(M \setminus \{g_1,g_2,g_3,g_4\}) \geq \mms^n_{v_i}(M)$.
    \end{enumerate}
\end{lemma}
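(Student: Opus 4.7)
The plan is to mimic the proofs of \cref{valid-rules} for the rules $R^\alpha_2$ and $R^\alpha_3$, modifying them to handle the relaxed hypothesis that $g_j$ need only satisfy $g_j \ge 2n-2+j$ (part 1) or $g_j \ge 3n-3+j$ (part 2). I focus on part 1; part 2 is entirely analogous with four goods in place of three.

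Fix an MMS partition $P=(P_1,\ldots,P_n)$ of $M$ for agent $i$. By pigeonhole, some $k$ satisfies $|P_k\cap[2n+1]|\ge 3$; let $h_1<h_2<h_3$ be the three smallest elements of $P_k\cap[2n+1]$, so $h_j\le 2n-2+j\le g_j$, which gives $v_i(h_j)\ge v_i(g_j)$ since the instance is ordered. I would then construct a partition $P'=(P'_1,\ldots,P'_n)$ of $M$ with $\{g_1,g_2,g_3\}\subseteq P'_k$ and $v_i(P'_\ell)\ge v_i(P_\ell)\ge \mms^n_{v_i}(M)$ for every $\ell\ne k$. Given such a $P'$, discarding $P'_k$ and dumping its leftover goods $P'_k\setminus\{g_1,g_2,g_3\}$ into any surviving bundle yields a partition of $M\setminus\{g_1,g_2,g_3\}$ into $n-1$ bundles of value at least $\mms^n_{v_i}(M)$, as required.

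To build $P'$, I perform swaps: for each $g_j\notin P_k$, I pair it with some $h\in P_k\cap[2n+1]\setminus\{g_1,g_2,g_3\}$ satisfying $h\le g_j$, and swap their bundles. Each swap moves $g_j$ into $P_k$ and places $h$ into the bundle that previously held $g_j$; since $v_i(h)\ge v_i(g_j)$, that bundle's value weakly increases, while $v_i(P_k)$ weakly decreases (which is harmless, as $P_k$ will be discarded). The main obstacle is showing that the required matching always exists. A simple counting argument gives $|P_k\cap[2n+1]\setminus\{g_1,g_2,g_3\}|\ge|\{g_1,g_2,g_3\}\setminus P_k|$, supplying enough candidates; Hall's condition for the matching then reduces to the subcase where $g^*:=\max_{j\in S}g_j\le 2n$, in which every $j\in S$ must lie in $\{1,2\}$ (from $g_j\ge 2n-2+j$), and $h_1,h_2$---each at most $2n$ and, by a short check, not in $\{g_1,g_2,g_3\}$ since $g_1,g_2\notin P_k$ while $g_3\ge 2n+1$---provide enough small-index candidates. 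Part 2 proceeds identically using $h_j\le 3n-3+j\le g_j$ for $j\in[4]$.
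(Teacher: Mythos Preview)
Your proof follows the same approach as the paper: find a bundle $P_k$ with at least three elements in $[2n+1]$ via pigeonhole, swap the $g_j$'s into $P_k$ against suitable $h$'s of no larger index, and discard $P_k$. Your matching/Hall's-condition layer is extra care for the collision case $\{h_1,h_2,h_3\}\cap\{g_1,g_2,g_3\}\neq\emptyset$ that the paper's terse ``replace $h_j$ with $g_j$'' glosses over; the argument is correct (your one-line justification ``$g_1,g_2\notin P_k$'' literally applies only when $|S|=2$, but the $|S|\le 1$ cases follow by the same reasoning).
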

\begin{proof}
    \begin{enumerate}
        \item By the pigeonhole principle, there exists $k$ such that $|P_k \cap \{1,2, \ldots, 2n+1\}| \geq 3$. Let $h_1, h_2, h_3 \in P_k \cap \{1,2, \ldots, 2n+1\}$ and $h_1 < h_2 < h_3$. Replace $h_1$ with $g_1$, $h_2$ with $g_2$ and $h_3$ with $g_3$ and remove $P_k$ from $P$. Note that the value of the remaining bundles can only increase. Thus, the result is a partition of a subset of $M \setminus \{g_1,g_2,g_3\}$ into $n-1$ bundles with a minimum value of $\mms^n_i(M)$ for agent $i$.
        \item By the pigeonhole principle, there exists $k$ such that $|P_k \cap \{1,2, \ldots, 3n+1\}| \geq 4$. Let $h_1, h_2, h_3, h_4 \in P_k \cap \{1,2, \ldots, 3n+1\}$ and $h_1 < h_2 < h_3 < h_4$. Replace $h_1$ with $g_1$, $h_2$ with $g_2$, $h_3$ with $g_3$ and $h_4$ with $g_4$ and remove $P_k$ from $P$. Note that the value of the remaining bundles can only increase. Thus, the result is a partition of a subset of $M \setminus \{g_1,g_2,g_3,g_4\}$ into $n-1$ bundles with a minimum value of $\mms^n_i(M)$ for agent $i$. \qedhere
    \end{enumerate}
\end{proof}

\begin{observation}\label{two-removal}
    Given an ordered instance $\I = (N, M, \mathcal{V})$, let $v_i(g_1) \geq \ldots \geq v_i(g_m), \forall i\in N$. Let $\mathcal{I'} = (N', M', \mathcal{V})$ be the instance after removing an agent $i$ and a set of goods $\{a,b\}$ from $\I$. Let $g \in M'$ be the $j^{\text{th}}$ most valuable good in $M$ and the ${j'}^{\text{th}}$ most valuable good in $M'$. Then $j' \geq j-2$. 
\end{observation}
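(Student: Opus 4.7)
The plan is to argue this by simple counting on the common ordering. Since $\I$ is ordered, the inequalities $v_i(g_1) \ge v_i(g_2) \ge \ldots \ge v_i(g_m)$ hold simultaneously for every $i \in N$, so the ranking underlying ``$j^{\text{th}}$ most valuable'' is well-defined and identical for all agents. In particular, for $g \in M'$, its rank $j$ in $M$ is just its position in this universal ordering, and its rank $j'$ in $M'$ is its position in the ordering restricted to $M' = M \setminus \{a,b\}$.

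The key observation is that the rank of $g$ in $M'$ differs from its rank in $M$ only because of the goods above $g$ that have been deleted. Formally, let $t := |\{a,b\} \cap \{g_1, g_2, \ldots, g_{j-1}\}|$ be the number of removed goods that sit strictly above $g$ in the ordering. Then the set of goods in $M'$ that precede $g$ in the ordering has size exactly $(j-1) - t$, so $g$ occupies position $j' = j - t$ in $M'$.

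Since $|\{a,b\}| \le 2$, we trivially have $t \le 2$, and therefore $j' = j - t \ge j - 2$. This gives the claimed inequality.

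The argument uses nothing about the removed agent $i$ or the valuations $\mathcal{V}$ beyond the fact that the ordering is common to all agents; there is no real obstacle, and the proof is a one-line counting argument once the ordered structure is invoked.
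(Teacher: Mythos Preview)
Your argument is correct. The paper states this observation without proof, treating it as self-evident; your counting argument (rank in $M'$ equals rank in $M$ minus the number of deleted goods above, which is at most $2$) is exactly the intended justification.
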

\begin{corollary}[of \cref{two-removal}]\label{two-removal-cor}
    Given an ordered instance $\I = (N, M, \mathcal{V})$, let $\mathcal{I'} = (N', M', \mathcal{V})$ be the instance after removing an agent $i$ and a set of goods $\{a,b\}$ from $\I$. Let $n = |N|$ and $n'=|N'|=n-1$. Let $g \in M'$ be the $j^{\text{th}}$ most valuable good in $M$ and the ${j'}^{\text{th}}$ most valuable good in $M'$. Then,
    \begin{itemize}
        \item for any $k$, in particular, $k \in \{-1,0,1\}$, if $j \geq 2n+k$, then $j' \geq 2n'+k$, and
        \item for any $k$, in particular, $k \in \{-2,-1,0,1\}$, if $j \geq 3n+k$, then $j' \geq 3n'+k$.
    \end{itemize}
\end{corollary}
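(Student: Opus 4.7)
The statement is labelled as a corollary of Observation~1 (two-removal), so my plan is to derive it by a one-line computation applying $j' \geq j-2$ together with the arithmetic identity $n' = n-1$. No induction or case analysis should be needed; the content is purely bookkeeping.

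First I would invoke Observation~1, which says that whenever we delete one agent and any two-good set $\{a,b\}$ from an ordered instance, every good $g$ that survives satisfies $j' \geq j-2$, where $j$ and $j'$ are the ranks of $g$ in $M$ and $M'$ respectively (the ranks can improve, i.e., decrease, by at most $2$ because at most two strictly more valuable goods could have been removed). This observation is the whole substance of the argument; the corollary merely re-expresses the inequality in terms of $n' = n-1$.

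For the first bullet, suppose $j \geq 2n+k$. Then by Observation~1,
\[
j' \;\geq\; j-2 \;\geq\; 2n+k-2 \;=\; 2(n-1)+k \;=\; 2n'+k,
\]
which is exactly what is claimed. For the second bullet, suppose $j \geq 3n+k$. Then
\[
j' \;\geq\; j-2 \;\geq\; 3n+k-2 \;=\; 3(n-1)+k+1 \;=\; 3n'+k+1 \;\geq\; 3n'+k.
\]
Both displayed inequalities use only $n' = n-1$ and the bound from Observation~1, so each range of $k$ listed in the statement is handled uniformly by the same computation; the specific values $k \in \{-1,0,1\}$ and $k \in \{-2,-1,0,1\}$ are only highlighted because those are the cases that will later be invoked when reasoning about the thresholds $2n$ and $3n$ that appear in the reduction rules $R_2$ and $R_3$.

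Since the derivation is purely algebraic, there is no real obstacle: the only thing to double-check is that the $-2$ slack in Observation~1 is exactly compensated by the drop of $2$ in $2n \to 2n'$ for the first part, and is strictly absorbed (with room to spare) by the drop of $3$ in $3n \to 3n'$ for the second part. This extra slack in the second case is why the allowed range of $k$ there is one unit wider than in the first case, and it is worth noting explicitly for later use.
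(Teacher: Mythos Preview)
Your proof is correct and is exactly the intended derivation: the paper states this as an immediate corollary of Observation~\ref{two-removal} without writing out a proof, and your two displayed computations using $j' \ge j-2$ together with $n' = n-1$ are precisely the bookkeeping that justifies it. Your closing remark about the extra unit of slack in the $3n$ case explaining the wider range of $k$ is accurate and a nice touch.
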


Next, assume at a step where the number of agents is $n$, $\{g_{2n-1}, g_{2n}, g_{2n+1}\}$ (or $\{g_{3n-2}, g_{3n-1}$, $g_{3n}, g_{3n+1}\}$) is removed with an application of $R_2$ (or $R_3$). \cref{two-removal-cor} together with \cref{redund} imply that removing $\{g_{2n-1}, g_{2n}, g_{2n+1}\}$ (or $\{g_{3n-2}, g_{3n-1}, g_{3n}, g_{3n+1}\}$) at a later step where the number of agents is $n' \leq n$ is also valid for agent $i$. Therefore, all that remains is to prove that after removing the goods in $G_4$ and $r_4$ agents, the MMS value of $i$ remains at least $1-4\epsilon$. That is, $\mms^{n-r_4}_i(M \setminus G_4) \geq 1-4\epsilon$. 

\begin{lemma}\label{long-lemma}
    Let $(N', M', \mathcal{V}) = \mathtt{reduce}(([n], M, \mathcal{V}), \epsilon)$. Let $r_4$ be the number of times $R_4$ is applied during $\mathtt{reduce}(\I, \epsilon)$ and let $G_4$ be the set of removed goods by applications of $R_4$.
    Then for all agents $i \in N'$, $\mms^{n-r_4}_{v_i}(M \setminus G_4) \geq 1-4\epsilon$.
\end{lemma}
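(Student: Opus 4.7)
The plan is to construct, starting from an MMS partition $P = (P_1, \dots, P_n)$ of agent $i$ in the instance fed to $\mathtt{reduce}$ (so $v_i(P_k) \geq 1$ for each $k$), a partition $Q = (Q_1, \dots, Q_{n-r_4})$ of $M \setminus G_4$ with $v_i(Q_j) \geq 1 - 4\epsilon$ for every $j$; this suffices to conclude the lemma.

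Two per-good bounds drive the argument. Since $R_1$ is never applicable, $v_i(a_t) \leq v_i(1) < 3/4 + \epsilon$ for every $t$. More subtly, since $R_2$ is also not applicable at the moment of each $R_4$ application (by the smallest-index convention in \cref{algo:reduce}), applying \cref{thm:vr-upper-bounds} at that moment yields that the good at current position $2n' + 1$ (where $n'$ is the current number of agents) has $v_i$-value less than $(3/4+\epsilon)/3 = 1/4 + \epsilon/3$; but this good is exactly $b_t$. Hence every removed pair satisfies $v_i(\{a_t, b_t\}) < 1 + 4\epsilon/3$.

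I would then classify the bundles of $P$ by their intersection with $G_4 = A \cup B$ and build the $Q_j$'s by grouping. Let $\mu = \{k : P_k \cap G_4 \neq \emptyset\}$; clearly $|\mu| \leq 2r_4$. Bundles outside $\mu$ are immediately usable as singletons in $Q$ with value $\geq 1$. The bundles in $\mu$ are partitioned into groups (optionally together with some $G_4$-free bundles), each group becoming a single $Q_j$. The guiding invariant is that a group of $c$ bundles from $P$ with total $G_4$-intersection value $V$ becomes a $Q_j$ of value at least $c - V$, and I need $c - V \geq 1 - 4\epsilon$. In the ``nice'' case where each bundle in $\mu$ contains exactly one $G_4$-good and each $A$-containing bundle can be paired with a distinct $B$-containing bundle, every group has $c = 2$ and $V < 1 + 4\epsilon/3$, giving $c - V > 1 - 4\epsilon/3 \geq 1 - 4\epsilon$ and using exactly $r_4$ mergers.

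The main obstacle is the irregular cases: a single bundle of $\mu$ may contain several $A$-goods, several $B$-goods, or both kinds simultaneously, and the bipartite structure of ``$A$-bundles vs.\ $B$-bundles'' may be unbalanced (in particular, $|\mu|$ can range anywhere from $1$ to $2r_4$). To handle them, I would enlarge the offending groups by absorbing $G_4$-free bundles; each extra bundle contributes a full $+1$ to the merged value while consuming one unit of the $r_4$ total mergers available. The global excess $v_i(G_4) - r_4 < 4 r_4 \epsilon / 3$ then has to be absorbed within the $4\epsilon$ per-$Q_j$ slack, and I expect the bulk of the proof to be an exchange or Hall-style argument on the bipartite graph between $G_4$-goods and $\mu$-bundles showing that such a grouping always exists.
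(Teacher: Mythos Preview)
Your setup, the per-good bounds ($v_i(a_t) < 3/4 + \epsilon$ from non-applicability of $R_1$, and $v_i(b_t) < 1/4 + \epsilon/3$ from non-applicability of $R_2$ at the moment $R_4$ fires), and the handling of the ``nice'' balanced case all match the paper exactly. The gap is in the irregular cases, which is where essentially all the content of the proof lies.

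Two specific problems. First, absorbing $G_4$-free bundles is never needed and does not help. The paper works entirely inside $\mu$: writing $k = |\mu|$ and $r = r_4$, it proves the self-contained claim that $(P_1 \cup \cdots \cup P_k) \setminus G_4$ can be split into $k - r$ bundles each of value at least $1 - 4\epsilon$, whenever $r < k \le 2r$ and at most $r$ heavy and at most $r$ light goods lie in these $P_j$'s. Pulling a free bundle into a group spends one of your $r_4$ mergers to gain $+1$ in that group, but merging two $\mu$-bundles already gains $+1$ (from the extra $v_i(P_j) \ge 1$) at the same cost; so the move is never strictly useful, and reaching for it suggests the grouping problem has not been pinned down.

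Second, the proposed ``Hall-style argument on the bipartite graph between $G_4$-goods and $\mu$-bundles'' is not a workable plan: that graph is just the incidence relation (each good has degree one), so Hall's condition is vacuous and yields no grouping. What the paper actually does is induction on $k$. The inductive step peels off one or two carefully chosen bundles of $\mu$ (a bundle meeting both $H$ and $L$; or a pair of bundles with $\ge 2$ heavy goods and $\ge 2$ light goods respectively; or, in Case~3, it splits the problem into two smaller instances using the subset of bundles that each contain a single heavy good), reducing $k$ and $r$ in lockstep. The delicate base is $k = r+1$ with all light goods concentrated in one bundle; there one computes directly that the single merged bundle has value $\ge 1 - 4\epsilon$, and this computation is \emph{tight} at $k = 4$, which is exactly why the lemma gives $1 - 4\epsilon$ and not $1 - 4\epsilon/3$. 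Your global-excess bookkeeping $v_i(G_4) - r_4 < 4r_4\epsilon/3$ controls only the total loss, not its distribution across groups, so it cannot replace this case analysis.
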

\begin{proof}
    Without loss of generality, assume all the goods in $G_4$ are in $P_1 \cup P_2 \cup \ldots \cup P_k$ for some $k\le 2r_4$. Namely, we have $P_j \cap G_4 \neq \emptyset$ for all $j \in [k]$ and $(P_{k+1} \cup \ldots \cup P_n) \cap G_4 = \emptyset$. If $k \leq r_4$, then $(P_{k+1}, \ldots, P_n)$ is already a partition of a subset of $M \setminus G_4$ into at least $n-r_4$ bundles. Therefore the lemma follows.

    So assume $k > r_4$. In each application of $R_4$, two goods $h$ and $\ell$ are removed. Let $h$ be the more valuable good. We call $h$ the heavy good and $\ell$ the light good of this application of $R_4$. By \cref{thm:vr-upper-bounds}, for all heavy goods $h$ and light goods $\ell$ we have, $v_i(h) < 3/4+\epsilon$ and $v_i(\ell) < 1/4+\epsilon/3$. 
    Let $H$ be the set of all heavy goods and $L$ be the set of all light goods removed during these reductions. Hence, $G_4 = H \cup L$, $|H| = |L| = r_4$.
    
    We prove that we can partition $(P_1 \cup \ldots P_k) \setminus G_4$ into $k-r_4$ bundles $Q_1, \ldots, Q_{k-r_4}$, each of value at least $1-4\epsilon$. Or equivalently $\mms^{k-r_4}_{v_i}\big((P_1 \cup \ldots \cup P_k) \setminus G_4\big) \geq 1-4\epsilon$. Then, $(Q_1, \ldots, Q_{k-r_4}, P_{k+1}, \ldots, P_n)$ is a partition of $M \setminus G_4$ into $n-r_4$ bundles, each of value at least $1-4\epsilon$ and the lemma follows. It suffices to prove the following claim.
    \begin{claim*}
        For $r < k$ $\leq 2r$, if $|(P_1 \cup P_2 \cup \ldots \cup P_k) \cap H| \leq r$ and $|(P_1 \cup P_2 \cup \ldots \cup P_k) \cap L| \leq r$, then $\mms^{k-r}_{v_i}\big((P_1 \cup \ldots \cup P_k) \setminus G_4\big) \geq 1-4\epsilon$ for all $0<r<k$. 
    \end{claim*}
    The proof of the claim is by induction on $k$. For $k=2$, we have $r = 1$ and $v_i(P_1 \cup P_2) - v_i(H \cup L) \geq 2 - (\frac{3}{4}+\epsilon)-(\frac{1}{4}+\frac{\epsilon}{3}) > 1-4\epsilon$ and therefore, $\mms^1_{v_i}(P_1 \cup P_2 \setminus G_4) \geq 1-4\epsilon$. 
    Now assume that the statement holds for all values of $k' \leq k-1$, and we prove it for $k>2$. First, we prove the claim when at least one of the inequalities is strict. Assume $|(P_1 \cup P_2 \cup \ldots \cup P_k) \cap H| < r$ and $|(P_1 \cup P_2 \cup \ldots \cup P_k) \cap L| \leq r$. The proof of the other case is symmetric. If $(P_1 \cup P_2 \cup \ldots \cup P_k) \cap L \neq \emptyset$, without loss of generality assume $P_k \cap L \neq \emptyset$. Therefore, $|(P_1 \cup \ldots \cup P_{k-1}) \cap H|$ $\leq r-1 < k-1$ and $|(P_1 \cup \ldots \cup P_{k-1}) \cap L|$ $\leq r-1 < k-1$. We have,
    \begin{align*}
        \mms^{k-r}_{v_i}\big((P_1 \cup \ldots \cup P_k) \setminus G_4\big) &\geq 
        \mms^{(k-1)-(r-1)}_{v_i}\big((P_1 \cup \ldots \cup P_{k-1}\big) \setminus G_4) \\
        &\geq 1-4\epsilon. \tag{by induction assumption}
    \end{align*}
    Now assume $|(P_1 \cup P_2 \cup \ldots \cup P_k) \cap H| = r$ and $|(P_1 \cup P_2 \cup \ldots \cup P_k) \cap L| = r$.
    \paragraph{\boldmath Case 1: There exists $j \in [k]$, such that $P_j \cap H \neq \emptyset$ and $P_j \cap L \neq \emptyset$.}
    Without loss of generality, assume $P_k \cap H \neq \emptyset$ and $P_k \cap L \neq \emptyset$. In this case $|(P_1 \cup \ldots \cup P_{k-1}) \cap H|$ $\leq r-1 < k-1$ and $|(P_1 \cup \ldots \cup P_{k-1}) \cap L|$ $\leq r-1 < k-1$. Therefore, 
    \begin{align*}
        \mms^{k-r}_{v_i}\big((P_1 \cup \ldots \cup P_k) \setminus G_4\big) &\geq 
        \mms^{(k-1)-(r-1)}_{v_i}\big((P_1 \cup \ldots \cup P_{k-1}) \setminus G_4\big) \\
        &\geq 1-4\epsilon. \tag{by induction assumption}
    \end{align*}
    \paragraph{\boldmath Case 2: There exist $j, \ell \in [k]$, such that $|P_j \cap H| \geq 2$ and $|P_\ell \cap L| \geq 2$.}
    Similar to the former case, we have
    \begin{align*}
        \mms^{k-r}_{v_i}\big((P_1 \cup \ldots \cup P_k) \setminus G_4\big) &\geq 
        \mms^{(k-2)-(r-2)}_{v_i}\big((P_1 \cup \ldots \cup P_{k-2}) \setminus G_4\big) \\
        &\geq 1-4\epsilon. \tag{by induction assumption}
    \end{align*}
    
    \paragraph{\boldmath Case 3: Neither Case 1 nor Case 2 holds.}
    For all $j \in [k]$, we have $P_j \cap H = \emptyset$ or $P_j \cap L = \emptyset$; otherwise, we are in case 1.
    Let $S_1 := \{j \in [k] \mid P_j \cap L \neq \emptyset\}$ and $S_2 = [k] \setminus S_1 = \{j \in [k] \mid P_j \cap H \neq \emptyset\}$. If there exist bundles $P_j$ and $P_\ell$ such that $|P_j \cap H| \geq 2$ and $|P_\ell \cap L| \geq 2$, we are in case 2. Therefore, for all $j \in S_1$, $|P_j \cap L|=1$ or for all $j \in S_2$, $|P_j \cap H|=1$. Hence, there are $r$ bundles $P_1, \ldots, P_r$ such that either $|P_j \cap H| = 1$ (and $|P_j \cap L| = 0$) for all $j \in [r]$ or $|P_j \cap L| = 1$ (and $|P_j \cap H| = 0$) for all $j \in [r]$.
    
    \subparagraph{\boldmath Case 3.1: $k>r+1$.} 
    Assume $|P_j \cap H| = 1$ for all $j \in [r]$. (The case where $|P_j \cap L| = 1$ for all $j \in [r]$ is symmetric when $k>r+1$.)
    Let $|P_k \cap L| = a$. Then $|(P_1 \cup \ldots \cup P_a \cup P_k) \cap H| = a$ and $|(P_1 \cup \ldots \cup P_a \cup P_k) \cap L| = a$. Thus by the induction assumption, we have
    \begin{align*}
        \mms^{(a+1)-a}_{v_i}\big((P_1 \cup \ldots \cup P_a \cup P_k) \setminus G_4\big)
        &\geq 1-4\epsilon. 
    \end{align*}
    Moreover, $|(P_{a+1} \cup \ldots \cup P_{k-1}) \cap H| \leq r-a$ and $|(P_{a+1} \cup \ldots \cup P_{k-1}) \cap L| \leq r-a$. Thus by the induction assumption, we have
    \begin{align*}
        \mms^{(k-a-1)-(r-a)}_{v_i}\big((P_{a+1} \cup \ldots \cup P_{k-1}) \setminus G_4\big) 
        &\geq 1-4\epsilon. 
    \end{align*}
    So we can partition $(P_1 \cup \ldots \cup P_a \cup P_k) \setminus G_4$ into one bundle of value at least $1-4\epsilon$ for $i$ and also we can partition $(P_{a+1} \cup \ldots \cup P_{k-1}) \setminus G_4$ into $k-r-1$ bundles of value at least $1-4\epsilon$ for $i$. Thus, the lemma holds.
    \subparagraph{\boldmath Case 3.2: $k=r+1$.}
    Let $B = (P_1 \cup \ldots \cup P_k) \setminus G_4$. We want to show $\mms^1_{v_i}(B) \geq 1-4\epsilon$. Hence it suffices to show $v_i(B) \geq 1-4\epsilon$.
    \begin{align*}
        v_i(B) &\geq \sum_{j \in [k-1]} v_i\left(P_j \setminus (H \cup L)\right) \\
        &= \sum_{j \in [k-1]} \left(v_i(P_j) - v_i(P_j \cap (H \cup L))\right) \\
        &> (k-1)\left(1 - (\frac{3}{4}+\epsilon)\right) &\tag{since $|P_j \cap (H \cup L)|=1$, $v_i(P_j \cap (H \cup L)) \leq \frac{3}{4}+\epsilon$}\\
        &= (k-1)(\frac{1}{4}-\epsilon) \geq 1-4\epsilon. &\tag{for $k>4$}
    \end{align*}
    It remains to prove the claim when $k=3$ and $k=4$. If there are two bundles $P_1$ and $P_2$ such that $|P_1 \cap L| = |P_2 \cap L| = 1$, $v_i(B) \geq v_i(P_1 \setminus L) + v_i(P_2 \setminus L) > 2\left(1 - (\frac{1}{4}+\frac{\epsilon}{3})\right) > 1-4\epsilon$. Otherwise, for $k=3$, there are two bundles $P_1$ and $P_2$ such that $|P_1 \cap H|=|P_2 \cap H|=1$ and $|P_3 \cap L|=2$. Then, 
    \begin{align*}
        v_i(B) &= v_i(P_1 \setminus H) + v_i(P_2 \setminus H) + v_i(P_3 \setminus L) \\
        &> 2\left(1 - (\frac{3}{4}+\epsilon)\right) + \left(1 - 2(\frac{1}{4}+\frac{\epsilon}{3})\right) \\
        &= 1 - \frac{8\epsilon}{3} > 1-4\epsilon.
    \end{align*}
    For $k=4$, we have $|P_1 \cap H|=|P_2 \cap H|=|P_3 \cap H|=1$ and $|P_4 \cap L|=3$. Then, 
    \begin{align*}
        v_i(B) &= v_i(P_1 \setminus H) + v_i(P_2 \setminus H) + v_i(P_3 \setminus H) + v_i(P_4 \setminus L) \\
        &> 3\left(1 - (\frac{3}{4}+\epsilon)\right) + \left(1 - 3(\frac{1}{4}+\frac{\epsilon}{3})\right) = 1-4\epsilon.
\qedhere    \end{align*}
\end{proof}
We are ready to prove \cref{thm:1} and \cref{thm:2}.
\theoremOne*
\begin{proof}
    Fix an agent $i \in N'$. Let $\I^1$ be the instance after all applications of $R_1$ and before any further reduction. By \cref{valid-rules}, $\mms_i(\I^1) \geq 1$. So without loss of generality, let us assume $\I = \I^1$.
    Let $G_2$, $G_3$, and $G_4$ be the set of goods removed by applications of $R_2$, $R_3$, and $R_4$, respectively. Also, let $r_2 = |G_2|/3$, $r_3 = |G_3|/4$, and $r_4 = |G_4|/2$ be the number of times each rule is applied, respectively. By \cref{long-lemma}, $\mms^{n-r_4}_{v_i}(M \setminus G_4) \geq 1-4\epsilon$. For an application of $R_3$ (or $R_4$) at step $t$, let $\{a_1, a_2, a_3\}$ (or $\{b_1, b_2, b_3, b_4\}$) be the set of goods that are removed. By \cref{redund}, removing this set at a step $t' \geq t$ is still a valid reduction for $i$. Therefore, removing $G_2$ and $G_3$ and $r_2 + r_3$ agents does not decrease the MMS value of $i$. Thus, $\mms_i(\I') \geq 1-4\epsilon$.
\end{proof}

\theoremTwo*
\begin{proof}
    In $\mathtt{reduce}$, as long as $R^{(3/4+\epsilon)}_1$ is applicable, we apply it. Once it is not applicable anymore, for all remaining agents $i$, $v_i(1)<3/4+\epsilon$. In the rest of the procedure $\mathtt{reduce}$, we do not increase the value of any good for any agent. Therefore, $R^{(3/4+\epsilon)}_1$ remains inapplicable. As long as one of the rules $R^{(3/4+\epsilon)}_k$ is applicable for $k \in \{2,3,4\}$, we apply it. Therefore, $\mathtt{reduce}(\I, \epsilon)$ is $(3/4+\epsilon)$-irreducible. Let $\I' = (N',M',\mathcal{V}') = \mathtt{reduce}(\I, \epsilon)$. Since $\mms_i(\I') \geq 1-4\epsilon$ (by \cref{thm:1}), $\mathtt{normalize}$ can increase the value of each good by a multiplicative factor of at most $1/(1-4\epsilon)$. Therefore, after ordering the instance, none of the rules $R^\alpha_k$ for $k \in [4]$ would be applicable for $\alpha \geq \frac{3/4+\epsilon}{1-4\epsilon} = \frac{3}{4}+\frac{4\epsilon}{1-4\epsilon}$. Hence, $\hat{\I}=\mathtt{order}(\mathtt{normalize}(\mathtt{reduce}(\I, \epsilon)))$ is $\alpha$-irreducible for $\alpha \geq \frac{3}{4}+\frac{4\epsilon}{1-4\epsilon}$ and it is of course ordered. Since $\mathtt{order}$ does not change the multiset of the values of the goods for each agent, the instance remains normalized.

    Now let us assume $A$ is an $\alpha$-MMS allocation for $\hat{\I} = \mathtt{order}(\mathtt{normalize}(\mathtt{reduce}(\I, \epsilon)))$. By \cref{order-preserves}, we can obtain an allocation $B$ which is $\alpha$-MMS for $\mathtt{normalize}(\mathtt{reduce}(\I, \epsilon))$. \cref{thm:normalize} implies that $B$ is $\alpha$-MMS for $\I' = (N', M', \mathcal{V}') = \mathtt{reduce}(\I, \epsilon)$. For all agents $i \in N \setminus N'$, $v'_i(B_i) = v_i(B_i)/\mms_i(\I)$. Therefore, 
    \begin{align*}
        v_i(B_i) &= v'_i(B_i)\mms_i(\I) \\
        &\geq \alpha\mms_i(\I')\mms_i(\I) \tag{$B$ is $\alpha$-MMS for $\I'$}\\
        &\geq \alpha(1-4\epsilon)\mms^n_{v_i}(M). \tag{$\mms^n_{v'_i}(M) \geq 1-4\epsilon$ by \cref{thm:1}}
    \end{align*}
    Thus, $B$ gives all the agents in $N'$, $\alpha(1-4\epsilon)$ fraction of their MMS. All agents in $N \setminus N'$ receive $(3/4+\epsilon)$ fraction of their MMS value. Therefore, the final allocation is a $\min(3/4+\epsilon, (1-4\epsilon)\alpha)$-MMS allocation of $\I$.  
\end{proof}

\section{\boldmath $(3/4+\delta)$-MMS allocation for $\delta$-ONI instances}\label{easy-sec}
In this section, we prove that for $\delta \leq \lbdelta$ there exists a $(3/4+\delta)$-\mms~allocation if the input is a $\delta$-ONI instance. First we prove that in any $\delta$-ONI instance $\mathcal{I}=([n],[m],\mathcal{V})$, $m \geq 2n$.
\begin{observation}\label{2n-goods}
    For any $\delta \leq 1/4$, if $\I = ([n], [m], \mathcal{V})$ is $\delta$-ONI, then $m \geq 2n$.
\end{observation}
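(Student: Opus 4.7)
The plan is to argue by contradiction: assume $m < 2n$ and exploit the combined force of normalization and $R_1^{3/4+\delta}$-irreducibility to produce a good whose value exceeds $3/4+\delta$, contradicting \cref{thm:vr-upper-bounds}(1).

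First I would fix any agent $i \in [n]$ and look at their MMS partition $P^i = (P^i_1,\dots,P^i_n)$. Because $\I$ is normalized, $v_i(P^i_j) = 1$ for every $j \in [n]$; in particular no bundle $P^i_j$ can be empty (since $v_i(\emptyset) = 0 \neq 1$), so each of the $n$ bundles contains at least one good. The total number of goods is $m = \sum_{j=1}^n |P^i_j|$.

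Next I would use a simple pigeonhole step: if every $P^i_j$ contained at least two goods, we would have $m \geq 2n$, contradicting the assumption $m < 2n$. Hence some bundle, say $P^i_{j^*}$, is a singleton $\{g\}$. Then $v_i(g) = v_i(P^i_{j^*}) = 1$.

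Finally, since the instance is ordered, $v_i(1) \geq v_i(g) = 1$. But the instance is $(3/4+\delta)$-irreducible, so $R_1^{3/4+\delta}$ is not applicable to $i$, and \cref{thm:vr-upper-bounds}(1) gives $v_i(1) < 3/4+\delta$. Combining with $\delta \leq 1/4$ yields $v_i(1) < 1$, the desired contradiction. The only mildly subtle point is ensuring all bundles are nonempty (so that the pigeonhole argument applies cleanly), which is exactly where normalization is crucial; aside from that the argument is essentially a one-line contradiction with the irreducibility bound.
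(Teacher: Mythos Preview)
Your proof is correct and essentially identical to the paper's: assume $m<2n$, find a singleton MMS bundle giving $v_i(1)\ge 1$, and contradict irreducibility via $R_1^{3/4+\delta}$. You are in fact slightly more careful than the paper in explicitly using normalization to rule out empty bundles before invoking pigeonhole, but the argument is otherwise the same.
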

\begin{proof}
    Towards a contradiction, assume $m < 2n$. Now for an arbitrary agent $i$, let $(P_1, P_2, \ldots, P_n)$ be the MMS partition of $i$. Since $m < 2n$, there must be a bundle $P_j$ such that $|P_j| = 1$. Therefore, $v_i(1) \geq v_i(P_j)=1$ which means $R^{3/4+\delta}_1$ is applicable. This contradicts $\I$ being $(3/4+\delta)$-irreducible. Thus, $m \geq 2n$.
\end{proof}
We initialize $n$ bags $\{B_1, \ldots, B_n\}$ with the first $2n$ goods as follows:
\begin{equation}\label{eq:B_i}
    B_k := \{k , 2n-k+1\} \text{ for } k\in [n].
\end{equation}
See Figure \ref{B-bags} for a better intuition. Note that by \cref{2n-goods}, $m \geq 2n$ and such bag-initialization is possible.

Given an instance $\I = ([n],[m],\mathcal{V})$ (with $m \geq 2n$), let $N^1(\I) = \{i \in [n] \mid \forall k \in [n]: v_i(B_k) \leq 1\}$ and $N^2(\I) = \{i \in [n] \mid \exists k \in [n]: v_i(B_k) > 1\}$. 
\begin{observation}\label{obs:1-12}
For $\delta \leq 1/4$ and instance $\I$, if $\I$ is $\delta$-ONI, then for all agents $i \in N^2(\I)$, $v_i(2n+1) < 1/12+\delta$.
\end{observation}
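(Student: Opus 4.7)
The plan is to derive the bound by chaining three facts: the defining property of $N^2$, the structural lemma \cref{upper-13}, and the inapplicability of the reduction rule $R_4^{3/4+\delta}$.

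First, I would unpack membership in $N^2(\I)$: by definition, there exists some index $k \in [n]$ with $v_i(B_k) = v_i(k) + v_i(2n-k+1) > 1$. Since $\I$ is ordered and normalized, I can apply \cref{upper-13} to this $k$, which immediately yields $v_i(k) > 2/3$ (and $v_i(2n-k+1) \le 1/3$, though only the first bound is needed).

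Next, I would use the ordered property: since $k \ge 1$, we have $v_i(1) \ge v_i(k) > 2/3$. Combined with the fact that $\I$ is $(3/4+\delta)$-irreducible (and hence $R_4^{3/4+\delta}$ is not applicable to agent $i$), we get $v_i(1) + v_i(2n+1) < 3/4 + \delta$. Rearranging,
\begin{equation*}
v_i(2n+1) < \tfrac{3}{4} + \delta - v_i(1) < \tfrac{3}{4} + \delta - \tfrac{2}{3} = \tfrac{1}{12} + \delta,
\end{equation*}
which is exactly the claimed inequality.

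I do not anticipate a real obstacle: the hypothesis $\delta \le 1/4$ is only used implicitly to invoke \cref{upper-13} (whose hypothesis is simply that the instance is ordered and normalized and $v_i(k)+v_i(2n-k+1) > 1$, all of which hold) and to ensure that the reduction rules $R_1^{3/4+\delta}, \ldots, R_4^{3/4+\delta}$ are meaningful in the sense used earlier. The argument is a one-liner once the three ingredients — $N^2$-membership, \cref{upper-13}, and the non-applicability of $R_4^{3/4+\delta}$ — are lined up, so the proof should be short and direct.
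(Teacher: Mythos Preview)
Your proposal is correct and follows essentially the same argument as the paper: both use the definition of $N^2$ to find $k$ with $v_i(B_k)>1$, invoke \cref{upper-13} to get $v_i(k)>2/3$, then combine $v_i(1)\ge v_i(k)$ with the inapplicability of $R_4^{3/4+\delta}$ to conclude $v_i(2n+1)<3/4+\delta-2/3=1/12+\delta$.
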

\begin{proof}
    By the definition of $N^2$, there exist $k \in [n]$ such that $v_i(k) + v_i(2n-k+1) = v_i(B_k) > 1$. Therefore, by Lemma \ref{upper-13}, $v_i(k) > 2/3$. We have, 
    \begin{align*}
        v_i(2n+1) &< \frac{3}{4}+\delta - v_i(1) \tag{$R^{3/4+\delta}_4$ is not applicable}\\
        &\leq \frac{3}{4}+\delta - v_i(k) \tag{$v_i(1) \geq v_i(k)$}\\
        &< \frac{3}{4}+\delta - \frac{2}{3} = \frac{1}{12} + \delta, \tag{$v_i(k) > \frac{2}{3}$}
    \end{align*}
which completes the proof. 
\end{proof}
We refer to $N^1(\I)$ and $N^2(\I)$ by $N^1$ and $N^2$ respectively when $\I$ is the initial $\delta$-ONI instance. Recall that $N^1$ and $N^2$ do not change over the course of our algorithm.
Let $N^1_1 = \{i \in N^1 \mid v_i(2n+1) \geq 1/4 - 5\delta\}$ and $N^1_2 = N^1 \setminus N^1_1$.
Depending on the number of agents in $N^1_1$, we run one of the $\mathtt{approxMMS1}(\I, \delta)$ or $\mathtt{approxMMS2}(\I, \delta)$ shown in Algorithms \ref{algo:gt} or \ref{algo:new} respectively. Roughly speaking, if the size of $N^1_1$ is not too large, we run \cref{algo:gt} and prioritize agents in $N^1_1$. Otherwise, we run \cref{algo:new} giving priority to agents in $N^1_2 \cup N^2$. Giving priority to agents in a certain set $S$ means that when the algorithm is about to allocate a bag $B$ to an agent, if there is an agent in $S$ who gets satisfied upon receiving $B$ (i.e., $v_i(B) \geq 3/4+\delta$ for some $i \in S$), then the algorithms give $B$ to such an agent and not to someone outside $S$.

\subsection{\boldmath Case $1$: $|N^1_1| \leq \lbnone$}\label{small-N11}
In this case we run \cref{algo:gt}.
\begin{algorithm}[!tb]
    \caption{$\mathtt{approxMMS1}(\mathcal{I}, \delta)$}
    \label{algo:gt}
    \textbf{Input:} $\delta$-ONI $\mathcal{I}=(N,M,\mathcal{V})$ and factor $\delta$

    \textbf{Output:} Allocation $A = \langle A_1, \ldots, A_n \rangle$
    \begin{algorithmic}
        \State $B_i \leftarrow \{i, 2n-i+1\}_{i \in [n]}$
        \State $\mathcal{B} = \cup_{i \in [n]} \{B_i\}$
        \State $\alpha = 3/4 + \delta$
        \While{$\exists i \in N, B \in \mathcal{B}$ s.t. $v_i(B) \geq \alpha$}
            \State $i \leftarrow$ an arbitrary agent s.t. $v_i(B) \geq \alpha$, priority with agents in $N^1_1$
            \State $A_i \leftarrow B$
            \State $\mathcal{B} \leftarrow \mathcal{B} \setminus \{B\}$
            \State $N \leftarrow N \setminus \{i\}$
            \State $M \leftarrow M \setminus B$
        \EndWhile
        \State $J \leftarrow \cup_{B \in \mathcal{B}} B$
        \For {$B \in \mathcal{B}$}
            \While {$\nexists i \in N$ s.t. $v_i(B) \geq \alpha$}
                \State $g \leftarrow$ an arbitrary good in $M \setminus J$
                \State $B \leftarrow B \cup \{g\}$
                \State $M \leftarrow M \setminus \{g\}$
            \EndWhile
            \State $i \leftarrow$ an arbitrary agent s.t. $v_i(B) \geq \alpha$, priority with agents in $N^1_1$
            \State $A_i \leftarrow B$
            \State $N \leftarrow N \setminus \{i\}$
            \State $M \leftarrow M \setminus B$            
        \EndFor
        \State \Return $\langle A_1, \ldots, A_n \rangle$
    \end{algorithmic}
\end{algorithm}
For $k\in [n]$, let $B_k$ and $\hat{B}_k\supseteq B_k$ be the $k^{th}$ bag at the beginning and end of \cref{algo:gt}, respectively.
\begin{lemma}\label{upper-bound-1}
    Let $i$ be any agent who did not receive any bag by the end of \cref{algo:gt}.  For all $k \in [n]$ such that $v_i(B_k) \leq 1$, we have $v_i(\hat{B}_k) < 1 + 4\delta/3$.
\end{lemma}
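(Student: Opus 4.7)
The plan is to case-split on whether bag $B_k$ receives any extra goods during the bag-filling (second) phase of Algorithm \ref{algo:gt}.

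The first case is $\hat{B}_k = B_k$: no goods were appended. Because $i$ never receives a bag, the only way this situation can arise is that $B_k$ was handed to some other agent during the initial \emph{while} loop, which only assigns bags in their original form. Then $v_i(\hat{B}_k) = v_i(B_k) \leq 1 < 1 + 4\delta/3$ by the hypothesis of the lemma, and there is nothing more to do.

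The second case, $\hat{B}_k \supsetneq B_k$, is where the actual work lies. I would let $g$ be the last good inserted into the bag before it was finally handed out. The inner \emph{while} loop keeps filling precisely while no remaining agent values the current contents at $\alpha = 3/4 + \delta$ or more, so immediately before $g$ was inserted, the partial bag $\hat{B}_k \setminus \{g\}$ satisfied $v_j(\hat{B}_k \setminus \{g\}) < \alpha$ for every $j$ still in $N$ at that moment. Since $i$ is assumed never to receive a bag, $i$ remains in $N$ throughout the run, so this inequality applies to $i$: $v_i(\hat{B}_k \setminus \{g\}) < 3/4 + \delta$. The added good $g$ comes from $M \setminus [2n]$, and because $\I$ is $\delta$-ONI (in particular $R^\alpha_2$ is inapplicable), Proposition \ref{thm:vr-upper-bounds} gives $v_i(g) < \alpha/3 = 1/4 + \delta/3$. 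Summing, $v_i(\hat{B}_k) < \alpha + \alpha/3 = 4\alpha/3 = 1 + 4\delta/3$.

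The only step that requires any care is the assertion that $i$ was still an unsatisfied agent at the moment $g$ was added; this is automatic from the hypothesis that $i$ receives no bag at all during the whole run of the algorithm, so I do not expect any serious obstacle. The rest is a precise bookkeeping of two contributions, at most $\alpha$ from the bag just before its last filling step and at most $\alpha/3$ from the final good, whose sum is exactly the bound stated.
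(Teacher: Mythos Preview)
Your proof is correct and follows essentially the same approach as the paper: case-split on whether $\hat{B}_k = B_k$, and in the nontrivial case bound $v_i(\hat{B}_k \setminus \{g\}) < 3/4+\delta$ (since $i$ is still unsatisfied) and $v_i(g) < 1/4+\delta/3$ via \cref{thm:vr-upper-bounds}. Your explicit remark that $i$ remains in $N$ when $g$ is added is a detail the paper leaves implicit; the extra justification you give in the first case (that $B_k$ must have been handed out in the initial while loop) is not strictly needed, since the hypothesis $v_i(B_k)\le 1$ already suffices.
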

\begin{proof}
    The claim trivially holds if $\hat{B}_k = B_k$. Now assume $B_k \subsetneq \hat{B}_k$.
    Let $g$ be the last good added to $\hat{B}_k$.
    We have $v_i(\hat{B}_k \setminus g) < 3/4 +\delta$, otherwise $g$ would not be added to $\hat{B}_k$.
    Also note that $g > 2n$ and hence $v_i(g) < 1/4 + \delta/3$
    by \cref{thm:vr-upper-bounds}. Thus, we have
    \begin{align*}
        v_i(\hat{B}_k) &= v_i(\hat{B}_k \setminus g) + v_i(g) \\
        &< \left(\frac{3}{4}+\delta\right) + \left(\frac{1}{4}+\frac{\delta}{3} \right) = 1+\frac{4\delta}{3}.
\qedhere    \end{align*}
\end{proof}

\begin{lemma}\label{lemma-1}
    For $\delta \leq \frac{1}{4}$, given a $\delta$-ONI instance with $|N^1_1| \leq \lbnone$, all agents $i \in N^1_1$ receive a bag of value at least $(3/4 +\delta) \cdot \mms_i$ at the end of \cref{algo:gt}.
\end{lemma}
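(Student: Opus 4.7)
The plan is a volume-counting contradiction. Suppose for contradiction that some $i^{\star} \in N^1_1$ fails to receive a bag by the end of \cref{algo:gt}.

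I would first argue that if $i^{\star}$ is unallocated, the algorithm must have halted by exhausting $M \setminus J$ inside some inner \textbf{while}-loop; otherwise, all $n$ bags would have been allocated to the $n$ agents and $i^{\star}$ would have received one. Consequently, at termination every good of $M$ sits inside some final bag $\hat{B}_k$, and normalization gives
\begin{equation*}
\sum_{k=1}^{n} v_{i^{\star}}(\hat{B}_k) \;=\; v_{i^{\star}}(M) \;=\; n.
\end{equation*}

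Next I would bound $v_{i^{\star}}(\hat{B}_k)$ according to the recipient of $\hat{B}_k$. Since $i^{\star} \in N^1$, we have $v_{i^{\star}}(B_k) \le 1$ for every $k$, so \cref{upper-bound-1} applied to the unallocated agent $i^{\star}$ yields $v_{i^{\star}}(\hat{B}_k) < 1 + 4\delta/3$. I would use this coarser bound for the $P$ bags that end up with $N^1_1$ agents. For every other bag (allocated to a non-$N^1_1$ agent, or still unallocated at termination) I would instead establish the sharper priority-rule bound $v_{i^{\star}}(\hat{B}_k) < 3/4+\delta$. The justification is the same across the sub-cases (phase-one allocation of some $B_k$ to a non-$N^1_1$ agent, phase-two allocation to a non-$N^1_1$ agent, current bag abandoned when $M\setminus J$ emptied, and untouched bags still in their initial form): at the instant the bag's fate is determined, $i^{\star}$ is an unsatisfied agent in $N^1_1$, so if $v_{i^{\star}}$ had reached $3/4+\delta$ at that moment, the priority rule would have sent the bag to some $N^1_1$ agent (or would have stopped the inner filling loop and produced an allocation), contradicting the actual behavior.

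Substituting the two bounds into the identity, and using $P \le |N^1_1| - 1$ (since $i^{\star}\in N^1_1$ itself receives nothing), I obtain
\begin{equation*}
n \;<\; P\!\left(1+\tfrac{4\delta}{3}\right) + (n-P)\!\left(\tfrac{3}{4}+\delta\right),
\end{equation*}
which rearranges to $P > \lbnone$. Combined with $P \le |N^1_1| - 1 \le \lbnone - 1$, this is the desired contradiction.

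The main obstacle is carrying out the priority-rule bound cleanly across all sub-cases: I must verify, instant by instant, that whenever a non-$N^1_1$ agent is selected (or a bag is abandoned because $M\setminus J$ is empty, or a bag stays untouched because phase one ended without allocating it), the value $v_{i^{\star}}$ of that bag is strictly below $3/4+\delta$ at that very moment. This is a short but case-heavy inspection of both phases of \cref{algo:gt}. The numerical threshold $\lbnone$ is precisely the largest value of $|N^1_1|$ for which the above counting inequality yields a strict contradiction.
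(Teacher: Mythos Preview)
Your proposal is correct and follows essentially the same approach as the paper: assume some $i^\star\in N^1_1$ receives no bag, bound each $v_{i^\star}(\hat B_k)$ by $1+4\delta/3$ for bags going to $N^1_1$ agents (via \cref{upper-bound-1}) and by $3/4+\delta$ for all other bags (via the priority rule), and sum to contradict $v_{i^\star}(M)=n$. Your use of $P\le |N^1_1|-1$ is slightly tighter than the paper's $|S|\le |N^1_1|$, but either suffices since the derived inequality $P>\lbnone$ is strict; the elaborate sub-case inspection you anticipate for the priority bound is what the paper compresses into a one-line claim.
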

\begin{proof}
    It suffices to prove that all agents $i \in N^1_1$ receive a bag at the end of \cref{algo:gt}. Towards a contradiction, assume that $i \in N^1_1$ does not receive any bag.
    \begin{claim}\label{claim-1-1}
        For all bags $B$ not allocated to an agent in $N^1_1$, $v_i(B) < 3/4 +\delta$. 
    \end{claim}
    \cref{claim-1-1} holds since the priority is with agents in $N^1_1$. Let $S$ be the set of bags allocated to agents in $N^1_1$ and $\bar{S}$ be the set of the remaining bags.
    We have
    \begin{align*}
        v_i(M) &= \sum_{k \in [n]} v_i(\hat{B}_k) = \sum_{B \in S} v_i(B) + \sum_{B \in \bar{S}} v_i(B) \\
        &< |N^1_1|\left(1 + \frac{4\delta}{3}\right) + \left(n - |N^1_1|\right)\left(\frac{3}{4}+\delta\right) &\tag{\cref{upper-bound-1} and \cref{claim-1-1}} \\
        &\leq n, &\tag{$|N^1_1| \leq \lbnone$}
    \end{align*}
    which is a contradiction since $v_i(M)=n$. Thus, all agents $i \in N^1_1$ receive a bag at the end of \cref{algo:gt}.
\end{proof}
\begin{remark}
The last inequality in the proof of \cref{lemma-1} is tight for $|N^1_1| = \lbnone$. 
\end{remark}

\begin{lemma}\label{lemma-2}
    For $\delta \leq \frac{1}{4}$, given a $\delta$-ONI instance with $|N^1_1| \leq \lbnone$, all agents $i \in N^1_2$ receive a bag of value at least $(3/4+\delta) \cdot \mms_i$ at the end of \cref{algo:gt}.
\end{lemma}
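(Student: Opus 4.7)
The plan is to argue by contradiction: suppose some $i \in N^1_2$ receives no bag in \cref{algo:gt}. Since each agent gets at most one bag and there are $n$ bags for $n$ agents, $i$ being left out forces some bag to stay unallocated, which is only possible if the inner while-loop of Phase~2 runs out of goods, i.e., the algorithm \emph{fails}. At that moment, $M\setminus J = \emptyset$, so every good lies in some bag and $\sum_k v_i(\hat B_k) = v_i(M) = n$ (using $\mms_i = 1$ by normalization). The key new ingredient over \cref{lemma-1} is the defining inequality $v_i(2n+1) < 1/4 - 5\delta$ for $i \in N^1_2$, which strictly improves the Phase-2 bag bound from the generic $1+4\delta/3$ of \cref{upper-bound-1} to something below $1$.

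I partition the $n$ bags at the moment of failure into four groups and bound $v_i(\hat B)$ for each. Group~(a): bags allocated in Phase~1; each is still the initial $\{k,2n-k+1\}$, and so $v_i(B_k) \le 1$ since $i \in N^1$. Group~(b): bags allocated during bag-filling in Phase~2; the last good $g$ added to such a bag lies in $M\setminus J \subseteq [m]\setminus[2n]$ and satisfies $v_i(g) \le v_i(2n+1) < 1/4 - 5\delta$, while the state just before adding $g$ had value $< 3/4+\delta$ for every remaining agent (else $g$ would not have been added), giving $v_i(\hat B) < (3/4+\delta) + (1/4-5\delta) = 1-4\delta$. Group~(c): the single bag $\hat B^{\ast}$ being filled when the algorithm stalls; the while-loop's exit condition says no remaining agent values $\hat B^{\ast}$ at $\ge 3/4+\delta$, so $v_i(\hat B^{\ast}) < 3/4+\delta$. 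Group~(d): bags still untouched in the for-loop queue; Phase~1 terminated with $i$ still unsatisfied, so any such $B_k$ has $v_i(B_k) < 3/4+\delta$.

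Let $a_1, a_2 \ge 0$ count the bags in groups~(a),~(b) and $r \ge 1$ count those in (c)$\,\cup\,$(d), so $a_1+a_2+r = n$. Summing the four bounds yields
\[
  n \;=\; \sum_k v_i(\hat B_k) \;\le\; a_1 + a_2(1-4\delta) + r\bigl(\tfrac{3}{4}+\delta\bigr) \;=\; n - 4\delta\,a_2 - r\bigl(\tfrac{1}{4}-\delta\bigr),
\]
which, because $r \ge 1$ and $\delta \le \lbdelta < 1/4$, is strictly less than $n$---a contradiction. Hence $i$ does receive a bag, and by \cref{algo:gt}'s acceptance condition that bag has value at least $3/4+\delta = (3/4+\delta)\cdot\mms_i$. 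The main obstacle in this plan is the Phase-2 bound $v_i(\hat B) < 1-4\delta$ in group~(b), which is the only place the stronger hypothesis $i \in N^1_2$ (as opposed to merely $i \in N^1$) enters; everything else mirrors the structure of \cref{lemma-1}.
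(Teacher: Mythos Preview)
Your argument is correct and follows the same approach as the paper: both derive a contradiction by showing $\sum_k v_i(\hat B_k) < n = v_i(M)$, with the key step being that for any bag touched in Phase~2 the last added good $g>2n$ satisfies $v_i(g)\le v_i(2n+1)<1/4-5\delta$ by the definition of $N^1_2$. The paper's version is a bit leaner---it collapses your groups (a)--(d) into the single claim $v_i(\hat B_k)\le 1$ for all $k$ and uses the one unallocated bag for the strict inequality---whereas your four-way split does the same job with slightly more bookkeeping. One small point: in your final line you invoke $\delta\le\lbdelta$, but the lemma only assumes $\delta\le 1/4$; since the bounds for groups (c) and (d) are already strict and $r\ge 1$, you in fact have $n=\sum_k v_i(\hat B_k) < a_1 + a_2(1-4\delta) + r(3/4+\delta)\le n$ for every $\delta\le 1/4$, so the appeal to the smaller constant is unnecessary (and as written your displayed ``$\le$'' should be ``$<$'').
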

\begin{proof}
    It suffices to prove that all agents $i \in N^1_2$ receive a bag at the end of \cref{algo:gt}. Towards a contradiction, assume that $i \in N^1_2$ does not receive any bag.
    \begin{claim}\label{claim-2-1}
        For all $k \in [n]$, $v_i(\hat{B}_k) \leq 1$. 
    \end{claim}
    \begin{claimproof}
    The claim trivially holds if $\hat{B}_k = B_k$. Now assume $B_k \subsetneq \hat{B}_k$.
    Let $g$ be the last good added to $\hat{B}_k$.
    We have $v_i(\hat{B}_k \setminus g) < 3/4+\delta$, otherwise $g$ would not be added to $\hat{B}_k$.
    Also note that $g \geq 2n+1$ and hence $v_i(g) \leq v_i(2n+1) < 1/4 - 5 \delta$
    by the definition of $N^1_2$. Therefore, we have
    \begin{align*}
        v_i(\hat{B}_k) &= v_i(\hat{B}_k \setminus g) + v_i(g) \\
        &< (\frac{3}{4}+\delta) + (\frac{1}{4} - 5 \delta) < 1.
    \end{align*}
    Thus, the claim holds. 
    \end{claimproof}
    Since agent $i$ did not receive a bag, there exists an unallocated bag with value less than $1$ for agent $i$. Therefore, $v_i(M) = \sum_{k \in [n]} v_i(\hat{B}_k) <n$ which is a contradiction. Thus, all agents $i \in N^1_2$ receive a bag at the end of \cref{algo:gt}.
\end{proof}

\subsubsection{\boldmath Agents in $N^2$}
In this section, we prove that all agents in $N^2$ also receive a bag at the end of \cref{algo:gt}. For the sake of contradiction, assume that agent $i \in N^2$ does not receive a bag at the end of \cref{algo:gt}. Let $A^+ := \{k \in [n] \mid v_i(B_k) > 1\}$ and $A^- := \{k \in [n] \mid v_i(B_k) < 3/4 +\delta\}$ be the indices of the bags satisfying the respective constraint. Also, let $\ell$ be the smallest such that for all $k \in [\ell+1, n]$, $v_i(k) + v_i(2n-k+1 + \ell) < 1$. See Figure \ref{fig:ell} taken from \cite{simple}. 
\begin{figure*}[tb]
\centering
\newlength{\cellW}\newlength{\cellH}
\setlength{\cellW}{3.45em}
\setlength{\cellH}{1.8em}
\begin{tikzpicture}[
outerBox/.style = {semithick},
innerBord/.style = {},
highlight/.style = {very thick},
myArrow/.style={->,>={Stealth},thick},
]
\draw[outerBox] (0, 0) rectangle +(12\cellW, 2\cellH);
\draw[innerBord] (0, 1\cellH) -- +(12\cellW, 0);
\foreach \x in {1,2,3.5,5.5,7,8.5,10,11}
    \draw[innerBord] (\x\cellW, 0) -- +(0, 2\cellH);
\draw[highlight] (3.5\cellW, 1\cellH) rectangle +(2\cellW, 1\cellH);
\draw[highlight] (7\cellW, 0) rectangle +(1.5\cellW, 1\cellH);
\foreach \x/\w/\downText/\upText in {
        0/1/1/2n,
        1/1/2/2n-1,
        2/1.5/\cdots/\cdots,
        3.5/2/k-\ell/2n+1-k+\ell,
        5.5/1.5/\cdots/\cdots,
        7/1.5/k/2n+1-k,
        8.5/1.5/\cdots/\cdots,
        10/1/n-1/n+2,
        11/1/n/n+1
        } {
    \path (\x\cellW, 0) rectangle +(\w\cellW, 1\cellH) node[pos=0.5] {$\downText$};
    \path (\x\cellW, 1\cellH) rectangle +(\w\cellW, 1\cellH) node[pos=0.5] {$\upText$};
}
\node[circle,draw,inner sep=0] (plus) at (6.5\cellW, -0.6\cellH) {$+$};
\draw[thick] (5\cellW, 1\cellH) -- (plus);
\draw[thick] (7.75\cellW, 0) -- (plus);
\draw[->, thick] (plus) -- +(0, -0.8\cellH);
\node (le1) at (6.5\cellW, -1.7\cellH) {$\le 1$};
\end{tikzpicture}
\caption{The items $[2n]$ are arranged in a table, where the $k^{\text{th}}$ column is $B_k = \{k, 2n+1-k\}$.
$\ell$ is the smallest \emph{shift} such that $v_i(k) + v_i(2n+1-k+\ell) \leq 1$ for all $k$.}
\label{fig:ell}
\end{figure*}
\cite{simple} proved $\sum_{k \in A^+} v_i(\hat{B}_k) < |A^+| + \ell(\frac{1}{12}+\delta)$. For completeness, we repeat its proof in Appendix~\ref{app:mp}.

\begin{restatable}{lemma}{trickylemma}\cite{simple}\label{tricky-bound}
    $\sum_{k \in A^+} v_i(\hat{B}_k) < |A^+| + \ell(\frac{1}{12}+\delta)$.
\end{restatable}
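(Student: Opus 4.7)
The plan is to reduce the claim to a purely arithmetic statement about the initial bags, and then to execute a careful shift-and-cancel estimate on a pair of sums. First I would argue that $\hat B_k = B_k$ for every $k \in A^+$: since $v_i(B_k) > 1 \ge 3/4+\delta = \alpha$, the pair $(i, B_k)$ would be valid for the first while loop of \cref{algo:gt} as long as both are present, so by the standing assumption that $i$ never receives a bag, each such $B_k$ must be allocated to some other agent during that phase, before any filler good is added. Consequently $\hat B_k = B_k$, and the lemma reduces to showing $\sum_{k \in A^+} v_i(B_k) < |A^+| + \ell(1/12 + \delta)$.

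Next I would collect the two pointwise inputs that drive the counting. By \cref{upper-13}, every $k \in A^+$ satisfies $v_i(2n+1-k) \le 1/3$; and by the inapplicability of $R_1^{3/4+\delta}$, $v_i(k) \le v_i(1) < 3/4+\delta$ for every $k$. A short monotonicity argument also shows that $A^+$ is a prefix $[1,t]$ of $[n]$, because shrinking $k$ weakly increases both $v_i(k)$ and $v_i(2n+1-k)$.

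The main step is then to bound $\sum_{k=1}^{t} v_i(B_k) = \sum_{k=1}^{t} v_i(k) + \sum_{k=1}^{t} v_i(2n+1-k)$. When $t > \ell$, I would apply the defining shift inequality $v_i(k) < 1 - v_i(2n+1-k+\ell)$ on the indices $k \in (\ell, t]$; after re-indexing by $j = k-\ell$, the resulting sum $\sum_{j=1}^{t-\ell} v_i(2n+1-j)$ matches the first $t-\ell$ terms of $\sum_{k=1}^{t} v_i(2n+1-k)$, so those cancel and only the last $\ell$ terms of the second-coordinate sum survive. Each survivor is a good of rank at least $2n+1-t$ and thus at most $1/3$ by \cref{upper-13}, while the untouched first-coordinate piece $\sum_{k=1}^{\ell} v_i(k)$ is at most $\ell(3/4+\delta)$. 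Collecting the $(t-\ell)$ coming from the $1-\cdots$ terms, the total is at most $t + \ell\bigl((3/4+\delta) + 1/3 - 1\bigr) = |A^+| + \ell(1/12+\delta)$. The sub-case $t \le \ell$ needs no shift: the two pointwise bounds already give $v_i(B_k) < 1 + 1/12 + \delta$, and summing over the at most $\ell$ elements of $A^+$ immediately yields the claim.

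The principal obstacle is the bookkeeping in the $t > \ell$ case, in particular ensuring that the re-indexed sum lines up with an initial segment of $\sum_{k=1}^{t} v_i(2n+1-k)$ so that the cancellation leaves exactly the tail of $\ell$ low-ranked goods, each controlled by \cref{upper-13}. Once this alignment is organized, the target constant $1/12+\delta$ falls out directly from $3/4 + \delta + 1/3 - 1$.
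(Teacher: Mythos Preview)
Your overall strategy---reduce to the initial bags via $\hat B_k = B_k$, then use the shift inequality from the definition of $\ell$ to cancel most of the second-coordinate sum against most of the first-coordinate sum---is exactly the mechanism the paper uses. However, there is a genuine gap: the claim that $A^+$ is a prefix $[1,t]$ is false, and your justification for it is backwards. As $k$ shrinks, $v_i(k)$ does increase, but $2n+1-k$ \emph{grows}, so $v_i(2n+1-k)$ \emph{decreases}; the two coordinates of $v_i(B_k)$ move in opposite directions and $k \mapsto v_i(B_k)$ is not monotone. Concretely, one can have $v_i(1)=v_i(2)=v_i(3)$ all slightly below $3/4$, with $v_i(4)$ a hair above $1/3$ and $v_i(5),v_i(6),\ldots$ smaller, so that $v_i(B_3)>1$ while $v_i(B_1),v_i(B_2)\le 1$; then $A^+=\{3\}$, not a prefix. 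Once $A^+$ fails to be an interval, your re-indexing $j=k-\ell$ no longer aligns $\sum v_i(2n+1-k+\ell)$ with an initial segment of the second-coordinate sum, and the cancellation collapses.

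The fix---and this is exactly what the paper does---is to drop the prefix assumption and work with the ordered enumeration $A^+=\{g_1<\cdots<g_{|A^+|}\}$. Pair the first coordinate at $g_{k+\ell}$ with the second coordinate at $g_k$: since $g_{k+\ell}\ge g_k+\ell$, the shift inequality applied at index $g_k+\ell$ gives $v_i(g_{k+\ell})+v_i(2n+1-g_k)<1$. Summing over $k\in[|A^+|-\ell]$ yields $\sum_{k\in A^+\setminus S}v_i(k)+\sum_{k\in A^+\setminus L}v_i(2n+1-k)<|A^+|-\ell$, where $S$ and $L$ are the $\ell$ smallest and $\ell$ largest indices of $A^+$. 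The leftover $\sum_{k\in S}v_i(k)+\sum_{k\in L}v_i(2n+1-k)$ is then bounded by $\ell(3/4+\delta)+\ell/3$ using $R_1$-irreducibility and \cref{upper-13}, exactly as in your endgame. So your arithmetic is right; what is missing is the observation that the ``shift by $\ell$ in position within $A^+$'' dominates the ``shift by $\ell$ in the good index'', which is what makes the pairing work without any interval structure on $A^+$.
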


\begin{observation}\label{upper-56}
    For all $k \in A^-$, $v_i(\hat{B}_k) < \frac{5}{6} + 2\delta$.
\end{observation}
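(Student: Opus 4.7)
The plan is to follow the same template as \cref{upper-bound-1} and the claim inside \cref{lemma-2}, but now exploit the sharper bound on large-indexed goods available for $i \in N^2$. Split into two cases based on whether any extra goods were added to $B_k$ during the bag-filling phase.

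First, if $\hat{B}_k = B_k$, then $k \in A^-$ gives $v_i(\hat{B}_k) = v_i(B_k) < 3/4 + \delta < 5/6 + 2\delta$, so the bound holds trivially.

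Otherwise $B_k \subsetneq \hat{B}_k$, so at least one good was added to $B_k$ during bag-filling. Let $g$ be the last good added to $\hat{B}_k$. Since $i$ does not receive any bag by assumption and was therefore an unsatisfied agent throughout the algorithm, the fact that $g$ was added (rather than $\hat{B}_k \setminus \{g\}$ being allocated) means $v_i(\hat{B}_k \setminus \{g\}) < 3/4 + \delta$. Moreover, goods added during bag-filling come from $M \setminus [2n]$, so $g \geq 2n+1$, and hence $v_i(g) \leq v_i(2n+1) < 1/12 + \delta$ by \cref{obs:1-12} (using that $i \in N^2$). Combining,
\begin{align*}
v_i(\hat{B}_k) = v_i(\hat{B}_k \setminus \{g\}) + v_i(g) < \left(\tfrac{3}{4} + \delta\right) + \left(\tfrac{1}{12} + \delta\right) = \tfrac{5}{6} + 2\delta.
\end{align*}

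There is really no obstacle here: the only subtlety is confirming that the stopping condition of the bag-filling loop gives $v_i(\hat{B}_k \setminus \{g\}) < 3/4 + \delta$ for the unsatisfied agent $i$, and that \cref{obs:1-12} applies because $i \in N^2$ (which is where the improvement from $1/4 + \delta/3$ to $1/12 + \delta$ on the per-good bound comes from, and in turn yields $5/6 + 2\delta$ instead of $1 + 4\delta/3$).
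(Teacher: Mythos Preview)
Your proof is correct and essentially identical to the paper's: both split into the cases $\hat{B}_k = B_k$ and $B_k \subsetneq \hat{B}_k$, and in the latter bound the last added good $g$ via $v_i(g) \le v_i(2n+1) < 1/12 + \delta$ from \cref{obs:1-12}, yielding $v_i(\hat{B}_k) < (3/4+\delta)+(1/12+\delta) = 5/6 + 2\delta$.
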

\begin{proof}
    If $\hat{B}_k = B_k$, then $v_i(\hat{B}_k) < 3/4 + \delta < 5/6 + 2\delta$. Otherwise, let $g$ be the last good added to $\hat{B}_k$. Note that $v_i(\hat{B}_k \setminus g) < 3/4 + \delta$, otherwise the algorithm would assign $\hat{B}_k \setminus g$ to agent $i$ instead of adding $g$ to it. We have
    \begin{align*}
        v_i(\hat{B}_k) &= v_i (\hat{B}_k \setminus g) + v_i(g) \\
        &< (\frac{3}{4}+\delta) + v_i(2n+1) &\tag{$v_i (\hat{B}_k \setminus g) < \frac{3}{4} +\delta$ and $v_i(g) \leq v_i(2n+1)$} \\
        &< (\frac{3}{4}+\delta) + (\frac{1}{12}+\delta)= \frac{5}{6} + 2\delta. &\tag{$v_i(2n+1) < \frac{1}{12} +\delta$ by \cref{obs:1-12}}
    \end{align*}
\end{proof}
\begin{observation} \label{lower-12}
    For all $k \in [n]$, $v_i(B_k) > \frac{1}{2} - 2\delta$.
\end{observation}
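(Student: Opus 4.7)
The plan is a case split on whether $k \le k^*$ or $k > k^*$, where $k^* \in [n]$ is any index with $v_i(B_{k^*}) > 1$; such a $k^*$ exists by the definition of $i \in N^2$. The workhorse preparatory step will be to extract a lower bound on $v_i(2n-k^*+1)$: since $R_1^{3/4+\delta}$ is inapplicable, $v_i(k^*) \le v_i(1) < 3/4 + \delta$ by ordering, and combining this with $v_i(k^*) + v_i(2n-k^*+1) = v_i(B_{k^*}) > 1$ rearranges to $v_i(2n-k^*+1) > 1/4 - \delta$.

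For the case $k \le k^*$, I would invoke \cref{upper-13} at $k^*$: since $v_i(B_{k^*}) > 1$, it yields $v_i(k^*) > 2/3$. Ordering then gives $v_i(B_k) \ge v_i(k) \ge v_i(k^*) > 2/3 > 1/2 - 2\delta$ (valid for any $\delta \ge 0$), closing the case.

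For the case $k^* < k \le n$, the key step is to check that both indices in $B_k = \{k, 2n-k+1\}$ are at most $2n-k^*+1$: indeed $k \le n \le 2n-k^*$ (using $k^* \le n$) and $2n-k+1 \le 2n-k^*$ (using $k > k^*$). Ordering therefore yields both $v_i(k) \ge v_i(2n-k^*+1)$ and $v_i(2n-k+1) \ge v_i(2n-k^*+1)$, and combining with the workhorse inequality $v_i(2n-k^*+1) > 1/4 - \delta$ gives $v_i(B_k) > 2(1/4 - \delta) = 1/2 - 2\delta$.

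I do not expect a real obstacle here: the whole argument reduces to the definition of $N^2$, the $R_1^{3/4+\delta}$-irreducibility bound on $v_i(1)$, and a couple of elementary index comparisons. The only point requiring a bit of care is verifying the two index inequalities in the second case; once these are in place, the doubling of a single lower bound $v_i(2n-k^*+1) > 1/4 - \delta$ is exactly what produces the $1/2 - 2\delta$ target.
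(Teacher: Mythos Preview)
Your proposal is correct and essentially identical to the paper's proof: the paper also picks an index $t$ with $v_i(B_t)>1$ (it happens to take the smallest one, but this is irrelevant), invokes \cref{upper-13} to get $v_i(t)>2/3$ for the case $k\le t$, and for $k>t$ uses $v_i(t)<3/4+\delta$ to deduce $v_i(2n-t+1)>1/4-\delta$ and then bounds both goods in $B_k$ below by $v_i(2n-t+1)$. Your index verifications are exactly the ones needed.
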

\begin{proof}
    Let $t$ be smallest such that $v_i(B_t) > 1$. By \cref{upper-13}, $v_i(t) > \frac{2}{3}$. Therefore, for all $k \leq t$,
    \begin{align*}
        v_i(B_k) \geq v_i(k) \geq v_i(t) > \frac{2}{3} > \frac{1}{2}.
    \end{align*}

    Note that $v_i(t) + v_i(2n-t+1) > 1$ and by \cref{thm:vr-upper-bounds}, $v_i(t) < 3/4+\delta$. Thus, $v_i(2n-t+1) > 1/4-\delta$. For all $k > t$, we have
    \begin{align*}
        v_i(B_k) &= v_i(k) + v_i(2n-k+1) \\
        &\geq 2 \cdot v_i(2n-t+1) &\tag{$k < 2n-k+1 \leq 2n-t+1$}\\
        &> \frac{1}{2} - 2\delta. &\tag{$v_i(2n-t+1) > \frac{1}{4}-\delta$}
    \end{align*}  
\end{proof}
\begin{observation}\label{small-goods-bound}
    $v_i(M \setminus [2n]) > \ell(\frac{1}{4}-\delta)$.
\end{observation}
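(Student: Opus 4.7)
The plan is to exploit the normalization $v_i(M) = n$ together with the shift-$\ell$ structure. By the minimality of $\ell$, summing the strict inequalities $v_i(k) + v_i(2n-k+1+\ell) < 1$ over $k \in [\ell+1, n]$ gives $\sum_{k=\ell+1}^n v_i(k) + \sum_{j=n+\ell+1}^{2n} v_i(j) < n - \ell$. Combined with the decomposition $v_i(M) = v_i([\ell]) + \sum_{k=\ell+1}^n v_i(k) + v_i([n+1,n+\ell]) + \sum_{j=n+\ell+1}^{2n} v_i(j) + v_i(M \setminus [2n])$, this immediately yields the preliminary bound
\[
v_i(M \setminus [2n]) \;>\; \ell - v_i([\ell]) - v_i([n+1, n+\ell]).
\]

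Next I would bound the remaining ``unused-good'' mass $v_i([\ell]) + v_i([n+1, n+\ell])$ by $\ell(3/4 + \delta)$, which would immediately imply the claim since $\ell - \ell(3/4+\delta) = \ell(1/4-\delta)$. A key structural observation is that $v_i(\ell) \ge 1/2$, derived from the failure of the shift condition at $\ell-1$: there some pair $v_i(k^*) + v_i(2n - k^* + \ell)$ with $k^* \in [\ell, n]$ sums to at least $1$, while both members are bounded by $v_i(\ell)$. Consequently every good in $[\ell]$ has value in $[1/2, 3/4+\delta)$, and at most two such goods share a single MMS bundle of value $1$ (with equality forcing both to equal exactly $1/2$ and constitute the whole bundle). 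Leveraging this structural constraint, together with the reduction-rule bound $v_i(k) < 3/4 + \delta$ and a careful accounting of how goods in $[n+1, n+\ell]$ distribute across the MMS partition, should yield the required bound.

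The main obstacle will be obtaining a sufficiently tight bound on $v_i([n+1, n+\ell])$. Straightforward arguments using only $v_i(k) < 3/4+\delta$ and non-negativity are too weak: small numerical examples show that $v_i([\ell]) + v_i([n+1, n+\ell])$ can in fact exceed $\ell(3/4+\delta)$, so the proof must instead exploit additional slack --- either by bounding some shifted bag values more tightly than by the generic $<1$, or by using the global constraint that the MMS partition has $n$ bundles each of value exactly $1$ to force middle-range goods in $[n+1, n+\ell]$ to reside in bundles without a large $[\ell]$-good (where their contribution must be offset by a reduction elsewhere). Getting this accounting right is the crux of the argument, and I would expect it to rely on a case split according to how many MMS bundles contain a good from $[\ell]$ and on distinguishing the (degenerate) case of two $1/2$-valued goods sharing a bundle from the generic case.
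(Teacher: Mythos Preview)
Your decomposition via summing the shift inequalities is correct and does give the preliminary bound
\[
v_i(M\setminus[2n]) \;>\; \ell - v_i([\ell]) - v_i([n+1,n+\ell]).
\]
But the rest is not a proof. You need $v_i([\ell]) + v_i([n+1,n+\ell]) \le \ell(\tfrac34+\delta)$, and you already observe that this inequality can fail. The proposed repairs --- squeezing extra slack out of the shift inequalities, or running a case split on how the $[\ell]$-goods distribute across the MMS bundles --- are not carried out, and I do not see how the first one can work: nothing in the setup gives a usable lower bound on the slack $1 - v_i(k) - v_i(2n-k+1+\ell)$. The second route, once you try to make it precise, is essentially a re-derivation of the MMS-partition argument, reached by a longer path.

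The paper bypasses your accounting problem entirely by working with the normalized MMS partition $P$ from the start. By minimality of $\ell$ there is some $k\in\{\ell,\dots,n\}$ with $v_i(k)+v_i(2n-k+\ell)>1$, so any two distinct goods in $[2n-k+\ell]$ sum to more than~$1$ and hence lie in different bundles of $P$ (each bundle has value exactly~$1$). In particular $1,\dots,k$ occupy $k$ distinct bundles, none of which can contain any other good from $[2n-k+\ell]$. Only the $k-\ell$ goods $2n-k+\ell+1,\dots,2n$ remain in $[2n]$ to enter these $k$ bundles, so at least $\ell$ of them meet $[2n]$ in a single good $j$, with $v_i(j)<\tfrac34+\delta$. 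Each such bundle therefore contributes $v_i(P_j)-v_i(j)>\tfrac14-\delta$ entirely inside $M\setminus[2n]$, and summing over these $\ell$ bundles gives the claim directly --- no case split, no balancing of leftover terms.
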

\begin{proof}
    By the definition of $\ell$, there exists a $k \in \{\ell, \ldots, n\}$ such that $v_i(k) + v_i(2n-k+\ell) > 1$. Therefore, for all $j \leq k$ and $t \leq 2n-k+\ell$, $v_i(j) + v_i(t) > 1$. Let $P = (P_1, \ldots, P_n)$ be an \mms~partition of agent $i$. For $j \in [k]$, let $j \in P_j$. Note that for different $j,j' \in [k]$, $P_j$ and $P_{j'}$ are different since $v_i(j) + v_i(j') > 1 =v_i(P_j)$. Also note that for every good $g \in [2n-k+\ell]$ and $j \in [k]$, $g \notin P_j$, otherwise $v_i(P_j)>1$. Therefore, there are at least $\ell$ bundles like $P_j$ among $P_1, \ldots, P_k$ such that $P_j \cap [2n] = \{j\}$. We have
    \begin{align*}
        v_i(M \setminus [2n]) &\geq \sum_{j \in [k]} v_i(P_j \setminus \{j\}) \geq \sum_{j \in [\ell]} \left( v_i(P_j) - v_i(j) \right) \\
        &> \sum_{j \in [\ell]} \left(1 - (\frac{3}{4}+\delta)\right) = \ell(\frac{1}{4}-\delta).
        &\tag{$v_i(j) < \frac{3}{4}+\delta$ by \cref{thm:vr-upper-bounds}}
    \end{align*}
\end{proof}
We are now ready to prove \cref{lemma-3}.
\begin{lemma}\label{lemma-3}
    For $\delta \leq 0.011$, given a $\delta$-ONI instance with $|N^1_1| \leq \lbnone$, all agents $i \in N^2$ receive a bag of value at least $(\frac{3}{4}+\delta)$ at the end of \cref{algo:gt}.
\end{lemma}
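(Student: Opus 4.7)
\emph{Proof plan.} My plan is to proceed by contradiction: I suppose some $i \in N^2$ is unassigned at the end of \cref{algo:gt} and derive a numerical impossibility. To set up, I partition $[n]$ into $A^+$, $A^-$, and $A^0 := [n]\setminus(A^+\cup A^-)$ (the ``medium'' indices with $v_i(B_k) \in [\alpha,1]$, where $\alpha := 3/4+\delta$). Two preliminary observations dispose of easy cases: since any bag valued at least $\alpha$ by $i$ is handed out in the first while-loop without any filling, $\hat{B}_k = B_k$ for all $k \in A^+\cup A^0$; and the degenerate case $|A^-|=0$ forces $i$ to be matched in the first loop (because $|N^1_1|<n$), so I may assume $|A^-|\ge 1$. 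Finally, an unassigned $i$ implies that some bag is also unassigned, which is possible only when the inner while-loop exhausts $M\setminus[2n]$; hence every good ends up in some $\hat{B}_k$, so $\sum_k v_i(\hat{B}_k) = v_i(M) = n$.

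The core of the argument is to extract two opposing linear constraints on $\ell$ and $|A^-|$. I would derive the first by plugging \cref{tricky-bound} on $A^+$, the trivial $v_i(B_k)\le1$ on $A^0$, and \cref{upper-56} on $A^-$ into the equality $\sum_k v_i(\hat{B}_k) = n$; using $|A^+|+|A^0|+|A^-|=n$, this collapses to
\[
|A^-|\bigl(\tfrac{1}{6}-2\delta\bigr) < \ell\bigl(\tfrac{1}{12}+\delta\bigr).
\]
For the second, I would use that the total fill mass equals $v_i(M\setminus[2n])$ (all goods are packed), that only $A^-$ bags are ever filled, and that each filled bag accumulates mass strictly less than $\bigl(\tfrac{5}{6}+2\delta\bigr) - \bigl(\tfrac{1}{2}-2\delta\bigr) = \tfrac{1}{3}+4\delta$ by combining \cref{upper-56} with \cref{lower-12}; together with \cref{small-goods-bound} this yields
\[
\ell\bigl(\tfrac{1}{4}-\delta\bigr) < |A^-|\bigl(\tfrac{1}{3}+4\delta\bigr).
\]

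The final step is to multiply these two inequalities and cancel the positive factor $|A^-|\ell$, which reduces the desired contradiction to the purely numerical statement $\bigl(\tfrac{1}{6}-2\delta\bigr)\bigl(\tfrac{1}{4}-\delta\bigr) \ge \bigl(\tfrac{1}{12}+\delta\bigr)\bigl(\tfrac{1}{3}+4\delta\bigr)$, equivalently $\tfrac{1}{72} \ge \tfrac{4\delta}{3} + 2\delta^2$; this holds comfortably for $\delta \le \lbdelta$ (and, up to the slack in the quadratic correction, for all $\delta$ up to roughly $0.011$). The main obstacle is aligning these two ``orthogonal'' inequalities — one constraining $|A^-|$ through the mass inside $[2n]$, the other constraining $\ell$ through the mass outside $[2n]$ — so that the squeeze is numerically feasible. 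This hinges on the sharp bounds $v_i(2n+1)<\tfrac{1}{12}+\delta$ (from \cref{obs:1-12}) and $v_i(B_k)>\tfrac{1}{2}-2\delta$ (from \cref{lower-12}); without them the constants in the two inequalities would not line up.
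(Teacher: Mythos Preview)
Your proof is correct and follows the paper's approach exactly: partition $[n]$ into $A^+$, $A^0$, $A^-$, derive the two opposing linear inequalities on $|A^-|$ and $\ell$ from \cref{tricky-bound}, \cref{upper-56}, \cref{lower-12}, and \cref{small-goods-bound}, and combine them for a numerical contradiction. One small remark: your fill-mass bound $\tfrac{1}{3}+4\delta$ is the correct consequence of \cref{upper-56} and \cref{lower-12} (the paper's printed $\tfrac{1}{3}+3\delta$ appears to be a typo), and with it the actual threshold is $\delta \approx 0.0103$ rather than $0.011$---your hedge ``roughly $0.011$'' is appropriate, and in any case only $\delta \le \lbdelta$ is ever used downstream.
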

\begin{proof}
    It suffices to prove that all agents $i \in N^2$ receive a bag at the end of \cref{algo:gt}. Towards a contradiction, assume that $i \in N^2$ does not receive any bag. For all $k \in N \setminus (A^- \cup A^+)$, since $v_i(B_k) \ge 3/4 + \delta$ and $i$ has not received a bag, $\hat{B}_k = B_k$. Thus, for all $k \in N \setminus (A^- \cup A^+)$
    \begin{align}
        v_i(\hat{B}_k) = v_i(B_k) \leq 1. \label{ineq-easy}
    \end{align}
    We have
    \begin{align*}
        n &= v_i(M) = \sum_{k \in A^-} v_i(\hat{B}_k) + \sum_{k \in A^+} v_i(\hat{B}_k) + \sum_{k \in N \setminus (A^- \cup A^+)} v_i(\hat{B}_k)\\
        &< \left(|A^-| (\frac{5}{6} + 2\delta)\right) + \left(|A^+| + \ell(\frac{1}{12}+\delta)\right) + \left(n - |A^-| - |A^+|\right) &\tag{\cref{upper-56}, \cref{tricky-bound} and Inequality \eqref{ineq-easy}} \\
        &= n - |A^-|(\frac{1}{6} - 2\delta) + \ell(\frac{1}{12}+\delta). 
    \end{align*}
    Therefore, we have
    \begin{align}
        \frac{|A^-|}{\ell} &< \frac{1/12+\delta}{1/6 - 2\delta} \label{ineq1}
    \end{align}
    Next, we bound the value of the goods in $M \setminus [2n]$ and contradict Inequality \eqref{ineq1}.
    We have,
    \begin{align*}
        \ell(\frac{1}{4} - \delta) &\leq v_i(M \setminus [2n]) &\tag{\cref{small-goods-bound}}\\
        &= \sum_{k \in A^-} \left(v_i(\hat{B}_k) - v_i(B_k)\right) &\tag{$M \setminus [2n] = \bigcup_{k \in A^-} (\hat{B}_k \setminus B_k)$}\\
        &< |A^-|\left((\frac{5}{6}+\delta) - (\frac{1}{2}-2\delta)\right) &\tag{\cref{upper-56} and \cref{lower-12}}\\
        &= |A^-| \cdot (\frac{1}{3} + 3\delta).
    \end{align*}
    Thus, 
    \begin{align}
        \frac{|A^-|}{\ell} &> \frac{1/4 - \delta}{1/3 + 3\delta} \label{ineq2}
    \end{align}
    Inequalities \eqref{ineq1} and \eqref{ineq2} imply that $\frac{1/12+\delta}{1/6 - 2\delta}>\frac{1/4 - \delta}{1/3 + 3\delta}$, which is a contradiction with $\delta \leq 0.011$. Thus, all agents $i \in N^2$ receive a bag at the end of \cref{algo:gt}.
\end{proof}

\begin{theorem}\label{thm:3}
    Given any $\delta \leq 0.011$, for all $\delta$-ONI instances where $|N^1_1| \leq \lbnone$, \cref{algo:gt} returns a $(\frac{3}{4}+\delta)$-\mms~allocation.
\end{theorem}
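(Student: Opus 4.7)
The plan is straightforward: the theorem follows immediately by combining \cref{lemma-1}, \cref{lemma-2}, and \cref{lemma-3}, which together cover every agent in $N$.

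First, I would observe that the agent set $N$ partitions as $N = N^1_1 \cup N^1_2 \cup N^2$, where $N^1 = N^1_1 \cup N^1_2$ is defined just before the algorithm and $N^2 = N \setminus N^1$. This partition is fixed at the start of Phase~2 and does not change during the execution of \cref{algo:gt}.

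Next, I would apply the three lemmas in turn. Since the instance is $\delta$-ONI with $|N^1_1| \le \lbnone$ and $\delta \le 0.011$ (so in particular $\delta \le 1/4$), the hypotheses of \cref{lemma-1}, \cref{lemma-2}, and \cref{lemma-3} are all satisfied. Hence every agent $i \in N^1_1$ receives a bag of value at least $(3/4+\delta)\mms_i$ (\cref{lemma-1}), every agent $i \in N^1_2$ receives a bag of value at least $(3/4+\delta)\mms_i$ (\cref{lemma-2}), and every agent $i \in N^2$ receives a bag of value at least $(3/4+\delta)\mms_i$ (\cref{lemma-3}; note that since the instance is normalized, $\mms_i = 1$, matching the statement in \cref{lemma-3}).

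Finally, since every agent receives a bundle of value at least $(3/4+\delta)\mms_i$, the output of \cref{algo:gt} is a $(3/4+\delta)$-MMS allocation. The only thing to sanity-check is that the bag-filling phase never runs out of goods before satisfying the agents covered by these lemmas, but that is exactly what the contradiction arguments inside each lemma rule out. No further work is needed — the three lemmas were designed precisely so that their union yields this theorem, and there is no separate obstacle at this stage.
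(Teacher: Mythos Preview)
Your proposal is correct and matches the paper's own proof essentially verbatim: the paper simply notes that $N = N^1_1 \cup N^1_2 \cup N^2$ and invokes \cref{lemma-1}, \cref{lemma-2}, and \cref{lemma-3} to conclude that every agent receives a bag of value at least $(3/4+\delta)\mms_i$. Your additional remarks about checking the hypotheses ($\delta \le 0.011 \le 1/4$) and the normalization are fine but not required.
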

\begin{proof}
    Since $N = N^1_1 \cup N^1_2 \cup N^2$, by Lemmas \ref{lemma-1}, \ref{lemma-2} and \ref{lemma-3} all agents receive a bag of value at least $(\frac{3}{4}+\delta) \cdot \mms_i$ in \cref{algo:gt}.
\end{proof}    

\subsection{\boldmath Case $2$: $|N^1_1| > \lbnone$}\label{large-N11}
In this case, we run Algorithm \ref{algo:new}. Starting from an ordered normalized $(3/4 + \delta)$-irreducible instance, as long as there is a bag $B_k$ with value at least $3/4+\delta$ for some agent, we give $B_k$ to such an agent. The priority is with agents who initially belonged to $N^1_2 \cup N^2$. Therefore, in the remaining instance, all bags are of value less than $3/4+\delta$ for all the remaining agents.
We introduce one more reduction rule in this section. 
\begin{itemize}
    \item \textbf{$R^\alpha_5:$} If $v_i(1) + v_i(2) \geq \alpha$ for some $i \in N$, allocate $\{1,2\}$ to agent $i$ and remove $i$ from $N$. The priority is with agents in $N^1_2 \cup N^2$.
\end{itemize}
Starting from an ordered normalized $(3/4 + \delta)$-irreducible instance, after allocating bags of value at least $3/4 +\delta$ to some agents, we run $R^{3/4+\delta}_5$ as long as it is applicable. For ease of notation, we write $R_j$ instead of $R^{3/4+\delta}_j$ for $j \in [5]$. Then, we run $R_2$ and $R_3$ as long as they are applicable. 
Afterwards, for all $k \in [n]$, we initialize $C_k = \{k, 2n-k+1, 2n+k\}$.\footnote{Note that it is without loss of generality to assume $m \geq 3n$. If $m < 3n$, add dummy goods of value $0$ to everyone. The MMS value of the agents remains the same, and any $\alpha$-MMS allocation in the final instance is an $\alpha$-MMS allocation in the original instance after removing the dummy goods.} See Figure \ref{C-bags} for better intuition. Then, we do bag-filling. Let $\hat{C}_k$ be the result of bag-filling on bag $C_k$. The pseudocode of this algorithm is shown in \cref{algo:new}. 
\begin{figure}[t]\centering
\begin{tikzpicture}
[scale=1,
 good/.style={circle, draw=black, thick, minimum size=30pt},
]

\draw[black, very thick] (-0.4-0.25,0.8) rectangle (0.4+0.25,4.7);

\node[good]      at (0,4)      {$\scriptstyle{2n+1}$};
\node[good]      at (0,2.75)      {$\scriptstyle{2n}$};
\node[good]      at (0,1.5)      {$\scriptstyle{1}$};

\node at (0, 0.5) {$\scriptstyle{C_1}$};

\filldraw[color=black!60, fill=black!5, thick](1.6, 2.7) circle (0.02);
\filldraw[color=black!60, fill=black!5, thick](1.7, 2.7) circle (0.02);
\filldraw[color=black!60, fill=black!5, thick](1.8, 2.7) circle (0.02);

\draw[black, very thick] (-1.4-0.25 +4.5,0.8) rectangle (-0.6+0.25 +4.5,4.7);

\node[good]      at (0 +3.5,4)      {$\scriptstyle{2n+k}$};
\node[good, scale=0.74]      at (0 +3.5,2.75)      {$\scriptstyle{2n-k+1}$};
\node[good]      at (0 +3.5,1.5)      {$\scriptstyle{k}$};

\node at (0 +3.5, 0.5) {$\scriptstyle{C_k}$};

\filldraw[color=black!60, fill=black!5, thick](0.1+5, 2.7) circle (0.02);
\filldraw[color=black!60, fill=black!5, thick](0.2+5, 2.7) circle (0.02);
\filldraw[color=black!60, fill=black!5, thick](0.3+5, 2.7) circle (0.02);

\draw[black, very thick] (-0.4-5.25 +12,0.8) rectangle (0.4+0.25-5 +12,4.7);

\node[good]      at (0 +7,4)      {$\scriptstyle{3n}$};
\node[good]      at (0 +7,2.75)      {$\scriptstyle{n+1}$};
\node[good]      at (0 +7,1.5)      {$\scriptstyle{n}$};

\node at (0 +7, 0.5) {$\scriptstyle{C_n}$};

\end{tikzpicture}
\caption{Configuration of Bags $C_1, C_2, \ldots, C_n$}
\label{C-bags}
\end{figure}

\begin{algorithm}[!htb]
    \caption{$\mathtt{approxMMS2}(\mathcal{I}, \delta)$}
    \label{algo:new}
    \textbf{Input:} $\delta$-ONI instance $\mathcal{I}=(N,M,\mathcal{V})$ and factor $\delta$

    \textbf{Output:} Allocation $A = \langle A_1, \ldots, A_n \rangle$
    \begin{algorithmic}
        \State $B_i \leftarrow \{i, 2n-i+1\}_{i \in [n]}$
        \State $\mathcal{B} = \cup_{i \in [n]} \{B_i\}$
        \State $\alpha = 3/4 +\delta$
        \While{$\exists i \in N, B \in \mathcal{B}$ s.t. $v_i(B) \geq \alpha$}
            \State $i \leftarrow$ an arbitrary agent s.t. $v_i(B) \geq \alpha$, priority with agents in $N^1_2 \cup N^2$
            \State $A_i \leftarrow B$
            \State $\mathcal{B} \leftarrow \mathcal{B} \setminus \{B\}$
            \State $N \leftarrow N \setminus \{i\}$
            \State $M \leftarrow M \setminus B$
        \EndWhile
        \While{$R^\alpha_5(\alpha)$ is applicable}
            \State apply $R^\alpha_5(\alpha)$
        \EndWhile
        \While{$R^\alpha_2$ or $R^\alpha_3$ is applicable}
            \State apply $R^\alpha_k$ for smallest $k \in \{2,3\}$ s.t. $R^\alpha_k$ is applicable
        \EndWhile
        \State $n \leftarrow |N|$
        \State $C_i \leftarrow \{i, 2n-i+1, 2n+i\}_{i \in [n]}$
        \For {$k \leftarrow 1$ to $n$}
            \While {$\nexists i \in N$ s.t. $v_i(C_k) \geq \alpha$}
                \State $g \leftarrow$ an arbitrary good in $M \setminus [3n]$
                \State $C_k \leftarrow C_k \cup \{g\}$
                \State $M \leftarrow M \setminus \{g\}$
            \EndWhile
            \State $i \leftarrow$ an arbitrary agent s.t $v_i(C_k) \geq \alpha$, priority with agents in $N^1_2 \cup N^2$
            \State $A_i \leftarrow C_k$
            \State $N \leftarrow N \setminus \{i\}$
            \State $M \leftarrow M \setminus C_k$            
        \EndFor
        \State \Return $\langle A_1, \ldots, A_n \rangle$
    \end{algorithmic}
\end{algorithm}
\begin{lemma}\label{bad-bags}
    For all agents $i \in N^1_2 \cup N^2$ and bags $B$ which is allocated to an agent in $N^1_2 \cup N^2$ during \cref{algo:new}, $v_i(B) < 3/2+2\delta$. 
\end{lemma}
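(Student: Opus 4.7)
The plan is to bound $v_i(B) < 3/2 + 2\delta$ by case analysis on how $B$ was allocated in \cref{algo:new}. The bag $B$ comes from one of four mechanisms: (i) the first while loop assigning an initial bag $B_k = \{k, 2n-k+1\}$; (ii) $R_5^\alpha$ assigning the current top pair $\{1,2\}$; (iii) $R_2^\alpha$ or $R_3^\alpha$ assigning a triple $\{2n'-1, 2n', 2n'+1\}$ or quadruple $\{3n'-2, \ldots, 3n'+1\}$ of current indices; or (iv) a $C_k$ bag after bag-filling. Two standing bounds carry through: $v_i(1) < 3/4+\delta$ (from the inapplicability of $R_1^\alpha$, which persists as values never grow) and $v_i(2n+1) < 1/4$, obtained from the definition of $N^1_2$ (giving $v_i(2n+1) < 1/4 - 5\delta$) or from \cref{obs:1-12} (giving $v_i(2n+1) < 1/12 + \delta$ for $i \in N^2$).

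Cases (i) and (ii) are immediate: $|B| = 2$, so $v_i(B) \le 2 v_i(1) < 2(3/4+\delta) = 3/2 + 2\delta$.

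For (iii) with an $R_2^\alpha$ bag: when $R_2^\alpha$ fires, the $R_5^\alpha$-loop has already been exhausted, so $v_j(1) + v_j(2) < 3/4+\delta$ for every agent $j$ remaining at the end of step~2; moreover, the reductions in step~3 act only on current positions $\ge 3$, so the top two goods of the current ordering are preserved and this inequality persists. By monotonicity of $v_i$ along the ordered current instance, $v_i(2n'-1) + v_i(2n') \le v_i(1) + v_i(2) < 3/4+\delta$. A positional invariant---since the goods removed in steps~1 and~2 all come from the original $[2n]$, the good at current position $2n'+1$ always corresponds to an original good at position $\ge 2n+1$---yields $v_i(2n'+1) \le v_i(2n+1) < 1/4$, and hence $v_i(B) < 3/4+\delta + 1/4 < 3/2+2\delta$. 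The $R_3^\alpha$ case is analogous: at the moment $R_3^\alpha$ fires, $R_2^\alpha$ is inapplicable and so $v_i$ of the first three positions of the bag sums to $< 3/4+\delta$, while the fourth is bounded by $v_i(2n+1) < 1/4$ via the same positional invariant.

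For (iv), let $g$ be the last good added to $C_k$ during bag-filling. The algorithm added $g$ only because no remaining agent valued $C_k \setminus g$ at $\ge 3/4+\delta$, so $v_i(C_k \setminus g) < 3/4+\delta$; and $g$ was drawn from $M \setminus [3n']$ in an instance that remains $R_3^\alpha$-inapplicable, so \cref{thm:vr-upper-bounds} gives $v_i(g) < (3/4+\delta)/4 = 3/16 + \delta/4$. Therefore $v_i(C_k) < 3/4+\delta + 3/16+\delta/4 < 3/2 + 2\delta$. The main obstacle is the positional-invariant bookkeeping in case~(iii): it requires an inductive tracking through the interleaved $R_2^\alpha/R_3^\alpha$ applications, since $R_2^\alpha$ typically strips one good from $\{2n+1,\ldots,m\}$ alongside two from the remaining $[2n]$, whereas $R_3^\alpha$ strips four from $\{2n+1,\ldots,m\}$ when the agent count is sufficient, so the boundary between the two classes of remaining goods in the current ordering shifts differently under the two rules and must be controlled carefully to conclude that the trailing good of each such bag sits at original position $\ge 2n+1$.
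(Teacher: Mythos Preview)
Your case split matches the paper's, and cases (i) and (ii) are fine. But the positional invariant you lean on in case (iii) is both unnecessary and, as you half-suspected, does not survive step~3. Concretely: after an application of $R_3^\alpha$ with current agent count $n'\ge 3$, the four removed goods sit at current positions $3n'-2,\ldots,3n'+1\ge 2n'+1$, so none of the original-$[2n]$ goods is removed while the agent count drops to $n''=n'-1$. Now $2n'$ of the original-$[2n]$ goods still occupy current positions $1,\ldots,2n'$, so current position $2n''+1=2n'-1$ is an original-$[2n]$ good and your invariant fails. If $R_2^\alpha$ subsequently fires, its triple $\{2n''-1,2n'',2n''+1\}$ can consist entirely of original-$[2n]$ goods, and the bound $v_i(2n''+1)\le v_i(\text{original }2n{+}1)<1/4$ is simply unavailable. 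So the ``inductive tracking'' you flag as the main obstacle is not just tedious bookkeeping; the invariant you want is false.

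The paper sidesteps all of this with a much cruder bound: every single good satisfies $v_i(g)<3/4+\delta$ (from the initial $(3/4+\delta)$-irreducibility, which persists since values never increase). For an $R_2$ bag this gives $v_i(\{2n'-1,2n',2n'+1\})\le v_i(\{1,2\})+v_i(2n'+1)<(3/4+\delta)+(3/4+\delta)$; for an $R_3$ bag, $v_i(\{3n'-2,\ldots,3n'+1\})\le 2\,v_i(\{1,2\})<2(3/4+\delta)$. The only persistence fact needed is $v_i(\{1,2\})<3/4+\delta$, which you already argued. You should also cover the subcase $\hat C_k=C_k$ in (iv), which you skipped: the same one-line estimate $v_i(C_k)\le v_i(\{1,2\})+v_i(2n+k)<3/2+2\delta$ handles it.
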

\begin{proof}
    We prove the lemma by upper bounding the value of the bags allocated at each step.
    \begin{claim}\label{bad-bags-1}
        For all bags $B$ allocated to an agent before or during $R_5$, $v_i(B) < 3/2 +2\delta$. 
    \end{claim}
    \begin{claimproof}
    Since we start with a $(3/4 + \delta)$-irreducible instance, by \cref{thm:vr-upper-bounds}, for all goods $g$, $v_i(g) < 3/4 +\delta$. Therefore, for all the bags $B$ of size two, we have $v_i(B) < 3/2 +2\delta$.
    \end{claimproof}
    \begin{claim}\label{bad-bags-2}
        For all bags $B$ which is allocated to an agent during $R_2$, $v_i(B) < 3/2 +2\delta$. 
    \end{claim}
    \begin{claimproof}
    Note that when we run $R_2$, $R_5$ is not applicable. Therefore, $v_i(1) + v_i(2) < 3/4 +\delta$. Hence,
    $v_i(\{2n-1, 2n, 2n+1\}) \leq v_i(\{1,2\}) + v_i(2n+1) < 2(3/4+\delta) = 3/2 +2\delta$.
    \end{claimproof}
    \begin{claim}\label{bad-bags-3}
        For all bags $B$ which is allocated to an agent during $R_3$, $v_i(B) < 3/2 +2\delta$. 
    \end{claim}
    \begin{claimproof}
    Note that when we run $R_3$, $R_5$ is not applicable. Therefore, $v_i(1) + v_i(2) < 3/4 +\delta$. Hence,
    $v_i(\{3n-2, 3n-1, 3n, 3n+1\}) \leq 2v_i(\{1,2\}) < 3/2 +2\delta$.
    \end{claimproof}
    \begin{claim}\label{bad-bags-4}
        For all bags $B$ allocated to an agent during the bag-filling phase, $v_i(B) < 3/2 +2\delta$. 
    \end{claim}
    \begin{claimproof}
    If $B = \{k, 2n-k+1, 2n+k\}$, similar to the claims above, $v_i(B) \leq v_i(\{1,2\}) + v_i(2n+k) \leq 2(3/4 +\delta) = 3/2 +2\delta$. Otherwise, let $g$ be the last good added to $B$.
    We have $v_i(B \setminus g) < 3/4 +\delta$, otherwise $g$ would not be added to $B$.
    Therefore, we have $v_i(B) = v_i(B \setminus g) + v_i(g) < 2(3/4+\delta) = 3/2 +2\delta$.
    \end{claimproof}
    
    By Claims \ref{bad-bags-1}, \ref{bad-bags-2}, \ref{bad-bags-3} and \ref{bad-bags-4}, all bags that are allocated during \cref{algo:new} are of value less than $3/2 +2\delta$. Therefore, the lemma holds.
\end{proof}

\begin{lemma}\label{n12andn2}
    For $\delta \leq 1/20$, given a $\delta$-ONI instance with $|N^1_1| > \lbnone$, all agents in $N^1_2 \cup N^2$ receive a bag of value at least $3/4 +\delta$ at the end of \cref{algo:new}.
\end{lemma}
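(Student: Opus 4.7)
I plan to argue by contradiction: suppose some agent $i \in N^1_2 \cup N^2$ does not receive a bundle of value at least $\alpha := 3/4 + \delta$ during \cref{algo:new}. I will then sum the values to $i$ of all bundles produced by the algorithm and derive a numerical impossibility that invokes the hypothesis $|N^1_1| > \lbnone$.

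Since \cref{algo:new} gives $N^1_2 \cup N^2$ priority in every allocation step (the initial while loop, $R_5$, and bag-filling; for $R_2$ and $R_3$ one makes the natural choice of preferring an $N^1_2 \cup N^2$-agent whenever one qualifies), any bundle handed to an $N^1_1$-agent must satisfy $v_i(B) < \alpha$ -- otherwise $i$ would have been chosen first. By \cref{bad-bags}, every bundle handed to another agent in $N^1_2 \cup N^2$ satisfies $v_i(B) < 3/2 + 2\delta$. Since $i$ is never satisfied, the bag-filling for-loop must stall at some iteration $k_0$, producing a partial bag $C_{k_0}$ with $v_i(C_{k_0}) < \alpha$ (the stall condition). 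The untouched initial bags $C_j = \{j, 2n'-j+1, 2n'+j\}$ for $j > k_0$ can be bounded by
$v_i(C_j) \le v_i(1) + v_i(2) + v_i(2n'+1) < \alpha + \alpha/3 = 4\alpha/3 < 3/2 + 2\delta$,
using $v_i(1)+v_i(2) < \alpha$ (valid once $R_5$ is exhausted) and $v_i(2n'+1) < \alpha/3$ (valid once $R_2$ is exhausted, as in \cref{thm:vr-upper-bounds}). So every bundle either carries an $\alpha$-bound (the $N^1_1$-allocated bundles and the stalled $C_{k_0}$) or the weaker $(3/2+2\delta)$-bound (bundles to $N^1_2 \cup N^2 \setminus \{i\}$ and untouched initial $C_j$'s).

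Letting $P$ count bundles with the $\alpha$-bound, the partition identity $v_i(M) = n$ gives
\begin{align*}
n \;<\; P\alpha + (n-P)(3/2 + 2\delta),
\end{align*}
which rearranges to $P (3/4+\delta) > n (1/2 + 2\delta)$. Priority forces at most $|N^1_2 \cup N^2|-1$ bundles to go to $N^1_2 \cup N^2 \setminus \{i\}$, so $P \ge |N^1_1| - x + 1$, where $x$ counts unsatisfied $N^1_1$-agents. A direct algebraic check shows that for $\delta \le 1/20$, $\lbnone \cdot (3/4+\delta) \ge n(1/2 + 2\delta)$ (with equality precisely at $\delta = 1/20$). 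Combined with $|N^1_1| > \lbnone$, this forces the contradiction once we verify that $x$ is small enough (equivalently, that untouched bags do not swallow too much of the $\alpha$-count).

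The main obstacle is the precise accounting of the untouched $C_j$'s: they carry only the weaker $4\alpha/3$-bound, so they sit in the $(3/2+2\delta)$-bucket rather than the $\alpha$-bucket, and one must argue that their count cannot grow so large that $P$ drops below the threshold $n(1/2+2\delta)/(3/4+\delta)$. The priority structure helps here: each unsatisfied $N^1_1$-agent corresponds to a bundle whose value to $i$ is at most $4\alpha/3$, and the sum of these together with the allocated $N^1_2 \cup N^2$-bundles still cannot compensate for the $n(1/2+2\delta)$ deficit that the arithmetic demands. A secondary subtlety is the priority argument for $R_2$ and $R_3$, which simply requires observing that the algorithm is free to pick any qualifying $N^1_2 \cup N^2$-agent; the final algebraic verification of $\lbnone \cdot (3/4+\delta) \ge n(1/2 + 2\delta)$ for $\delta \le 1/20$ is routine.
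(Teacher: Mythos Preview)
Your overall strategy matches the paper's: assume $i \in N^1_2 \cup N^2$ is left unsatisfied, bound the value to $i$ of every bag, sum, and contradict $v_i(M)=n$. But the execution has a real gap, precisely at the point you yourself flag as the ``main obstacle.''

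The paper's proof hinges on a single clean claim (its Claim~\ref{cl1}): every bag that is either unallocated or allocated to an $N^1_1$-agent has value strictly below $\alpha=3/4+\delta$ for $i$. The justification is that priority lies with $N^1_2\cup N^2$ and that the algorithm allocates every bag worth $\ge\alpha$ to some remaining agent; consequently no unallocated bag can be worth $\ge\alpha$ to the still-present agent $i$. With that claim, there are at least $|N^1_1|$ many $\alpha$-bounded bags, and the inequality
\[
n \;<\; (n-|N^1_1|)(3/2+2\delta)\;+\;|N^1_1|(3/4+\delta)\;=\;(3/4+\delta)(2n-|N^1_1|)
\]
immediately contradicts $|N^1_1|>\lbnone$ for $\delta\le 1/20$.

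You do not establish this claim. Instead you bound the untouched $C_j$'s only by $4\alpha/3$ and throw them into the $(3/2+2\delta)$-bucket. That is fatal: your count $P$ of $\alpha$-bounded bags then equals $(\#\text{ bags to }N^1_1)+1$, and nothing prevents this from being tiny---if the algorithm stalls early, almost all $C_j$'s are untouched and almost no $N^1_1$-agent has received anything. Your closing paragraph acknowledges this but offers no mechanism to bound the number of untouched bags; the sentence ``each unsatisfied $N^1_1$-agent corresponds to a bundle whose value to $i$ is at most $4\alpha/3$'' does not produce a lower bound on $P$. Concretely, at $\delta=1/20$ one checks that with $|N^1_1|\approx 3n/4$ and all of $S_3$ in the weak bucket, the sum bound exceeds $n$ for $n\ge 3$, so no contradiction arises.

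A secondary issue: your rearrangement has the wrong sign. From $n<P\alpha+(n-P)(2\alpha)$ one gets $P\alpha<n(2\alpha-1)=n(1/2+2\delta)$, an \emph{upper} bound on $P$, not the lower bound you wrote. The intended contradiction would pair this with a lower bound $P\ge|N^1_1|+1$, but that lower bound is exactly what fails under your accounting.

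The fix is to argue, as the paper does, that unallocated bags (including the untouched $C_j$) must already be worth $<\alpha$ to $i$; equivalently, note that one may first hand out every $C_k$ that some remaining agent values at least $\alpha$ (with the same priority) before beginning bag-filling, which does not affect the analysis elsewhere and makes Claim~\ref{cl1} immediate.
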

\begin{proof}
    It suffices to prove that all agents $i \in N^1_2 \cup N^2$ receive a bag at the end of \cref{algo:new}. Towards a contradiction, assume that $i \in N^1_2 \cup N^2$ does not receive any bag.
    \begin{claim}\label{cl1}
        For all bags $B$ which is either unallocated or is allocated to an agent in $N^1_1$, $v_i(B) < 3/4 +\delta$. 
    \end{claim}
    The claim holds since the priority is with agents in $N^1_2 \cup N^2$ and also that we allocate all the bags of value at least $3/4 + \delta$ for some remaining agent.
   
    Let $S$ be the set of bags allocated to agents in $N^1_2 \cup N^2$ and $\Bar{S}$ be the set of the remaining bags.
    We have
    \begin{align*}
        n &= v_i(M) = \sum_{B \in S} v_i(B) + \sum_{B \in \Bar{S}} v_i(B) \\
        &\leq (n - |N^1_1|)\left(\frac{3}{2}+2\delta\right) + |N^1_1|\left(\frac{3}{4}+\delta\right) &\tag{\cref{bad-bags} and \cref{cl1}}\\
        &= \left(\frac{3}{4}+\delta \right)(2n- |N^1_1|) \\
        &< n(\frac{3}{4}+\delta)(2- \frac{\frac{1}{4}-\delta}{\frac{1}{4}+\frac{\delta}{3}}). \tag{$|N^1_1|> \lbnone$ } \\
        &= 3n( \frac{5\delta}{3} + \frac{1}{4})
    \end{align*}
    This implies that $\frac{5\delta}{3} + \frac{1}{4}>\frac{1}{3}$. which is a contradiction with $\delta \leq  1/20$. Therefore, all agents $i \in N^1_2 \cup N^2$ receive a bag at the end of \cref{algo:new}.
\end{proof}
\subsubsection{\boldmath Agents in $N^1_1$}
In this section, we prove that all agents in $N^1_1$ also receive a bag at the end of \cref{algo:new}. First, we prove a general lemma that lower bounds the MMS value of an agent after allocating $2k$ goods to $k$ other agents. This way, we can lower bound the MMS value of agents in $N^1_1$ after the sequence of $R_5$ rules is applied.
\begin{lemma}\label{sequence-reduction}
    Given a set of goods $M$ and a valuation function $v$, let $S \subseteq M$ be such that $|S| = 2k$ for $k < n$ and $x \geq 0$ be such that $v(g) \leq \mms^n_v(M)/2 + x$ for all $g \in S$. Then, $\mms^{n-k}_v(M \setminus S) \geq \mms^n_v(M)-2x$.
\end{lemma}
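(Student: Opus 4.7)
My plan is to prove this by analyzing the MMS partition of $M$ for $v$ and constructing an explicit $(n-k)$-partition of $M\setminus S$ of the required value. Let $\mu=\MMS^n_v(M)$ and let $P=(P_1,\ldots,P_n)$ be an MMS partition, so $v(P_j)\geq \mu$. We may assume $\mu>2x$, since otherwise $\MMS^{n-k}_v(M\setminus S)\geq 0\geq \mu-2x$ and the claim is trivial. Write $f_j:=|S\cap P_j|$, so $\sum_j f_j = 2k$.

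The core observation is that for any two goods $g_1,g_2 \in S$ lying in distinct bundles $P_{j_1},P_{j_2}$, the merged super-bundle $(P_{j_1}\cup P_{j_2})\setminus\{g_1,g_2\}$ has value at least $2\mu-2(\mu/2+x)=\mu-2x$; and if $g_1,g_2$ lie in the same bundle $P_j$, discarding $P_j$ entirely leaves the remaining $n-1$ bundles $\{P_{j'}\}_{j'\ne j}$ inside $M\setminus\{g_1,g_2\}$, each of value $\geq \mu$. The plan is to use these two moves as building blocks.

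I will proceed by induction on $k$, closely in the style of the claim in the proof of Lemma~\ref{long-lemma}. The base case $k=0$ is trivial. For the inductive step, I split on the structure of the $f_j$. If some bundle $P_j$ has $f_j\geq 2$ with $f_j$ even, I discard $P_j$: the resulting subinstance on $M\setminus P_j$ has $n-1$ agents, an MMS value still $\geq \mu$, and the remaining set $S\setminus P_j$ has even size $2(k - f_j/2)$, so the inductive hypothesis gives a partition of $(M\setminus P_j)\setminus(S\setminus P_j)$ into $n-k-1+f_j/2 \geq n-k$ parts each of value $\geq \mu-2x$; merging the extra parts (and absorbing the leftovers $P_j\setminus S$) yields the desired $n-k$ bundles. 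If instead every $f_j\leq 1$, then $\sum f_j=2k$ forces $n\geq 2k$, and I pair up the $2k$ singleton-$f$ bundles into $k$ merged super-bundles, each of value $\geq \mu-2x$ by the cross-bundle merge observation above, giving with the remaining $n-2k$ zero-$f$ bundles exactly $n-k$ bundles of the required value.

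The main obstacle is the parity problem in the inductive step: some $P_j$ may have $f_j$ odd and $\geq 3$, so naive discard leaves an odd-sized $S'$ to handle inductively. I resolve this by combining a discard with a cross-bundle merge in one step: pick $g_1 \in S\cap P_j$ and $g_2 \in S \cap P_{j'}$ with $j'\neq j$ (which exists whenever $S$ is not confined to a single bundle, a case handled directly by discarding $P_j$), then first discard $P_j$ and separately merge $P_{j'}$ with one of its neighbors via one cross-bundle pair, restoring even parity so the induction applies cleanly to the residual instance. A careful accounting based on $f_j$'s parities shows this can always be arranged while using up exactly $k$ bundle-reductions and all $2k$ goods of $S$, yielding the desired partition; throughout, the value bound $\mu-2x$ never degrades because each merge contributes a fresh lower bound of $\mu-2x$ rather than being composed iteratively.
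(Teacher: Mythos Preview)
Your overall strategy is sound and your two building-block observations (the cross-bundle merge and the single-bundle discard) are correct, as are Cases~1 and~2. The gap is in Case~3, where your description is too vague to constitute a proof and, under the natural reading, is actually wrong. Concretely, take $n=4$, $k=3$, and $(f_1,f_2,f_3,f_4)=(3,1,1,1)$. You discard $P_1$ (with $f_1=3$) and then ``merge $P_{j'}$ with one of its neighbors via one cross-bundle pair,'' say $(P_2\cup P_3)\setminus\{g_2,g_3\}$. This leaves the single residual bundle $P_4$ still containing one $S$-element, so the residual set has odd size $1$ --- parity is \emph{not} restored, contrary to your claim, and the induction hypothesis (which requires $|S'|$ even) cannot be invoked. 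The closing sentence ``a careful accounting based on $f_j$'s parities shows this can always be arranged'' is precisely the step that needs an argument.

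There is, however, a clean fix entirely in the spirit of your induction: since $\sum_j f_j=2k$ is even, the number of indices with $f_j$ odd is even, so whenever some $f_j\ge 3$ is odd there exists $j'\neq j$ with $f_{j'}$ odd. Discard \emph{both} $P_j$ and $P_{j'}$. This removes an even number $f_j+f_{j'}\ge 4$ of $S$-elements and two bundles; one checks $k'=k-(f_j+f_{j'})/2\le k-2<n-2=n'$ and $n'-k'\ge n-k$, so the induction applies cleanly (absorbing the leftovers $(P_j\cup P_{j'})\setminus S$ into any bundle afterward). For comparison, the paper avoids induction altogether via a direct construction: it sorts the bundles so that $|P_1\cap S|\ge\cdots\ge|P_n\cap S|$, takes the largest $t$ with $\sum_{j\le\ell}|P_j\cap S|\ge 2\ell$ for all $\ell\le t$, shows this prefix contains exactly $2t$ elements of $S$ while bundles beyond index $2k-t$ contain none, and then discards the first $t$ bundles and pairs up the next $2(k-t)$ in one shot. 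The paper's argument is shorter and sidesteps the parity bookkeeping entirely, but your inductive route (once patched) is equally valid.
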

\begin{proof}
    We construct a partition of a subset of $M \setminus S$ into $n-k$ bundles such that the minimum value of these bundles is at least $\mms^n_v(M)-2x$. Let $(P_1, \ldots, P_n)$ be an MMS partition of $M$ according to valuation function $v$. For all $j \in [n]$, let $Q_j = P_j \cap S$. Without loss of generality, assume $|Q_1| \geq \ldots \geq |Q_n|$. Let $t$ be largest such that for all $\ell \leq t$, $\sum_{j \in [\ell]} |Q_j| \geq 2\ell$. 
    This implies that $|Q_{t+1}| \leq 1$. 
    \begin{claim}\label{claim10}
        $\sum_{j \in [t]} |Q_j| = 2t$.
    \end{claim}
    \begin{claimproof}
    If $|Q_{t+1}|=1$, and $\sum_{j \in [t]} |Q_j| > 2t$, then $\sum_{j \in [t+1]} |Q_j| \geq 2(t+1)$ which is a contradiction with the definition of $t$. If $|Q_{t+1}|=0$, then $\sum_{j \in [t]} |Q_j| = 2k$. If $t<k$, then $\sum_{j \in [t+1]} |Q_j| =2k \geq 2(t+1)$ which is again a contradiction with the definition of $t$. So in this case, $t=k$ and therefore, $\sum_{j \in [t]} |Q_j| = 2t$. Hence, \cref{claim10} holds.
    \end{claimproof}
    \begin{claim}\label{claim11}
        $Q_{2k-t+1} = \emptyset$.
    \end{claim}    
    \begin{claimproof}
    If $Q_{2k-t+1} \neq \emptyset$ then $|Q_{2k-t+1}| \geq 1$. Therefore, 
    \begin{align*}
        \sum_{j \in [2k-t+1]} |Q_j| &= \sum_{j \leq t} |Q_j| + \sum_{t < j \leq 2k-t+1} |Q_j| \\
        &\geq 2t + (2k-2t+1) \tag{\cref{claim10}, and $|Q_j| \geq 1$ for $j \leq 2k-t+1$}\\
        &> k,
    \end{align*}
    which is a contradiction. Therefore, \cref{claim11} holds.    
    \end{claimproof}

    Now we remove the first $t$ bundles (i.e., $P_1, \ldots, P_t$) and merge the next $k-t$ pairs of bundles after removing $S$ (i.e., $(P_{t+1} \setminus S)$ with $(P_{t+2} \setminus S)$ and so on) as follows: 
    \begin{align*}
        \hat{P} = \big( (P_{t+1} \cup P_{t+2}) \setminus S, (P_{t+3} \cup P_{t+4}) \setminus S, \ldots, (P_{2k-t-1} \cup P_{2k-t}) \setminus S, P_{2k-t+1}, \ldots, P_n \big).
    \end{align*}
    \cref{claim11} implies that for all $j>2k-t$, $P_j = P_j \setminus S$. Therefore, $\hat{P}$ is a partition of the goods in $(M \setminus (P_1 \cup \ldots \cup P_t)) \setminus S \subseteq M \setminus S$. For all $j > 2k-t$, we have $v_i(P_j) \geq \mms^{|N|}_v(M) \geq \mms^{|N|}_v(M) - 2x$. Also, for all $t < j \leq 2k-t$, we have $|P_j \cap S| \leq 1$ and $v_i(g) \leq \mms^{|N|}_v(M)/2 +x$ for all $g \in P_j$. Therefore,
    \begin{align*}
        v_i(P_j \setminus S) &\geq v_i(P_j) - (\mms^{|N|}_v(M)/2 + x) \\
        &\geq \mms^{|N|}_v(M)/2 - x.
    \end{align*}
    Thus, for all $t < j < 2k-t$, $v_i((P_j \cup P_{j+1}) \setminus S) \geq \mms^{|N|}_v(M) - 2x$. Hence, $\hat{P}$ is a partition of a subset of $M \setminus S$ into $n-k$ bundles with minimum value at least $\mms^{|N|}_v(M) - 2x$. Therefore, \cref{sequence-reduction} holds.
\end{proof}
\begin{lemma}\label{mms-bound}
    Let $i \in N^1_1$ be a remaining agent after no more $R_5$ is applicable. Then, before applying more reduction rules, $\mms_i \geq 1 - 12\delta$. 
\end{lemma}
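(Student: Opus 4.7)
The plan is to combine an upper bound on $v_i(1)$ with the two reduction tools already proven: \cref{simple-valid} for the bag allocations in the first while-loop of \cref{algo:new}, and \cref{sequence-reduction} for the entire $R_5$ loop treated as a single batch.

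The first step is to show $v_i(1) < 1/2 + 6\delta$ for every $i \in N^1_1$. Since the starting instance is $\delta$-ONI, $R^{3/4+\delta}_4$ is inapplicable, giving $v_i(1) + v_i(2n+1) < 3/4 + \delta$. The definition of $N^1_1$ supplies $v_i(2n+1) \geq 1/4 - 5\delta$, so subtracting yields $v_i(1) < 1/2 + 6\delta$. Because only goods are ever removed (never reordered or re-valued) during \cref{algo:new}, the instance remains ordered with the original global ranking, so every good still present at any later stage has $v_i$-value at most $v_i(1) < 1/2 + 6\delta$.

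Next, in the first while-loop of \cref{algo:new}, size-two bags $B_k$ are allocated to satisfied agents. For every $i \in N^1_1 \subseteq N^1$ and every $k$, $v_i(B_k) \leq 1 = \mms_i(\I)$, so \cref{simple-valid} makes each such allocation a valid reduction for $i$ and $\mms_i$ does not decrease. Writing $\I^\star = (N^\star, M^\star, \mathcal{V})$ for the resulting instance entering the $R_5$ loop and $n^\star = |N^\star|$, we thus have $\mms_{v_i}^{n^\star}(M^\star) \geq 1$. Now $R_5$ repeatedly removes the top two remaining goods; the union $S$ of all $2k$ goods it removes therefore satisfies $v_i(g) \leq v_i(1) < 1/2 + 6\delta$ for every $g \in S$. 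Since $i$ survives, $k \leq n^\star - 1 < n^\star$, and $\mms_{v_i}^{n^\star}(M^\star)/2 \geq 1/2$ means the hypothesis of \cref{sequence-reduction} holds with $x = 6\delta$. Applying it gives $\mms_{v_i}^{n^\star - k}(M^\star \setminus S) \geq \mms_{v_i}^{n^\star}(M^\star) - 12\delta \geq 1 - 12\delta$, which is exactly the claimed bound on $\mms_i$ in the instance right after the $R_5$ loop.

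The only subtlety is that the two phases require different tools. The first phase is not amenable to \cref{sequence-reduction} because an individual good inside some $B_k$ can be almost as large as $v_i(1)$, forcing $x$ close to $1/4$ and yielding a useless bound; but $v_i(B_k) \leq 1$ lets \cref{simple-valid} apply one bag at a time. Conversely, the $R_5$ loop cannot use \cref{simple-valid} because $v_i(\{1,2\})$ may exceed $1$, so one must batch all $R_5$ applications into a single invocation of \cref{sequence-reduction}; this is what keeps the cumulative MMS drop bounded by $12\delta$ rather than growing linearly in the number of $R_5$ applications.
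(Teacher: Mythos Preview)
Your proof is correct and follows essentially the same two-phase structure as the paper: use \cref{simple-valid} for the bag-allocation loop, then apply \cref{sequence-reduction} once to the entire $R_5$ batch with $x=6\delta$. The only difference is in how you obtain $v_i(1)<1/2+6\delta$: you deduce it directly from $R^{3/4+\delta}_4$-irreducibility of the original $\delta$-ONI instance together with $v_i(2n+1)\geq 1/4-5\delta$, whereas the paper argues that after the first while-loop every remaining bag (in particular the one containing the current top good) has value below $3/4+\delta$ for $i$ and subtracts the same $v_i(2n+1)$ bound. Your route is marginally cleaner since it needs nothing about the first while-loop to get the per-good bound; the paper's route makes explicit that the bound holds for the current top good rather than just the original good~$1$, but since all removed goods are dominated by the original good~$1$ anyway, both arguments feed identically into \cref{sequence-reduction}.
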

\begin{proof} 
    We start by proving the following claim.
    \begin{claim}\label{claim12}
        Right before applying any $R_5$, $v_i(1) \leq 1/2 +6\delta$.
    \end{claim}
    \begin{claimproof}
    Right before applying any $R_5$, no bag is of value at least $\frac{3}{4}+\delta$ to any agent and in particular agent $i$. Therefore, $v_i(1) + v_i(2n+1) \leq v_i(1) + v_i(2n) < 3/4 +\delta$. Since $v_i(2n+1) > 1/4 -5\delta$, by the definition of $R_5$, we have $v_i(1) < 1/2 +6\delta$. Therefore the claim holds.
    \end{claimproof}
    
    Consider the step right before applying any $R_5$. Note that until this step, only some $B_j$'s are allocated. Since $i \in N^1$, $v_i(B_j) \leq 1$ for all $j \in [n]$ and since $|B_j| = 2$, allocating $B_j$'s are valid reductions for agent $i$ by Lemma \ref{simple-valid}. Thus, before applying any $R_5$, $\mms_i \geq 1$. Now let $\I' = ([n'], M', \mathcal{V})$ be the instance after applying the sequence of $R_5$'s. \cref{claim12} and \cref{sequence-reduction} imply that $\mms^{n'}_{v_i}(M') \geq 1 - 12\delta$.
\end{proof}

For the sake of contradiction, assume that agent $i \in N^1_1$ does not receive a bag at the end of \cref{algo:new}. By Lemma \ref{mms-bound}, $\mms_i \geq 1 - 12\delta$ after applying the sequence of $R_5$'s. By \cref{valid-rules}, $R_2$ and $R_3$ are valid reductions for $i$ and, therefore, $\mms_i \geq 1 - 12\delta$ at the beginning of the bag-filling phase. Let us abuse the notation and assume the instance at this step is $([n],[m],\mathcal{V})$.
\begin{lemma}\label{small-remains-small}
    Assuming $\delta \leq 1/212$, for all $k \in [n]$, if $v_i(C_k) \leq 1-12\delta$, then $v_i(\hat{C}_k) \leq 1-12\delta$.
\end{lemma}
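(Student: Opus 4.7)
The plan is to dispose of the trivial case first: if no good is inserted into $C_k$ during bag-filling, then $\hat{C}_k = C_k$ and the hypothesis $v_i(C_k) \leq 1-12\delta$ immediately gives the conclusion. Otherwise, let $g$ denote the last good inserted into $\hat{C}_k$ during the inner while-loop, and I will bound each summand of the decomposition $v_i(\hat{C}_k) = v_i(\hat{C}_k \setminus \{g\}) + v_i(g)$ separately.

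For the first summand I appeal to the stopping rule of the inner while-loop of \cref{algo:new}: a good is added precisely when no remaining agent yet values the bag at least $\alpha = 3/4+\delta$. Because the surrounding contradiction hypothesis says $i \in N^1_1$ remains unallocated at the very end of the algorithm, $i$ was in particular a remaining agent just before $g$ was inserted, so $v_i(\hat{C}_k \setminus \{g\}) < 3/4+\delta$.

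For the second summand I use \cref{thm:vr-upper-bounds}. The good $g$ lies in $M \setminus [3n]$ where $n$ denotes the number of agents at the start of bag-filling; by the preceding reduction loop, both $R^{3/4+\delta}_2$ and $R^{3/4+\delta}_3$ have been exhausted at that point, and since the common ordering of the remaining goods is still in force (the valuations are fixed once the normalized instance is produced), \cref{thm:vr-upper-bounds} yields $v_i(g) < (3/4+\delta)/4 = 3/16 + \delta/4$.

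Adding the two estimates gives $v_i(\hat{C}_k) < 15/16 + 5\delta/4$, so it remains to verify $15/16 + 5\delta/4 \leq 1 - 12\delta$, which rearranges to $\tfrac{53}{4}\delta \leq \tfrac{1}{16}$, i.e., $\delta \leq 1/212$ — exactly the hypothesis (with equality at the endpoint). There is no genuine obstacle in this argument; the only bookkeeping to watch is that $R^{3/4+\delta}_3$-irreducibility, established once at the start of bag-filling, continues to govern the values of goods in $M\setminus[3n]$ throughout bag-filling because neither the valuations nor the ordering on that tail of the good-sequence is ever altered by later allocations.
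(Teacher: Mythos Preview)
Your proof is correct and follows essentially the same route as the paper: dispose of the trivial case $\hat{C}_k=C_k$, then for the last added good $g$ bound $v_i(\hat{C}_k\setminus\{g\})<3/4+\delta$ by the stopping rule and $v_i(g)<3/16+\delta/4$ via \cref{thm:vr-upper-bounds} (since $g>3n$ and $R^{3/4+\delta}_3$ is exhausted), and finish with the arithmetic $15/16+5\delta/4\le 1-12\delta\iff\delta\le 1/212$. Your added remarks about $i$ still being a remaining agent and the persistence of $R_3$-irreducibility are reasonable justifications of steps the paper leaves implicit.
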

\begin{proof}
    If $\hat{C}_k=C_k$, the claim follows. Otherwise, let $g$ be the last good allocated to $\hat{C}_k$. We have $v_i(\hat{C}_k \setminus g) < 3/4 +\delta$, otherwise $g$ would not be added to $\hat{C}_k$. Since $g > 3n$, by \cref{thm:vr-upper-bounds}, $v_i(g) < 3/16 + \delta/4$. We have
    \begin{align*}
        v_i(\hat{C}_k) &= v_i(\hat{C}_k \setminus g) + v_i(g) \\
        &< \left(\frac{3}{4}+\delta\right) + \left(\frac{3}{16}+\frac{\delta}{4}\right) \\
        &= \frac{15}{16}+\frac{5\delta}{4} \leq 1 - 12\delta. &\tag{$\delta \leq 1/212$}
    \end{align*}
\end{proof}    
\begin{lemma}\label{big-bags}
    If $\delta \leq 1/212$, there exists $k \in [n]$ such that $v_i(C_k) > 1-12\delta$.
\end{lemma}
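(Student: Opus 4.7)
The plan is to argue by contradiction: suppose that $v_i(C_k) \le 1-12\delta$ for every $k \in [n]$. By \cref{small-remains-small} (whose hypothesis $\delta \le 1/212$ we have), this gives $v_i(\hat C_k) \le 1-12\delta$ for every $k$. In our parameter regime $1-12\delta > 3/4+\delta = \alpha$, so this bound alone does \emph{not} make $i$ ineligible for any bag, and additional input is needed to derive a contradiction.

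The additional input is the standing hypothesis of this case analysis — that $i$ receives no bag from $\mathtt{approxMMS2}$ — which I use to force the final for-loop of \cref{algo:new} to get stuck at some iteration $k$. Indeed, each iteration of that for-loop allocates exactly one bag to one distinct remaining agent, so a successful run through all $n$ iterations would also serve $i$. Hence some iteration must halt in the inner while-loop: every good of $M \setminus [3n]$ gets absorbed into $\hat C_k$, yet no remaining agent, in particular not the still-unallocated $i$, values $\hat C_k$ at $\ge \alpha$. This yields the sharper bound $v_i(\hat C_k) < 3/4+\delta$ for the stuck bag.

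At this failure moment, $M$ decomposes disjointly into the allocated bags $\hat C_1,\dots,\hat C_{k-1}$, the stuck bag $\hat C_k$, and the untouched initial bags $C_{k+1},\dots,C_n$. By the bounds above,
\begin{equation*}
v_i(M) < (k-1)(1-12\delta) + \left(\tfrac{3}{4}+\delta\right) + (n-k)(1-12\delta) = (n-1)(1-12\delta) + \left(\tfrac{3}{4}+\delta\right).
\end{equation*}
On the other hand, since $\mms_i \ge 1-12\delta$ (from \cref{mms-bound} together with the fact that $R_2$ and $R_3$ are valid reductions for $i$), any MMS partition witnesses $v_i(M) \ge n(1-12\delta)$. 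Combining the two inequalities yields $1-12\delta < 3/4+\delta$, i.e.\ $\delta > 1/52$, contradicting $\delta \le 1/212$. The main obstacle is the observation at the start of the second paragraph: \cref{small-remains-small} alone is too weak in our parameter regime, and one must exploit the overall assumption that $i$ is never served to force the algorithm to fail, which in turn supplies the sharper $<\alpha$ bound on the stuck bag needed for the counting argument to close.
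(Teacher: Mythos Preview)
Your proof is correct and follows essentially the same argument as the paper's. Both assume for contradiction that $v_i(C_k)\le 1-12\delta$ for all $k$, invoke \cref{small-remains-small} to bound $v_i(\hat C_k)$, use the standing assumption that $i$ receives no bag to extract one bag with the sharper bound $<3/4+\delta$, and then compare $\sum_k v_i(\hat C_k)$ against $v_i(M)\ge n(1-12\delta)$ to reach the numerical contradiction $\delta>1/52$. The only cosmetic difference is that you explicitly describe the algorithm getting stuck and decompose $M$ at the failure moment, whereas the paper simply notes that some unallocated $\hat C_t$ must satisfy $v_i(\hat C_t)<3/4+\delta$.
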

\begin{proof}
    For the sake of contradiction, assume that for all $k \in [n]$, $v_i(C_k) \leq 1-12\delta$. Since $i$ did not receive a bag at the end of \cref{algo:new}, there exists an unallocated bag $\hat{C}_t$ such that $v_i(\hat{C}_t) < 3/4 +\delta$. We have
    \begin{align*}
        v_i(M) &= \sum_{k \in [n]} v_i(\hat{C}_k) = \sum_{k \neq t} v_i(\hat{C}_k) + v_i(\hat{C}_t) \\
        &< (n-1)(1-12\delta) + (\frac{3}{4}+\delta) &\tag{\cref{small-remains-small} and $v_i(\hat{C}_t) < \frac{3}{4}+\delta$} \\
        &< n(1-12\delta), &\tag{$\delta \leq 1/212$}
    \end{align*}
    Note that $\mms_i \geq 1-12\delta$ and thus $v_i(M) \geq n(1-12\delta)$ which is a contradiction and therefore, \cref{big-bags} holds.
\end{proof}
Let $t$ be largest s.t. $v_i(C_t) > 1-12\delta$. 
\begin{observation}\label{t2}
    Assuming $\delta \leq 1/212$, $t > 1$.
\end{observation}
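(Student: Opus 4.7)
The plan is to derive a contradiction from assuming $t = 1$, which would mean $v_i(C_1) > 1 - 12\delta$ while $v_i(C_k) \le 1 - 12\delta$ for every $k \ge 2$. I will combine these assumptions with a tight upper bound on $v_i(C_1)$ to violate the lower bound $v_i(M) \ge n\,\mms_i \ge n(1-12\delta)$, which comes from \cref{mms-bound}.

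The first and central step is to upper bound $v_i(C_1) = v_i(1) + v_i(2n) + v_i(2n+1)$ using only the inapplicability of the reduction rules at the start of the bag-filling phase. Since $i$ is still present when the $R_5$-loop of \cref{algo:new} finishes, $R^{3/4+\delta}_5$ is not applicable to $i$, giving $v_i(1) + v_i(2) < 3/4 + \delta$; the subsequent applications of $R_2$ and $R_3$ only remove goods of index at least $3$, so this bound survives into the bag-filling phase. Because the instance is ordered and $n \ge 2$, $v_i(2n) \le v_i(2)$, and hence $v_i(1) + v_i(2n) < 3/4 + \delta$. Moreover, since $R_2$ is inapplicable, \cref{thm:vr-upper-bounds} yields $v_i(2n+1) < 1/4 + \delta/3$. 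Putting these together,
\[
v_i(C_1) \;<\; \Bigl(\tfrac{3}{4}+\delta\Bigr) + \Bigl(\tfrac{1}{4}+\tfrac{\delta}{3}\Bigr) \;=\; 1 + \tfrac{4\delta}{3}.
\]

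Next, I track the remaining bags. Since $v_i(C_1) > 1-12\delta \ge 3/4+\delta$ (as $\delta \le 1/212 \le 1/52$), the inner while-loop in the bag-filling phase adds no good to $C_1$, so $\hat{C}_1 = C_1$, and $C_1$ is allocated to some agent other than $i$. For $k \ge 2$, \cref{small-remains-small} promotes $v_i(C_k) \le 1-12\delta$ to $v_i(\hat{C}_k) \le 1-12\delta$. Exactly as in the proof of \cref{big-bags}, because $i$ receives no bag, some unallocated $\hat{C}_{k_0}$ must satisfy $v_i(\hat{C}_{k_0}) < 3/4+\delta$, and since $C_1$ is allocated, $k_0 \ge 2$. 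Summing over the partition $\hat{C}_1,\ldots,\hat{C}_n$ of $M$ and using \cref{mms-bound},
\[
n(1-12\delta) \;\le\; v_i(M) \;<\; \Bigl(1+\tfrac{4\delta}{3}\Bigr) + \Bigl(\tfrac{3}{4}+\delta\Bigr) + (n-2)(1-12\delta),
\]
which simplifies to $79\delta/3 > 1/4$, i.e., $\delta > 3/316$, contradicting $\delta \le 1/212$. Hence $t > 1$.

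The main obstacle will be the first step: estimating $v_i(1)$ and $v_i(2n)$ separately by $3/4+\delta$ gives the worthless bound $v_i(C_1) < 2$, so the contradiction forces one to produce the sharper $1 + \tfrac{4\delta}{3}$ estimate. Achieving it demands bundling $v_i(1) + v_i(2n) \le v_i(1) + v_i(2)$ in order to invoke $R_5$-inapplicability, and separately leaning on $R_2$-inapplicability for the tail term $v_i(2n+1)$. Any looser bound on $v_i(C_1)$ fails to yield the final contradiction.
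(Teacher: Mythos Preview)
Your proof is correct and follows essentially the same approach as the paper's: bound $v_i(C_1)$ by $1+\tfrac{4\delta}{3}$ via $R_5$-inapplicability combined with \cref{thm:vr-upper-bounds}, then sum over all $\hat{C}_k$ using \cref{small-remains-small} and the existence of an unallocated bag of value below $3/4+\delta$ to contradict $v_i(M)\ge n(1-12\delta)$. You supply a bit more detail (justifying that the $R_5$ bound survives $R_2/R_3$, noting $k_0\ge 2$, and working out the final inequality $\delta>3/316$ explicitly), but the argument is the same.
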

\begin{proof}
    For the sake of contradiction, assume $t=1$. Since $1-12\delta \ge 3/4 + \delta$, we have 
    \begin{align*}
        v_i(\hat{C}_1) = v_i(C_1) &= v_i(1) + v_i(2n) + v_i(2n+1) \\
        &\leq v_i(1) + v_i(2) + (\frac{1}{4}+\frac{\delta}{3}) &\tag{\cref{thm:vr-upper-bounds}}\\
        &< \left(\frac{3}{4}+\delta\right) + \left(\frac{1}{4}+\frac{\delta}{3}\right) = 1 + \frac{4\delta}{3}. &\tag{$R_5(3/4 +\delta)$ is not applicable}
    \end{align*}
Also, since no bag is allocated to agent $i$, there must be a bag like $C_\ell$ with $v_i(\hat{C}_\ell) < \frac{3}{4}+\delta$.
    \begin{align*}
        n(1-12\delta) \le v_i(M) &= v_i(\hat{C}_1) + \sum_{k \in ([n] \setminus \{1,\ell\})} v_i(\hat{C}_k) + v_i(\hat{C}_\ell) \\
        &< (1+\frac{4\delta}{3}) + (n-2)(1-12\delta) + \frac{3}{4}+\delta, &\tag{\cref{small-remains-small}} \\
        &< n(1-12\delta), &\tag{$\delta \leq 1/212$}
    \end{align*}
    which is a contradiction. Thus, $t>1$.
\end{proof}
\begin{observation}\label{lowerbound-t}
    $v_i(2n+t) > 1/4 -13\delta$.
\end{observation}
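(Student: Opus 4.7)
The plan is to express $v_i(2n+t)$ in terms of the total value of $C_t$ and subtract off the values of the other two goods in $C_t$. By the definition of $C_t = \{t, 2n-t+1, 2n+t\}$, we have
\begin{equation*}
v_i(2n+t) = v_i(C_t) - v_i(t) - v_i(2n-t+1),
\end{equation*}
and from the choice of $t$ in \cref{big-bags} we know $v_i(C_t) > 1 - 12\delta$. So it suffices to show $v_i(t) + v_i(2n-t+1) < \frac{3}{4} + \delta$, since this yields $v_i(2n+t) > (1-12\delta) - (\frac{3}{4}+\delta) = \frac{1}{4} - 13\delta$ as required.

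To bound $v_i(t) + v_i(2n-t+1)$, I would use two facts about the current instance at the start of the bag-filling phase. First, the instance is ordered, so $v_i(1) \geq v_i(2) \geq \ldots$; since $t \geq 1$ and $2n-t+1 \geq 2$ (the latter holds because $t \leq n$), this gives $v_i(t) \leq v_i(1)$ and $v_i(2n-t+1) \leq v_i(2)$. Second, at this point the rule $R_5$ is no longer applicable, so $v_i(1) + v_i(2) < \frac{3}{4} + \delta$. Combining, $v_i(t) + v_i(2n-t+1) \leq v_i(1) + v_i(2) < \frac{3}{4} + \delta$, and substituting gives the claim.

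The only subtlety to verify is that $R_5$ inapplicability indeed persists into the bag-filling phase. The algorithm applies $R_5$ to exhaustion first, at which point $v_j(1) + v_j(2) < \frac{3}{4}+\delta$ for every remaining agent $j$. The subsequent applications of $R_2$ and $R_3$ remove only goods in positions $\geq 2n-1$ and $\geq 3n-2$ respectively, and remove agents but never alter valuations. So the two most valuable remaining goods (in the re-indexed instance) are unchanged, the ordering is preserved, and $R_5$ remains inapplicable for every still-remaining agent, including $i$. Hence the desired inequality is available and the proof is a short two-line computation.
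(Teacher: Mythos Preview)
Your argument is correct and essentially identical to the paper's: both use $v_i(C_t) > 1-12\delta$ together with $v_i(t)+v_i(2n-t+1) \le v_i(1)+v_i(2) < \tfrac34+\delta$ (from ordering and $R_5$ inapplicability) to obtain the bound. Your added paragraph justifying that $R_5$ remains inapplicable through the subsequent $R_2,R_3$ steps is a helpful clarification the paper leaves implicit, but the core proof is the same.
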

\begin{proof}
    We have
\begin{align*}
    1-12\delta < v_i(C_t) &= v_i(t) + v_i(2n-t+1) + v_i(2n+t) \\
    &\leq v_i(1) + v_i(2) + v_i(2n+t) &\tag{$t\geq 1$ and $2n-t+1 \geq 2$}\\
    &< \frac{3}{4} + \delta + v_i(2n+t). &\tag{$R_5$ is not applicable}
\end{align*}
Therefore, $v_i(2n+t) > 1/4 -13\delta$.     
\end{proof}
\begin{observation}\label{ineq-imp}    
    $v_i(2n-t+1) > 3/8 - \delta(12 + 5/6)$.
\end{observation}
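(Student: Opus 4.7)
The plan is to lower-bound $v_i(2n-t+1)$ by rearranging the defining inequality $v_i(C_t) > 1 - 12\delta$ and plugging in upper bounds on the two other terms, $v_i(t)$ and $v_i(2n+t)$.

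First I would read off the three ingredients. By the definition of $t$ (from \cref{big-bags}) and the definition of $C_t$,
\begin{equation*}
v_i(t) + v_i(2n-t+1) + v_i(2n+t) = v_i(C_t) > 1 - 12\delta.
\end{equation*}
Next, by \cref{t2} we have $t \geq 2$, so by the ordering $v_i(t) \leq v_i(2)$. Since $R_5$ is not applicable at this point of \cref{algo:new}, $v_i(1) + v_i(2) < 3/4 + \delta$; combining with the ordering $v_i(1) \geq v_i(2)$ gives $2v_i(2) < 3/4 + \delta$, hence
\begin{equation*}
v_i(t) \leq v_i(2) < \tfrac{3}{8} + \tfrac{\delta}{2}.
\end{equation*}
Finally, since $2n+t > 2n$ and the instance is $(3/4+\delta)$-irreducible, \cref{thm:vr-upper-bounds} gives
\begin{equation*}
v_i(2n+t) \leq v_i(2n+1) < \tfrac{1}{4} + \tfrac{\delta}{3}.
\end{equation*}

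Combining these and solving for $v_i(2n-t+1)$ yields
\begin{equation*}
v_i(2n-t+1) > (1-12\delta) - \Bigl(\tfrac{3}{8}+\tfrac{\delta}{2}\Bigr) - \Bigl(\tfrac{1}{4}+\tfrac{\delta}{3}\Bigr) = \tfrac{3}{8} - \delta\Bigl(12 + \tfrac{5}{6}\Bigr),
\end{equation*}
which is exactly the claimed bound.

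There is no real obstacle here: the only nontrivial step is noticing that one must use $t \geq 2$ (Observation \ref{t2}) to convert the $R_5$-inapplicability bound on $v_i(1) + v_i(2)$ into a bound on $v_i(t)$; without $t \geq 2$ we would only get $v_i(t) \leq v_i(1)$, which is not good enough. Everything else is a one-line substitution.
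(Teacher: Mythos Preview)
Your proof is correct and follows essentially the same approach as the paper: both use $v_i(C_t) > 1-12\delta$, bound $v_i(t) \le v_i(2) < 3/8+\delta/2$ via \cref{t2} and the inapplicability of $R_5$, bound $v_i(2n+t) < 1/4+\delta/3$ via \cref{thm:vr-upper-bounds}, and subtract. One small wording issue: at this point of \cref{algo:new} the instance is not literally $(3/4+\delta)$-irreducible (we have no guarantee about $R_4$), but \cref{thm:vr-upper-bounds} only requires that $R_1,R_2,R_3$ be inapplicable, which holds here, so your invocation is valid.
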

\begin{proof}
    Since $R_5$ is not applicable, $v_i(1) + v_i(2) < 3/4 +\delta$ and therefore, $v_i(2) < 3/8 + \delta/2$. We have
    \begin{align*}
        1-12\delta < v_i(C_t) &= v_i(t) + v_i(2n-t+1) + v_i(2n+t) &\tag{$C_t = \{t, 2n-t+1, 2n+t\}$} \\
        &\leq v_i(2) + v_i(2n-t+1) + (\frac{1}{4}+\frac{\delta}{3}) &\tag{$t \geq 2$ by \cref{t2} and $v_i(2n+t) < \frac{1}{4}+\frac{\delta}{3}$ by  \cref{thm:vr-upper-bounds}} \\
        &< (\frac{3}{8}+\frac{\delta}{2}) + v_i(2n-t+1) + (\frac{1}{4}+\frac{\delta}{3}) &\tag{$v_i(2) < \frac{3}{8}+\frac{\delta}{2}$} \\
        &= v_i(2n-t+1) + \frac{5}{8} + \frac{5\delta}{6}.
    \end{align*}
    Therefore, $v_i(2n-t+1) > 3/8 - \delta(12 + 5/6)$.
\end{proof}
Now let $\ell$ be largest such that $v_i(2n+\ell) \geq \delta(26+2/3)$. 
\begin{observation}\label{landt}
    If $\delta \leq 3/476$, then $\ell \geq t$.
\end{observation}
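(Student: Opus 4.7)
The plan is to use Observation \ref{lowerbound-t} to establish a lower bound on $v_i(2n+t)$ and then compare it directly with the threshold $\delta(26 + 2/3)$ used in the definition of $\ell$. Since $\ell$ is the largest index with $v_i(2n+\ell) \geq \delta(26 + 2/3)$, it suffices to check that the index $t$ itself satisfies this inequality; the maximality of $\ell$ then forces $\ell \geq t$.

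Concretely, I would start from $v_i(2n+t) > 1/4 - 13\delta$ given by Observation \ref{lowerbound-t}, and show that under the hypothesis $\delta \leq 3/476$ we have
\begin{equation*}
\frac{1}{4} - 13\delta \;\geq\; \delta\!\left(26 + \tfrac{2}{3}\right).
\end{equation*}
Rearranging, this is equivalent to $1/4 \geq \delta\!\left(13 + 26 + 2/3\right) = \delta \cdot \tfrac{119}{3}$, i.e., $\delta \leq \tfrac{3}{476}$, which is exactly the assumption. Hence $v_i(2n+t) \geq \delta(26 + 2/3)$, and by the definition of $\ell$ as the largest such index, $\ell \geq t$.

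There is essentially no obstacle here: the lemma is just an arithmetic consequence of Observation \ref{lowerbound-t} together with the chosen numerical threshold in the definition of $\ell$. The only care needed is to verify the clean identity $13 + 26 + 2/3 = 119/3$, so that the constant $3/476$ is indeed the exact breakpoint.
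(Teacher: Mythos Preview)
Your proposal is correct and follows exactly the same approach as the paper: invoke Observation~\ref{lowerbound-t} to get $v_i(2n+t) > 1/4 - 13\delta$, verify arithmetically that $1/4 - 13\delta \geq \delta(26+2/3)$ when $\delta \leq 3/476$, and conclude $\ell \geq t$ from the definition of $\ell$.
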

\begin{proof}
    By Observation \ref{lowerbound-t}, $v_i(2n+t) > 1/4 -13\delta$. For $\delta \leq 3/476$, we have $1/4 -13\delta \geq \delta(26+2/3)$. Thus, $\ell \geq t$.
\end{proof}
\begin{lemma}\label{same-bags}
    If $\delta \leq 3/956$, for all $k \leq \min(\ell,n)$, $v_i(C_k) \geq 3/4+\delta$.
\end{lemma}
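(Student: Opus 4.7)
My plan is to split on whether $k \leq t$ or $t < k \leq \min(\ell, n)$, and in each case lower bound the three summands of $v_i(C_k) = v_i(k) + v_i(2n-k+1) + v_i(2n+k)$ using the monotonicity of $v_i$ on indices together with Observations~\ref{lowerbound-t}--\ref{landt}. The cleaner case is $t < k \leq \min(\ell, n)$: since $k > t$ we have $2n-k+1 \leq 2n-t$, so $v_i(2n-k+1) \geq v_i(2n-t+1) > 3/8 - \delta(12+5/6)$ by Observation~\ref{ineq-imp}; since $k \leq n$ we have $k \leq 2n-k+1$, so $v_i(k) \geq v_i(2n-k+1)$ and the same lower bound applies; and $k \leq \ell$ together with the definition of $\ell$ gives $v_i(2n+k) \geq v_i(2n+\ell) \geq \delta(26 + 2/3)$. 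Adding the three, the $\delta$-coefficients assemble to $-2(12+5/6) + (26+2/3) = -77/3 + 80/3 = 1$, yielding $v_i(C_k) > 3/4 + \delta$ with no extra constraint on $\delta$.

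The case $k \leq t$ is subtler, since Observation~\ref{ineq-imp} now bounds $v_i(2n-k+1)$ from above rather than below (because $2n-k+1 \geq 2n-t+1$). I would instead push indices outward in the other direction: using $k \leq t \leq n$, monotonicity gives $v_i(k) \geq v_i(t) \geq v_i(2n-t+1) > 3/8 - \delta(12+5/6)$ via Observation~\ref{ineq-imp}; using $1 \leq k \leq t$, monotonicity gives $v_i(2n-k+1) \geq v_i(2n+k) \geq v_i(2n+t) > 1/4 - 13\delta$ via Observation~\ref{lowerbound-t}; and directly $v_i(2n+k) \geq v_i(2n+t) > 1/4 - 13\delta$. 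Summing yields $v_i(C_k) > 7/8 - (233/6)\delta$, and the inequality $7/8 - (233/6)\delta \geq 3/4 + \delta$ rearranges to exactly $\delta \leq 6/(8 \cdot 239) = 3/956$, matching the hypothesis of the lemma.

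The main obstacle is structural rather than computational: the three summands of $v_i(C_k)$ depend on $k$ in opposing directions, and the case split at $k = t$ is what makes each summand amenable to a useful lower bound. Note that the surrogate $2n-t+1$ does double duty: it lower-bounds the ``top'' summand in Case~1 via $v_i(k) \geq v_i(t) \geq v_i(2n-t+1)$, and the ``middle'' summand in Case~2 via $v_i(2n-k+1) \geq v_i(2n-t+1)$. The constants $12+5/6$ in Observation~\ref{ineq-imp} and $26+2/3$ in the definition of $\ell$ appear to be calibrated precisely so that both case bounds become tight at the same threshold $\delta = 3/956$, which also explains the seemingly arbitrary value appearing in the hypothesis.
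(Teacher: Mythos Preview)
Your proposal is correct and essentially identical to the paper's proof: the same case split at $k=t$, the same three lower bounds in each case (via $v_i(2n-t+1)$, $v_i(2n+t)$, and $v_i(2n+\ell)$), and the same arithmetic yielding $7/8 - (233/6)\delta \geq 3/4+\delta$ in the first case and exactly $3/4+\delta$ in the second. The only cosmetic difference is that the paper writes the first-case chain as $v_i(k)+v_i(2n-k+1)+v_i(2n+k) \geq v_i(2n-t+1) + 2v_i(2n+t)$ in one line rather than bounding each summand separately.
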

\begin{proof}
    By \cref{landt}, we have $\ell \geq t$. For all $k \leq t$ we have
    \begin{align*}
        v_i(C_k) &= v_i(k) + v_i(2n-k+1) + v_i(2n+k) &\tag{$C_k = \{k, 2n-k+1, 2n+k\}$}\\
        &\geq v_i(2n-t+1) + 2v_i(2n+t) &\tag{$k \leq 2n-t+1$ and $2n-k+1 < 2n+k \leq 2n+t$}\\
        &> \left(\frac{3}{8}-\delta(12 + \frac{5}{6})\right) + 2\left(\frac{1}{4}-13\delta\right) &\tag{Observation \ref{lowerbound-t} and \ref{ineq-imp}}\\
        &= \frac{7}{8} - \delta(38+\frac{5}{6}) \geq \frac{3}{4}+\delta. &\tag{$\delta \leq 3/956$}
    \end{align*}
    Therefore, no good would be added to $C_k$ for $k \leq t$. Now assume $t<k\leq \ell$. We have
    \begin{align*}
        v_i(C_k) &= v_i(k) + v_i(2n-k+1) + v_i(2n+k) &\tag{$C_k = \{k, 2n-k+1, 2n+k\}$}\\
        &\geq 2v_i(2n-t+1) + v_i(2n+\ell) &\tag{$k < 2n-k+1 < 2n-t+1$ and $2n+k \leq 2n+\ell$}\\
        &> 2\left (\frac{3}{8}-\delta(12 + \frac{5}{6})\right) + \delta(26+\frac{2}{3}) &\tag{\cref{ineq-imp} and the definition of $\ell$}\\
        &= \frac{3}{4}+\delta.
    \end{align*}
\end{proof}
Note that since $i$ does not receive a bag by the end of \cref{algo:new}, there must be a remaining bag $C_k$ such that $v_i(C_k) < 3/4+\delta$. Thus, \cref{same-bags} implies that $\ell < n$ when $\delta \leq 3/956$.
\begin{corollary}[of \cref{same-bags}]\label{same-bags-cor}
    If $\delta \leq 3/956$, for all $k \leq \ell$, $\hat{C}_k = C_k$. 
\end{corollary}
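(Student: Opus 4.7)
The plan is to derive the corollary essentially for free from \cref{same-bags} combined with the standing assumption that agent $i \in N^1_1$ never receives a bag throughout \cref{algo:new}. The corollary says that no bag $C_k$ with $k \le \ell$ picks up any extra good during the bag-filling loop, which is exactly the content one needs to set up the final contradiction for agents in $N^1_1$.

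First, I would invoke the remark made immediately after \cref{same-bags}: since agent $i$ fails to receive a bag but would be satisfied by any $C_k$ with $v_i(C_k) \ge 3/4+\delta$, and since \cref{same-bags} guarantees $v_i(C_k) \ge 3/4+\delta$ for every $k \le \min(\ell,n)$, we must have $\ell < n$ whenever $\delta \le 3/956$ (otherwise agent $i$ would be satisfied by the $n$-th bag, a contradiction). Consequently $\min(\ell,n) = \ell$, so for every $k \le \ell$ the inequality $v_i(C_k) \ge 3/4+\delta$ holds verbatim.

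Now fix $k \le \ell$ and look at the iteration of the outer \textbf{for}-loop of \cref{algo:new} that processes $C_k$. By the standing contradiction hypothesis agent $i$ has not been removed from $N$ by this point, so $i$ is a candidate recipient. The inner \textbf{while}-loop adds a good to $C_k$ only while no remaining agent values the current $C_k$ at $\alpha = 3/4+\delta$ or more. Because the initial bag already satisfies $v_i(C_k) \ge 3/4+\delta$ and $i \in N$, the while-loop condition fails at the first check, so the loop is never entered and no good is appended. Hence $\hat C_k = C_k$, as claimed.

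I do not anticipate a real obstacle here; the only subtlety worth flagging is the dependence on the contradiction hypothesis ``agent $i$ receives no bag'', which is what keeps $i \in N$ when the bag-filling reaches $C_k$ and is what also forces $\ell < n$. Both of these are carried over from the surrounding argument, so the proof is just one or two lines once \cref{same-bags} has been established.
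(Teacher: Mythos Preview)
Your proposal is correct and matches the paper's intended reasoning; the paper states the corollary without proof, relying implicitly on exactly the two ingredients you spell out, namely that $v_i(C_k)\ge 3/4+\delta$ for every $k\le\ell$ (from \cref{same-bags} together with $\ell<n$) and that agent $i$ is still present when $C_k$ is processed (from the standing contradiction hypothesis), so the inner while-loop never executes.
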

\begin{observation}\label{small-goods-2}
    $v_i(M \setminus \{1,2, \ldots, 2n+\ell\}) \geq (n-\ell)(1/4-13\delta)$.
\end{observation}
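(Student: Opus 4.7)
The plan is to leverage an MMS partition of $i$ in the current instance. Combining \cref{mms-bound} with the validity of $R_2$ and $R_3$ for $i$ (\cref{valid-rules}), $\mms_i \geq 1-12\delta$, so there is a partition $P = (P_1, \ldots, P_n)$ of $M$ with $v_i(P_j) \geq 1-12\delta$ for every $j$. The key quantitative input is $v_i(1)+v_i(2) < 3/4 + \delta$, which holds because $R_5$ has been applied to exhaustion; by the ordered structure, any two distinct goods from $[2n+\ell]$ have combined value at most $v_i(1)+v_i(2) < 3/4 + \delta$.

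For every bundle $P_j$ with $|P_j \cap [2n+\ell]| \leq 2$ I therefore obtain
\[
v_i(P_j \setminus [2n+\ell]) \;\geq\; v_i(P_j) - \bigl(v_i(1)+v_i(2)\bigr) \;>\; (1-12\delta) - (3/4+\delta) \;=\; 1/4 - 13\delta.
\]
It thus suffices to show that at least $n-\ell$ bundles of $P$ have $|P_j \cap [2n+\ell]| \leq 2$, since summing the displayed inequality over such bundles gives $v_i(M \setminus [2n+\ell]) > (n-\ell)(1/4 - 13\delta)$, as desired.

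For the structural claim I would argue that, modulo passing to an MMS-feasible partition via swaps, we may assume every overfilled bundle contains \emph{exactly} three goods from $[2n+\ell]$: the $2n+\ell$ items of $[2n+\ell]$ have total excess $\ell$ beyond a cap of $2$ per bundle, so concentrating each unit of excess in a distinct bundle leaves $\ell$ overfilled bundles and $n-\ell$ compliant ones. This is supported by \cref{ineq-imp}, which gives $v_i(g) > 3/8 - \delta(12+5/6)$ for every $g \leq 2n-t+1$; three such ``large'' goods in one bundle push its value above $9/8 - \delta\cdot(38+1/2)$, which exceeds $1 > \mms_i$ for the admissible range $\delta \leq 3/956$. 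The main obstacle is precisely that a generic MMS partition permits bundles worth strictly more than $\mms_i$, so turning these pointwise value inequalities into a structural restriction requires a swap-and-rebalance argument that displaces surplus large goods out of overpacked bundles into bundles with at most one good from $[2n-t+1]$ while preserving $v_i(P_j) \geq \mms_i$ throughout. Once carried out, this forces at most $\ell$ overfilled bundles, hence at least $n-\ell$ compliant ones, and the per-bundle estimate above yields the observation.
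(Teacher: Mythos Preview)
Your skeleton matches the paper's exactly: pick a partition $P=(P_1,\dots,P_n)$ with $v_i(P_j)\ge 1-12\delta$, use $R_5$-inapplicability to get $v_i(P_j\cap[2n+\ell])<3/4+\delta$ whenever $|P_j\cap[2n+\ell]|\le 2$, deduce $v_i(P_j\setminus[2n+\ell])>1/4-13\delta$ for those bundles, and sum over them. The paper then simply asserts in one line that at least $n-\ell$ bundles satisfy $|P_j\cap[2n+\ell]|\le 2$ and proceeds; you instead flag this as the crux and try to justify it. You are right to be suspicious: it is \emph{not} a bare pigeonhole, since distributing $2n+\ell$ items into $n$ bins does not force $n-\ell$ bins to hold at most two (e.g.\ $n=3$, $\ell=1$, pattern $(3,3,1)$).

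However, your swap-and-rebalance sketch does not close the gap. The lower bound from \cref{ineq-imp} applies only to goods $g\le 2n-t+1$, so a bundle holding three goods from $[2n+\ell]$ with some of them lying in $(2n-t+1,\,2n+\ell]$ need not have value exceeding $1$; your ``three large goods exceed $9/8-\delta(38+1/2)$'' premise simply fails for such bundles. Even where the premise does apply, displacing a single good can remove up to $3/4+\delta$ in value and drop the source bundle far below $1-12\delta$; you provide no mechanism for restoring the MMS lower bound after such a move, nor any argument that an iterated swap process terminates with at most $\ell$ overfilled bundles. So the structural claim remains unproved in your proposal---and since the paper offers no argument for it either, this is a shared gap rather than a divergence of method.
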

\begin{proof}
    Consider the set of goods $\{1,2, \ldots, 2n+\ell\}$ in the \mms~partition of agent $i$. At least $n-\ell$ bags in the \mms~partition have at most two goods in $\{1,2, \ldots, 2n+\ell\}$. Let $P$ be the set of these bags. For all $B \in P$, we have $v_i(B \cap \{1,2, \ldots, 2n+\ell\}) \leq 3/4 +\delta$ since $|B \cap \{1,2, \ldots, 2n+\ell\}| \leq 2$ and $R_5$ is not applicable. Therefore, $v_i(B \setminus \{1,2, \ldots, 2n+\ell\}) \geq (1-12\delta) - (3/4 +\delta) = 1/4 -13\delta$. We have
    \begin{align*}
        v_i(M \setminus \{1,2, \ldots, 2n+\ell\}) &\geq v_i(\cup_{B \in P} B \setminus \{1,2, \ldots, 2n+\ell\}) \\
        &\geq (n-\ell)(\frac{1}{4} - 13\delta).
\qedhere    \end{align*}    
\end{proof}
\begin{lemma}\label{upper-small-goods}
    If $\delta \leq 796$, for all $k>\ell$, $v_i(\hat{C}_k \setminus \{k,2n-k+1\}) < 1/4 -13\delta$.
\end{lemma}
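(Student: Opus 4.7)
The plan is to fix $k > \ell$ and decompose $\hat{C}_k \setminus \{k, 2n-k+1\} = \{2n+k\} \cup S$, where $S$ is the (possibly empty) set of goods added to $C_k$ during bag-filling, and then to case-split on whether $S$ is empty.

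When $S = \emptyset$, the claim collapses to $v_i(2n+k) < 1/4 - 13\delta$, which is immediate from $k > \ell$ and the defining property $v_i(2n+k) < \delta(26+2/3)$ of $\ell$. The main case is $S \neq \emptyset$. I would let $g$ be the last good added to $C_k$. Since the bag-filling loop did not halt before inserting $g$ and agent $i$ is still unsatisfied (by the standing contradiction hypothesis), $v_i(\hat{C}_k \setminus \{g\}) < 3/4 + \delta$. Because bag-filling only ever draws goods with index greater than $3n$, and $\ell < n$ (observed immediately after \cref{same-bags} from the fact that $i$ never receives a bag), we have $g > 3n \geq 2n + \ell + 1$, so ordering upgrades the naive bound $v_i(g) < 3/16 + \delta/4$ from \cref{thm:vr-upper-bounds} to the much sharper $v_i(g) \leq v_i(2n+\ell+1) < \delta(26+2/3)$. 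Hence $v_i(\hat{C}_k) < \frac{3}{4} + \delta(27+2/3)$. For a matching lower bound on $v_i(k) + v_i(2n-k+1)$, I would use that $k > \ell \geq t$ and $k \leq n$ together place both indices $k$ and $2n-k+1$ at positions no later than $2n-t+1$, so \cref{ineq-imp} yields $v_i(k) + v_i(2n-k+1) \geq 2\, v_i(2n-t+1) > \frac{3}{4} - \delta(25+2/3)$. Subtracting the two estimates gives
\[
v_i(\hat{C}_k \setminus \{k, 2n-k+1\}) \;<\; \delta \cdot \tfrac{199}{3},
\]
which is below $\frac{1}{4} - 13\delta$ exactly when $\delta \leq \frac{3}{796}$, comfortably implied by the stated hypothesis.

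The main obstacle is that all constants must be kept very tight. Using only the generic $(3/4+\delta)$-irreducibility bound $v_i(g) < 3/16 + \delta/4$ from \cref{thm:vr-upper-bounds} would leave a $\Theta(1)$ slack rather than $\Theta(\delta)$ after the subtraction, which is hopeless. The sharp bound hinges on noticing that bag-filling draws only from goods beyond the $3n$-cutoff, which in turn sits beyond the $\ell$-cutoff, so the ordering forces $v_i(g)$ down to the $\delta$-scale dictated by $\ell$. Symmetrically, the lower bound on $v_i(k) + v_i(2n-k+1)$ must be driven nearly up to $3/4$ by doubling \cref{ineq-imp}; this works because for $k > t$ both endpoints of the pair sit on the "large-value" side of index $2n-t+1$.
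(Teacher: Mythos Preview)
Your approach is essentially identical to the paper's: handle the case $\hat{C}_k = C_k$ via the definition of $\ell$, and otherwise bound $v_i(\hat{C}_k)$ from above by $(3/4+\delta)+\delta(26+2/3)$ using that the last added good has index $>3n>2n+\ell$, bound $v_i(\{k,2n-k+1\})$ from below by $2\bigl(3/8-\delta(12+5/6)\bigr)$ via \cref{ineq-imp} and $k>\ell\ge t$, and subtract. One arithmetic slip: the subtraction yields $v_i(\hat{C}_k\setminus\{k,2n-k+1\})<\delta(53+\tfrac{1}{3})=\delta\cdot\tfrac{160}{3}$, not $\delta\cdot\tfrac{199}{3}$; the $\tfrac{199}{3}$ is the combined coefficient after adding back the $13\delta$, which is why your final threshold $\delta\le 3/796$ is nevertheless correct.
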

\begin{proof}
    Since $1/4 -13\delta \geq \delta(53+1/3)$ for $\delta \leq 3/796$, it suffices to prove $v_i(\hat{C}_k \setminus \{k,2n-k+1\}) < \delta(53+1/3)$.
    Note that for all $k>\ell$, $v_i(2n+k) < \delta(26+2/3)$. Therefore, if $\hat{C}_k = C_k = \{k,2n-k+1,2n+k\}$, the observation holds. 
    Moreover, we have 
    \begin{align}
        v_i(\{k,2n-k+1\}) &\geq 2v_i(2n-t+1) &\tag{$k < 2n-k+1 \leq 2n-t+1$}\\
        &> 2\left(\frac{3}{8} - \delta(12 + \frac{5}{6})\right) &\tag{\cref{ineq-imp}} \\
        &= \frac{3}{4} - \delta(25 + \frac{2}{3}). \label{ineq20}
    \end{align}
    If $\hat{C}_k \neq C_k$, let $g$ be the last good added to $\hat{C}_k$. Since $g > 3n+1 > 2n+\ell$, $v_i(g) < \delta(26+2/3)$. We have $v_i(\hat{C}_k \setminus g) < 3/4 +\delta$ otherwise $g$ would not be added to $\hat{C}_k$. We have
    \begin{align*}
        v_i(\hat{C}_k) &= v_i(\hat{C}_k \setminus g) + v_i(g)\\
        &< \left(\frac{3}{4}+\delta\right) + \delta(26+\frac{2}{3}) \tag{\cref{thm:vr-upper-bounds}}\\
        &= \frac{3}{4}+\delta(27 + \frac{2}{3}).
    \end{align*}
    Hence,
    \begin{align*}
        \frac{3}{4}+\delta(27 + \frac{2}{3}) &> v_i(\hat{C}_k) \\
        &= v_i(\{k,2n-k+1\}) + v_i(\hat{C}_k \setminus \{k,2n-k+1\}) \\
        &> \frac{3}{4} - \delta(25 + \frac{2}{3}) + v_i(\hat{C}_k \setminus \{k,2n-k+1\}). \tag{Inequality \eqref{ineq20}}
    \end{align*}
    Thus, 
    \begin{align*}
        v_i(\hat{C}_k \setminus \{k,2n-k+1\}) &<\delta(53+\frac{1}{3}).    
    \end{align*}
\end{proof}

We are ready to prove Lemma \ref{n11}.
\begin{lemma}\label{n11}
    For $\delta \leq \lbdelta$, given a $\delta$-ONI instance with $|N^1_1| > \lbnone$, all agents in $N_1^1$ receive a bag of value at least $3/4 +\delta$ at the end of \cref{algo:new}.
\end{lemma}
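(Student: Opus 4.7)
My plan is a contradiction argument along the lines of \cref{n12andn2}, using the total value $v_i(M)$ as the accounting target. Suppose for contradiction that some $i \in N^1_1$ receives no bag from \cref{algo:new}. In the bag-filling for-loop there are exactly $n$ iterations and $n$ remaining agents, and each iteration removes one agent; so if every iteration succeeds, then $i$ must be picked in one of them. Therefore the only way $i$ can be left out is that some iteration $k^\star$ exhausts all of $M \setminus [3n]$ while filling $C_{k^\star}$. In this failure regime every good of $M$ sits in some final bag $\hat C_k$---the goods in $[3n]$ via the initial layout $C_k = \{k, 2n-k+1, 2n+k\}$, and the goods above $3n$ because they have all been absorbed by some $\hat C_k$ with $k \le k^\star$---so the key identity $\sum_{k \in [n]} v_i(\hat C_k) = v_i(M)$ holds.

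Next I would upper-bound $\sum_k v_i(\hat C_k)$ by splitting at the threshold $\ell$ fixed before \cref{same-bags}. By \cref{same-bags-cor}, for $k \le \ell$ the bag stays untouched ($\hat C_k = C_k$), so $v_i(\hat C_k) = v_i(k) + v_i(2n-k+1) + v_i(2n+k)$. By \cref{upper-small-goods}, for $k > \ell$ we have $v_i(\hat C_k \setminus \{k, 2n-k+1\}) < \tfrac{1}{4} - 13\delta$, so $v_i(\hat C_k) < v_i(k) + v_i(2n-k+1) + \bigl(\tfrac{1}{4} - 13\delta\bigr)$ (this covers the untouched bags $\hat C_k = C_k$ with $k > \ell$ as well, because $v_i(2n+k) < \delta(26+\tfrac{2}{3}) < \tfrac{1}{4} - 13\delta$ there). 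Summing over all $k$, the first two coordinates telescope into $\sum_{k \in [n]}(v_i(k) + v_i(2n-k+1)) = v_i([2n])$, and the $k \le \ell$ third coordinates gather into $v_i(\{2n+1,\ldots,2n+\ell\})$, giving
\[
\sum_{k \in [n]} v_i(\hat C_k) \;<\; v_i([2n+\ell]) + (n - \ell)\bigl(\tfrac{1}{4} - 13\delta\bigr),
\]
with strictness guaranteed by $\ell < n$ (noted right after \cref{same-bags}), which ensures at least one term where \cref{upper-small-goods} is applied.

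Finally I would invoke \cref{small-goods-2}, which gives $v_i(M \setminus [2n+\ell]) \ge (n - \ell)\bigl(\tfrac{1}{4} - 13\delta\bigr)$. Chaining the two inequalities,
\[
v_i(M) \;=\; \sum_{k \in [n]} v_i(\hat C_k) \;<\; v_i([2n+\ell]) + v_i(M \setminus [2n+\ell]) \;=\; v_i(M),
\]
the desired contradiction. I expect the main obstacle to be stating the bookkeeping step $\sum_k v_i(\hat C_k) = v_i(M)$ cleanly: one has to argue that, because the algorithm assigns $n$ bags to $n$ agents in $n$ iterations, a missing $i$ forces the while-loop to exhaust $M \setminus [3n]$, which in turn forces every good of $M$ into some $\hat C_k$ (including those $\hat C_k = C_k$ for $k$ past $k^\star$). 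Once that identity is secured, everything else is a direct assembly of \cref{same-bags-cor}, \cref{upper-small-goods}, and \cref{small-goods-2}, each of which already packages the relevant constraints ($\delta \le 3/956$ and the non-applicability of $R_5$) needed for the calculation.
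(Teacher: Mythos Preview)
Your proposal is correct and follows essentially the same approach as the paper: both derive the contradiction by bounding $v_i(M\setminus[2n+\ell])$ from below via \cref{small-goods-2} and from above via \cref{upper-small-goods} (after localizing these goods to the bags $\hat C_k$ with $k>\ell$ using \cref{same-bags-cor}). Your version simply adds $v_i([2n+\ell])$ to both sides and phrases the contradiction as $v_i(M)<v_i(M)$; your explicit justification of the bookkeeping identity $\sum_k v_i(\hat C_k)=v_i(M)$ is a step the paper leaves implicit.
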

\begin{proof}
    It suffices to prove that all agents $i \in N^1_1$ receive a bag at the end of \cref{algo:new}. Towards a contradiction, assume that $i \in N^1_1$ does not receive any bag. By \cref{big-bags}, there exists a $k \in [n]$ such that $v_i(C_k) > 1-12\delta$. Recall that $\ell$ is largest such that $v_i(2n+\ell) \geq \delta(26+2/3)$. We have
    \begin{align*}
        (n-\ell)(\frac{1}{4}-13\delta) &\leq v_i(M \setminus \{1,2, \ldots, 2n+\ell\}) &\tag{\cref{small-goods-2}} \\
        &= \sum_{k > \ell} v_i(\hat{C}_k \setminus \{k,2n-k+1\}) &\tag{$\hat{C}_k = C_k$ for $k \in [\ell]$ by \cref{same-bags-cor}} \\
        &< (n-\ell)(\frac{1}{4}-13\delta), &\tag{\cref{upper-small-goods}}
    \end{align*}
    which is a contradiction.
\end{proof}

\begin{theorem}\label{thm:4}
    Given any $\delta \leq \lbdelta$, for all $\delta$-ONI instances where $|N^1_1| > \lbnone$, \cref{algo:new} returns a $(\frac{3}{4}+\delta)$-\mms~allocation.
\end{theorem}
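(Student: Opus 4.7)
The plan is to simply combine the two lemmas proved in this subsection, in direct analogy with the proof of Theorem \ref{thm:3}. Since the set of agents decomposes as $N = N^1_1 \cup N^1_2 \cup N^2$, it suffices to argue that every agent in each part receives a bag of value at least $3/4 + \delta$ by the end of \cref{algo:new}.

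Lemma \ref{n12andn2} covers the agents in $N^1_2 \cup N^2$ assuming $\delta \leq 1/20$, and Lemma \ref{n11} covers the agents in $N^1_1$ assuming $\delta \leq \lbdelta$. Since $\lbdelta = \lbdeltafrac$ is smaller than $1/20$, the theorem's hypothesis $\delta \leq \lbdelta$ immediately makes both lemmas applicable. Putting them together, every agent obtains a bag of value at least $3/4 + \delta$ in the output of \cref{algo:new}. Because the instance is normalized, $\mms_i = 1$ for each $i \in N$, so this directly yields a $(3/4+\delta)$-\mms~allocation.

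The theorem itself is essentially an immediate corollary of the two lemmas, so there is no real obstacle at this final step. The substantive work has already been done in establishing Lemmas \ref{n12andn2} and \ref{n11}, which in turn rely on the bag-size bound of \cref{bad-bags}, the drop in the MMS value shown in \cref{mms-bound} (via the generic merging argument of \cref{sequence-reduction}), and the careful tracking of $\hat{C}_k$ via Lemmas \ref{small-remains-small}--\ref{upper-small-goods}.
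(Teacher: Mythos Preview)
Your proposal is correct and matches the paper's proof essentially verbatim: the paper simply invokes Lemma~\ref{n12andn2} for agents in $N^1_2 \cup N^2$ and Lemma~\ref{n11} for agents in $N^1_1$, then uses $N = N^1_1 \cup N^1_2 \cup N^2$ to conclude. Your added remarks about the compatibility of the $\delta$-constraints and the use of normalization are accurate elaborations of details the paper leaves implicit.
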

\begin{proof}
    For all other agents $i$, if $i \in N^1_2 \cup N^2$, by \cref{n12andn2}, $i$ receives a bag of value at least $\frac{3}{4}+\delta$ and if $i \in N^1_1$, by \cref{n11} $i$ receives such a bag. Since $N = N^1_1 \cup N^1_2 \cup N^2$, the theorem follows.
\end{proof}

\section{\boldmath $(3/4 + \epsilon)$-MMS allocations}
In this section, we give the complete algorithm $\mathtt{mainApproxMMS}(\mathcal{I}, \alpha)$ that achieves an $\alpha$-MMS allocation for any instance $\I$ with additive valuations and any $\alpha = 3/4 + \epsilon$ for $\epsilon \leq 3/4220$. To this end, first we obtain a $\delta$-ONI instance for $\delta = 4\epsilon/(1-4\epsilon)$ by running $\mathtt{order}(\mathtt{normalize}(\mathtt{reduce}(\I, \epsilon)))$. Then depending on whether $|N^1_1| \leq \lbnone$ or $|N^1_1| > \lbnone$, we run $\mathtt{approxMMS1}$ or $\mathtt{approxMMS2}$. The pseudocode of our algorithm $\mathtt{mainApproxMMS(\I, \alpha)}$ is shown in Algorithm \ref{algo:main}.

\begin{algorithm}[t]
    \caption{$\mathtt{mainApproxMMS}(\mathcal{I}, \alpha)$}
    \label{algo:main}
    \textbf{Input:} Instance $\mathcal{I}=(N,M,\mathcal{V})$ and approximation factor $\alpha > 3/4$
    
    \textbf{Output:} Allocation $A = \langle A_1, \ldots, A_n \rangle$
    \begin{algorithmic}
        \State $\epsilon = \alpha - 3/4$
        \State $\delta = \lbdelta$
        \State $\I = \mathtt{order(normalize(reduce}(\I, \epsilon)))$
        \State $N^1_1 = \{i \in [n] \mid \forall j \in [n]: v_i(B_j) \leq 1 \text{ and } v_i(2n+1) \geq 1/4 - 5\delta\}$
    \If{$|N^1_1| \leq \lbnone$}
        \State \textbf{return} $\mathtt{aprroxMMS1}(\I, \delta)$ \Comment{\cref{algo:gt} in \cref{small-N11}}
    \Else 
        \State \textbf{return} $\mathtt{aprroxMMS2}(\I, \delta)$ \Comment{\cref{algo:new} in \cref{large-N11}}
    \EndIf
    \State \Return $\langle A_1, \ldots, A_n \rangle$
    \end{algorithmic}
\end{algorithm}

\begin{theorem} \label{thm:main}
    Given any instance $\I = (N, M, \mathcal{V})$ where agents have additive valuations and any $\alpha \leq \frac{3}{4} + \lbepsilonfrac$, $\mathtt{mainApproxMMS}(\mathcal{I}, \alpha)$ returns an $\alpha$-MMS allocation for $\I$.
\end{theorem}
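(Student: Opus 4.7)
The plan is to assemble the theorem by combining the two main building blocks already developed, \cref{thm:2} (reduction to a $\delta$-ONI instance) and \cref{thm:3}/\cref{thm:4} (solution on a $\delta$-ONI instance), with a clean check of the arithmetic that links the parameters $\epsilon$ and $\delta$. Concretely, I would set $\epsilon = \alpha - \tfrac{3}{4}$ so that $\epsilon \leq \lbepsilonfrac$, and fix $\delta = \lbdeltafrac$ as in Algorithm~\ref{algo:main}. The crucial arithmetic identity to verify first is that $\delta \geq 4\epsilon/(1-4\epsilon)$: this is equivalent to $3836\,\epsilon \leq 3$, i.e.\ $\epsilon \leq \lbepsilonfrac$, which is exactly our hypothesis. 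Consequently \cref{thm:2} applies with these parameters, so $\hat{\I} := \mathtt{order}(\mathtt{normalize}(\mathtt{reduce}(\I,\epsilon)))$ is ordered, normalized, and $(\tfrac{3}{4}+\delta)$-irreducible, i.e.\ $\delta$-ONI.

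Next I would handle $\hat{\I}$ by the dichotomy used inside $\mathtt{mainApproxMMS}$. If $|N^1_1| \leq \lbnone$, then since $\delta = \lbdeltafrac \leq 0.011$, \cref{thm:3} tells us that $\mathtt{approxMMS1}(\hat{\I},\delta)$ outputs a $(\tfrac{3}{4}+\delta)$-MMS allocation of $\hat{\I}$. Otherwise $|N^1_1| > \lbnone$ and, since $\delta = \lbdeltafrac$, \cref{thm:4} yields the same conclusion via $\mathtt{approxMMS2}(\hat{\I},\delta)$. In either case we have an allocation $A$ with $v'_i(A_i) \geq (\tfrac{3}{4}+\delta)\mms_i(\hat{\I})$ for every $i \in N$.

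Finally I would pull this allocation back to the original instance via \cref{thm:2}, which guarantees that from any $(\tfrac{3}{4}+\delta)$-MMS allocation of $\hat{\I}$ one obtains a
\[
\min\!\left(\tfrac{3}{4}+\epsilon,\ (1-4\epsilon)\bigl(\tfrac{3}{4}+\delta\bigr)\right)\text{-MMS}
\]
allocation of $\I$. The first term of the minimum is exactly $\alpha$. For the second term, a short calculation gives
\[
(1-4\epsilon)\bigl(\tfrac{3}{4}+\delta\bigr) - \bigl(\tfrac{3}{4}+\epsilon\bigr) = \delta(1-4\epsilon) - 4\epsilon,
\]
which is non-negative precisely when $\delta \geq 4\epsilon/(1-4\epsilon)$; this was verified in the first step. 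Hence the minimum equals $\tfrac{3}{4}+\epsilon = \alpha$, and the returned allocation is $\alpha$-MMS for $\I$, proving the theorem.

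The main obstacle, and the only non-mechanical point, is ensuring that the single choice $(\epsilon,\delta) = (\lbepsilonfrac,\lbdeltafrac)$ simultaneously (i)~satisfies the reduction hypothesis $\delta \geq 4\epsilon/(1-4\epsilon)$ of \cref{thm:2}, (ii)~lies within the admissible range of \cref{thm:3} and \cref{thm:4} (both require $\delta \leq \lbdeltafrac$), and (iii)~makes the pullback factor $(1-4\epsilon)(\tfrac{3}{4}+\delta)$ at least $\tfrac{3}{4}+\epsilon$. All three reduce to the same inequality $\delta(1-4\epsilon) \geq 4\epsilon$, so the parameters are tight but consistent; everything else is an immediate invocation of the previously established theorems.
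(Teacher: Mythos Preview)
Your proposal is correct and follows essentially the same approach as the paper's own proof: invoke \cref{thm:2} to obtain a $\delta$-ONI instance (using that $\tfrac{4\epsilon}{1-4\epsilon}\le \delta$ when $\epsilon\le\lbepsilonfrac$), apply \cref{thm:3} or \cref{thm:4} according to $|N^1_1|$, and then pull back via \cref{thm:2} and check $(1-4\epsilon)(\tfrac34+\delta)\ge \tfrac34+\epsilon$. The only cosmetic difference is that the paper verifies the final inequality by plugging in the extremal value $\epsilon=\lbepsilonfrac$ and computing $(1-\tfrac{3}{959})(\tfrac34+\lbdeltafrac)=\tfrac34+\lbepsilonfrac$, whereas you reduce it algebraically to $\delta(1-4\epsilon)\ge 4\epsilon$, which is the same inequality already checked; your formulation makes the tightness of the parameter choice more transparent.
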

\begin{proof}
    Let $\epsilon = \alpha - 3/4$ and $\hat{\I} = \mathtt{order}(\mathtt{normalize}(\mathtt{reduce}(\I, \epsilon)))$. Then by \cref{thm:2}, $\hat{\I}$ is ordered, normalized and $(\frac{3}{4}+\frac{4\epsilon}{1-4\epsilon})$-irreducible ($\frac{4\epsilon}{1-4\epsilon}$-ONI). Since $\epsilon \leq \lbepsilonfrac$, $\frac{4\epsilon}{1-4\epsilon} \leq \lbdeltafrac = \delta$. Thus, $\hat{\I}$ is $\delta$-ONI.
    Furthermore, from any $\beta$-MMS allocation of $\hat{\I}$ one can obtain a $\min(\frac{3}{4}+\epsilon, (1-4\epsilon)\beta)$-MMS allocation of $\I$. 
    
    By \cref{thm:3}, given any $\delta \leq \lbdelta$, for all $\delta$-ONI instances where $|N^1_1| \leq \lbnone$, $\mathtt{approxMMS1}$ returns a $(\frac{3}{4}+\delta)$-\mms~allocation. Also, by \cref{thm:4}, for all $\delta$-ONI instances where $|N^1_1| > \lbnone$, $\mathtt{approxMMS2}$ returns a $(\frac{3}{4}+\delta)$-\mms~allocation. Therefore, $\mathtt{mainApproxMMS}(\mathcal{I}, \alpha)$ returns a $\min(\frac{3}{4}+\epsilon, (1-4\epsilon)(\frac{3}{4} + \delta))$-MMS allocation of $\I$. We have
    \begin{align*}
        (1-4\epsilon)(\frac{3}{4} + \delta) &\geq( 1- \frac{3}{959})(\frac{3}{4} + \lbdeltafrac) \\
        &= \frac{3}{4}+\lbepsilonfrac \\
        &\geq \frac{3}{4} + \epsilon = \alpha.
    \end{align*}
    Thus, $\mathtt{mainApproxMMS}(\mathcal{I}, \alpha)$ returns an $\alpha$-MMS allocation of $\I$.
\end{proof}

\appendix
\section{Missing Proofs}\label{app:mp}
\upper*
\begin{proof}
It suffices to prove $v_i(2n-k+1) \leq 1/3$ and then $v_i(k) > 2/3$ follows.
Let $P = (P_1, \ldots, P_n)$ be an MMS partition of agent $i$.
For $j \in [k]$ and $j' \in [2n+1-k]$,
$v_i(j) + v_i(j') \ge v_i(k) + v_i(2n+1-k) > 1$, since the instance is ordered.
Furthermore, $j$ and $j'$ cannot be in the same bundle in $P$ since the instance is normalized.
In particular, no two goods from $[k]$ are in the same bundle in $P$.
Hence, assume without loss of generality that $j \in P_j$ for all $j \in [k]$.

For all $j \in [k]$ and $j' \in [2n-k+1]$, $j' \not\in P_j$.
Thus, $\{k+1, \ldots, 2n-k+1\} \subseteq P_{k+1} \cup \ldots \cup P_n$.
By pigeonhole principle, there exists a bundle $B \in \{P_{k+1}, \ldots, P_{n}\}$
that contains at least $3$ goods $g_1, g_2, g_3$ in $\{k+1, \ldots, 2n-k+1\}$. Hence,
\begin{align*}
v_i(2n-k+1) &\leq \min_{g \in \{g_1, g_2, g_3\}} v_i(g)
\leq \frac{1}{3}\sum_{g \in \{g_1, g_2, g_3\}} v_i(g)
\leq \frac{v_i(B)}{3} = \frac{1}{3}.\qedhere
\end{align*}
\end{proof}

\trickylemma*
\begin{proof}
    Let $S \in A^+$ be the set of $\ell$ smallest indices in $A^+$ and $L \in A^+$ be the set of $\ell$ largest indices in $A^+$. Since $\hat{B}_k = B_k, \forall k\in A^+$, we have
    $$ \sum_{k \in A^+} v_i(\hat{B}_k) = ({\sum_{k \in S} v_i(k) +  \sum_{k \in L} v_i(2n-k+1)}) + ({\sum_{k \in A^+ \setminus S} v_i(k) + \sum_{k \in A^+ \setminus L} v_i(2n-k+1)}). $$
    We upper bound $({\sum_{k \in S} v_i(k) +  \sum_{k \in L} v_i(2n-k+1}))$ and $({\sum_{k \in A^+ \setminus S} v_i(k) + \sum_{k \in A^+ \setminus L} v_i(2n-k+1}))$ in Claims \ref{claim-1} and \ref{claim-2} respectively.
    \begin{claim}\label{claim-1}
        ${\sum_{k \in S} v_i(k) +  \sum_{k \in L} v_i(2n-k+1) < \ell (\frac{13}{12}+\delta).}$
    \end{claim}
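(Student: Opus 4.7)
The plan is to bound each summand separately via two facts already established in the preliminaries, then add. Specifically, for every index in $A^+$ the bag value $v_i(B_k) = v_i(k) + v_i(2n-k+1) > 1$, so Lemma~\ref{upper-13} (applicable since $\mathcal{I}$ is ordered and normalized) forces $v_i(2n-k+1) \le 1/3$. Separately, because $\mathcal{I}$ is $\delta$-ONI, the rule $R_1^{3/4+\delta}$ is not applicable, and Proposition~\ref{thm:vr-upper-bounds}(1) gives $v_i(k) < 3/4 + \delta$ for every index $k$.

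Using these two bounds termwise — the first on the $|L| = \ell$ summands $v_i(2n-k+1)$ for $k \in L$, and the second on the $|S| = \ell$ summands $v_i(k)$ for $k \in S$ — we obtain
\begin{align*}
\sum_{k \in S} v_i(k) + \sum_{k \in L} v_i(2n-k+1)
< \ell\!\left(\tfrac{3}{4}+\delta\right) + \ell\cdot\tfrac{1}{3}
= \ell\!\left(\tfrac{13}{12}+\delta\right),
\end{align*}
which is exactly the claim.

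There is essentially no obstacle: both invoked facts apply cleanly (Lemma~\ref{upper-13} to each $k \in L \subseteq A^+$, and Proposition~\ref{thm:vr-upper-bounds}(1) uniformly to all indices). Note that the specific choice of $S$ as the $\ell$ smallest and $L$ as the $\ell$ largest elements of $A^+$ plays no role here — any two subsets of size $\ell$ of $A^+$ would yield the same bound. That choice matters only for the complementary Claim~\ref{claim-2}, where one wants to leverage the ordering of the goods to obtain tighter per-term bounds on $v_i(k)$ for $k \in A^+ \setminus S$ (large indices, hence smaller $v_i(k)$) and on $v_i(2n-k+1)$ for $k \in A^+ \setminus L$ (small indices $k$, hence larger $2n-k+1$ and smaller $v_i(2n-k+1)$).
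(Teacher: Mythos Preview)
Your proof is correct and essentially identical to the paper's own argument: both bound $v_i(k) < 3/4+\delta$ via Proposition~\ref{thm:vr-upper-bounds} and $v_i(2n-k+1) \le 1/3$ via Lemma~\ref{upper-13}, then sum over the $\ell$ terms in each of $S$ and $L$. Your additional remark that the particular choice of $S$ and $L$ is irrelevant for this claim (and only matters for Claim~\ref{claim-2}) is also accurate.
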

    \begin{claimproof}
    Note that $v_i(k) < 3/4+\delta$ by \cref{thm:vr-upper-bounds} and $v_i(2n-k+1) \leq 1/3$ by \cref{upper-13}. Thus,
    \begin{align*}
        \sum_{k \in S} v_i(k) + \sum_{k \in L} v_i(2n-k+1) < \ell (\frac{3}{4}+\delta + \frac{1}{3}) = \ell(\frac{13}{12}+\delta).
    \end{align*}
    Therefore, \cref{claim-1} holds.
    \end{claimproof}
    \begin{claim}\label{claim-2}
        {$\sum_{k \in A^+ \setminus S} v_i(k) + \sum_{k \in A^+ \setminus L} v_i(2n-k+1) < |A^+| - \ell$.}
    \end{claim}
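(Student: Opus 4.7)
The plan is to bound the left-hand side by pairing each index in $A^+ \setminus S$ with an index in $A^+ \setminus L$ and showing that each resulting pair contributes strictly less than $1$. Write $A^+ = \{a_1 < a_2 < \ldots < a_{|A^+|}\}$. By the definitions of $S$ and $L$ as the $\ell$ smallest and $\ell$ largest indices of $A^+$, we have $A^+ \setminus S = \{a_{\ell+1}, \ldots, a_{|A^+|}\}$ and $A^+ \setminus L = \{a_1, \ldots, a_{|A^+|-\ell}\}$. I will reindex the left-hand side as
\[
\sum_{j=1}^{|A^+|-\ell} \bigl(v_i(a_{\ell+j}) + v_i(2n+1-a_j)\bigr),
\]
and show that each summand is strictly less than $1$.

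Fix $j \in [|A^+|-\ell]$. Because the $a_j$'s are strictly increasing positive integers, $a_{\ell+j} \geq a_j + \ell$, and since the instance is ordered, $v_i(a_{\ell+j}) \leq v_i(a_j + \ell)$. Setting $k := a_j + \ell$, the $j$-th summand is thus at most $v_i(k) + v_i(2n+1-k+\ell)$, which is precisely the quantity that the definition of $\ell$ controls, provided $k \in [\ell+1, n]$. The lower bound $k \geq \ell+1$ is immediate from $a_j \geq 1$. The upper bound requires a short counting argument: the interval $(n-\ell, n]$ contains only $\ell$ integers, so at most $\ell$ elements of $A^+ \subseteq [n]$ lie in it, and those can only be $a_{|A^+|-\ell+1}, \ldots, a_{|A^+|}$. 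Hence for $j \leq |A^+| - \ell$ we have $a_j \leq n - \ell$, i.e., $k \leq n$.

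Once $k \in [\ell+1, n]$ is established, the definition of $\ell$ directly gives $v_i(k) + v_i(2n+1-k+\ell) < 1$, so $v_i(a_{\ell+j}) + v_i(2n+1-a_j) < 1$ for every $j$. Summing over $j = 1, \ldots, |A^+|-\ell$ yields the claim. The only non-trivial step is the counting argument bounding $a_j$, which is the main (mild) obstacle; the rest follows from monotonicity of $v_i$ and the defining property of $\ell$, with no further estimation needed.
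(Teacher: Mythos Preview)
Your proof is correct and follows the same pairing strategy as the paper: write $A^+=\{a_1<\cdots<a_{|A^+|}\}$, pair $a_{\ell+j}$ with $2n+1-a_j$, and use the defining property of $\ell$ together with $a_{\ell+j}\ge a_j+\ell$ to bound each pair by less than $1$. The only difference is where you plug into the definition of $\ell$: you apply it at $k=a_j+\ell$, which forces you to verify $a_j\le n-\ell$ via the counting argument, whereas the paper applies it at $k'=a_{\ell+j}$ (automatically in $[\ell+1,n]$ since $a_{\ell+j}\in A^+\subseteq[n]$ and $a_{\ell+j}\ge\ell+1$) and uses monotonicity on the second summand instead, avoiding that extra step.
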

    \begin{claimproof}
    Assume $A^+ = \{g_1, \ldots, g_{|A^+|}\}$ and $g_1< \ldots< g_{|A^+|}$. Then, $A^+ \setminus S = \{ g_{\ell+1}, \ldots, g_{|A^+|}\}$ and $A^+ \setminus L = \{g_1, \ldots, g_{|A^+|-\ell}\}$. The idea is to pair the goods $g_{k+\ell}$ and $2n- g_k +1$ and prove that their value is less than $1$ for agent $i$. Since $g_{k+\ell} \geq g_k + \ell$, $v_i(g_{k+\ell}) + v_i(2n- g_k +1) < 1$ by the definition of $\ell$. We have
    \begin{align*}
        \sum_{k \in A^+ \setminus S} v_i(k) + \sum_{k \in A^+ \setminus L} v_i(2n-k+1) &= \sum_{k \in [|A^+|-\ell]} (v_i(g_{k+\ell}) + v_i(2n- g_k +1)) 
        < |A^+| - \ell.
    \end{align*}
    Therefore, \cref{claim-2} holds.
    \end{claimproof}

    \cref{claim-1} and \cref{claim-2} together imply \cref{tricky-bound}.
\end{proof}

\bibliographystyle{alpha}
\bibliography{ref}

\newcommand{\etalchar}[1]{$^{#1}$}
\begin{thebibliography}{CKM{\etalchar{+}}19}

\bibitem[ABCM17]{amanatidis2017truthful}
Georgios Amanatidis, Georgios Birmpas, George Christodoulou, and Evangelos
  Markakis.
\newblock Truthful allocation mechanisms without payments: Characterization and
  implications on fairness.
\newblock In {\em ACM Conference on Economics and Computation (EC)}, pages
  545--562, 2017.

\bibitem[ABM16]{amanatidis2016truthful}
Georgios Amanatidis, Georgios Birmpas, and Evangelos Markakis.
\newblock On truthful mechanisms for maximin share allocations.
\newblock In {\em International Joint Conference on Artificial Intelligence
  (IJCAI)}, pages 31--37, 2016.

\bibitem[AGST23]{simple}
Hannaneh Akrami, Jugal Garg, Eklavya Sharma, and Setareh Taki.
\newblock Simplification and improvement of {MMS} approximation.
\newblock In {\em Proc.\ 32nd Intl.\ Joint Conf.\ Artif.\ Intell.\ (IJCAI)},
  2023.

\bibitem[ALW19]{aziz2019strategyproof}
Haris Aziz, Bo~Li, and Xiaowei Wu.
\newblock Strategyproof and approximately maxmin fair share allocation of
  chores.
\newblock In {\em International Joint Conference on Artificial Intelligence
  (IJCAI)}, pages 60--66, 2019.

\bibitem[AM16a]{AzizM16b}
Haris Aziz and Simon Mackenzie.
\newblock A discrete and bounded envy-free cake cutting protocol for any number
  of agents.
\newblock In {\em Proc.\ 57th Symp.\ Foundations of Computer Science (FOCS)},
  pages 416--427, 2016.

\bibitem[AM16b]{AzizM16}
Haris Aziz and Simon Mackenzie.
\newblock A discrete and bounded envy-free cake cutting protocol for four
  agents.
\newblock In {\em Proc.\ 48th Symp.\ Theory of Computing (STOC)}, pages
  454--464, 2016.

\bibitem[AMNS17]{amanatidis2017approximation}
Georgios Amanatidis, Evangelos Markakis, Afshin Nikzad, and Amin Saberi.
\newblock Approximation algorithms for computing maximin share allocations.
\newblock {\em ACM Transactions on Algorithms (TALG)}, 13(4):1--28, 2017.

\bibitem[ARSW17]{aziz2017algorithms}
Haris Aziz, Gerhard Rauchecker, Guido Schryen, and Toby Walsh.
\newblock Algorithms for max-min share fair allocation of indivisible chores.
\newblock In {\em AAAI Conference on Artificial Intelligence (AAAI)}, 2017.

\bibitem[AS22]{ordinalOne}
Elad Aigner{-}Horev and Erel Segal{-}Halevi.
\newblock Envy-free matchings in bipartite graphs and their applications to
  fair division.
\newblock {\em Inf. Sci.}, 587:164--187, 2022.

\bibitem[BB18]{biswas2018fair}
Arpita Biswas and Siddharth Barman.
\newblock Fair division under cardinality constraints.
\newblock In {\em International Joint Conference on Artificial Intelligence
  (IJCAI)}, pages 91--97, 2018.

\bibitem[BBKN18]{barman2018groupwise}
Siddharth Barman, Arpita Biswas, Sanath Krishnamurthy, and Yadati Narahari.
\newblock Groupwise maximin fair allocation of indivisible goods.
\newblock In {\em AAAI Conference on Artificial Intelligence (AAAI)}, 2018.

\bibitem[BEF21]{babaioff2021fair}
Moshe Babaioff, Tomer Ezra, and Uriel Feige.
\newblock Fair-share allocations for agents with arbitrary entitlements.
\newblock In {\em ACM Conference on Economics and Computation (EC)}, pages
  127--127, 2021.

\bibitem[BEF22]{BabaioffEF22}
Moshe Babaioff, Tomer Ezra, and Uriel Feige.
\newblock On best-of-both-worlds fair-share allocations.
\newblock In {\em Proc.\ 18th Conf.\ Web and Internet Economics (WINE)}, volume
  13778, pages 237--255, 2022.

\bibitem[BF22]{babaioff2022fair}
Moshe Babaioff and Uriel Feige.
\newblock Fair shares: Feasibility, domination and incentives.
\newblock In {\em ACM Conference on Economics and Computation (EC)}, 2022.

\bibitem[BGJ{\etalchar{+}}19]{barman2019fair}
Siddharth Barman, Ganesh Ghalme, Shweta Jain, Pooja Kulkarni, and Shivika
  Narang.
\newblock Fair division of indivisible goods among strategic agents.
\newblock In {\em International Conference on Autonomous Agents and Multi-Agent
  Systems (AAMAS)}, page 1811–1813, 2019.

\bibitem[BILS22]{bei2022price}
Xiaohui Bei, Ayumi Igarashi, Xinhang Lu, and Warut Suksompong.
\newblock The price of connectivity in fair division.
\newblock {\em SIAM Journal on Discrete Mathematics}, 36(2):1156--1186, 2022.

\bibitem[BK20]{barman2020approximation}
Siddharth Barman and Sanath~Kumar Krishnamurthy.
\newblock Approximation algorithms for maximin fair division.
\newblock {\em ACM Transactions on Economics and Computation (TEAC)},
  8(1):1--28, 2020.

\bibitem[BL16]{bouveret2016characterizing}
Sylvain Bouveret and Michel Lema{\^\i}tre.
\newblock Characterizing conflicts in fair division of indivisible goods using
  a scale of criteria.
\newblock {\em Autonomous Agents and Multi-Agent Systems}, 30(2):259--290,
  2016.

\bibitem[BS06]{BansalS06}
Nikhil Bansal and Maxim Sviridenko.
\newblock The {S}anta {C}laus problem.
\newblock In {\em Symp.\ Theory of Computing (STOC)}, pages 31--40, 2006.

\bibitem[Bud11]{budish2011combinatorial}
Eric Budish.
\newblock The combinatorial assignment problem: Approximate competitive
  equilibrium from equal incomes.
\newblock {\em Journal of Political Economy}, 119(6):1061--1103, 2011.

\bibitem[CCK09]{ChakrabartyCK09}
Deeparnab Chakrabarty, Julia Chuzhoy, and Sanjeev Khanna.
\newblock On allocating goods to maximize fairness.
\newblock In {\em Proc.\ 50th Symp.\ Foundations of Computer Science (FOCS)},
  pages 107--116, 2009.

\bibitem[CKM{\etalchar{+}}19]{EFXjournal}
Ioannis Caragiannis, David Kurokawa, Herv{\'{e}} Moulin, Ariel~D. Procaccia,
  Nisarg Shah, and Junxing Wang.
\newblock The unreasonable fairness of maximum {Nash} welfare.
\newblock {\em {ACM} Trans. Economics and Comput.}, 7(3):12:1--12:32, 2019.

\bibitem[CKMS21]{chaudhury2021little}
Bhaskar~Ray Chaudhury, Telikepalli Kavitha, Kurt Mehlhorn, and Alkmini
  Sgouritsa.
\newblock A little charity guarantees almost envy-freeness.
\newblock {\em SIAM Journal on Computing}, 50(4):1336--1358, 2021.

\bibitem[DFL82]{deuermeyer1982scheduling}
Bryan~L. Deuermeyer, Donald~K. Friesen, and Michael~A. Langston.
\newblock Scheduling to maximize the minimum processor finish time in a
  multiprocessor system.
\newblock {\em SIAM Journal on Algebraic Discrete Methods}, 3(2):190--196,
  1982.

\bibitem[FGH{\etalchar{+}}19]{farhadi2019fair}
Alireza Farhadi, Mohammad Ghodsi, Mohammad~Taghi Hajiaghayi, Sebastien Lahaie,
  David Pennock, Masoud Seddighin, Saeed Seddighin, and Hadi Yami.
\newblock Fair allocation of indivisible goods to asymmetric agents.
\newblock {\em Journal of Artificial Intelligence Research}, 64:1--20, 2019.

\bibitem[FN22]{feige2022improved}
Uriel Feige and Alexey Norkin.
\newblock Improved maximin fair allocation of indivisible items to three
  agents.
\newblock {\em arXiv}, abs/2205.05363, 2022.

\bibitem[Fol67]{Foley67}
Duncan Foley.
\newblock Resource allocation and the public sector.
\newblock {\em Yale Econ Essays}, 7(1):45--98, 1967.

\bibitem[FST21]{feige2021tight}
Uriel Feige, Ariel Sapir, and Laliv Tauber.
\newblock A tight negative example for {MMS} fair allocations.
\newblock In {\em International Conference on Web and Internet Economics
  (WINE)}, pages 355--372, 2021.

\bibitem[GGD20]{GatesGD20}
Vael Gates, Thomas~L. Griffiths, and Anca~D. Dragan.
\newblock How to be helpful to multiple people at once.
\newblock {\em Cognitive Science}, 44, 2020.

\bibitem[GHS{\etalchar{+}}18]{ghodsi2018fair}
Mohammad Ghodsi, MohammadTaghi HajiAghayi, Masoud Seddighin, Saeed Seddighin,
  and Hadi Yami.
\newblock Fair allocation of indivisible goods: Improvements and
  generalizations.
\newblock In {\em Proc.\ 19th Conf.\ Economics and Computation (EC)}, pages
  539--556, 2018.
\newblock (arXiv:1704.00222).

\bibitem[GM19]{gourves2019maximin}
Laurent Gourv{\`e}s and J{\'e}r{\^o}me Monnot.
\newblock On maximin share allocations in matroids.
\newblock {\em Theoretical Computer Science}, 754:50--64, 2019.

\bibitem[GMT19]{garg2019approximating}
Jugal Garg, Peter McGlaughlin, and Setareh Taki.
\newblock Approximating maximin share allocations.
\newblock In {\em Symposium on Simplicity in Algorithms (SOSA)}, pages
  20:1--20:11, 2019.

\bibitem[GT21]{garg2021improved}
Jugal Garg and Setareh Taki.
\newblock An improved approximation algorithm for maximin shares.
\newblock {\em Artificial Intelligence}, 2021.

\bibitem[HL21]{huang2021algorithmic}
Xin Huang and Pinyan Lu.
\newblock An algorithmic framework for approximating maximin share allocation
  of chores.
\newblock In {\em ACM Conference on Economics and Computation (EC)}, pages
  630--631, 2021.

\bibitem[HS21]{hosseini2021guaranteeing}
Hadi Hosseini and Andrew Searns.
\newblock Guaranteeing maximin shares: Some agents left behind.
\newblock In {\em International Joint Conference on Artificial Intelligence
  (IJCAI)}, pages 238--244, 2021.

\bibitem[HSH23]{huang2023reduction}
Xin Huang and Erel Segal-Halevi.
\newblock A reduction from chores allocation to job scheduling.
\newblock {\em arXiv}, abs/2302.04581, 2023.

\bibitem[HSSH21]{Hosseini2021OrdinalMS}
Hadi Hosseini, Andrew Searns, and Erel Segal-Halevi.
\newblock Ordinal maximin share approximation for goods.
\newblock {\em J. Artif. Intell. Res.}, 74, 2021.

\bibitem[HSSH22]{Hosseini2022OrdinalMS}
Hadi Hosseini, Andrew Searns, and Erel Segal-Halevi.
\newblock Ordinal maximin share approximation for chores.
\newblock {\em ArXiv}, abs/2201.07424, 2022.

\bibitem[KPW18]{kurokawa2018fair}
David Kurokawa, Ariel~D Procaccia, and Junxing Wang.
\newblock Fair enough: Guaranteeing approximate maximin shares.
\newblock {\em Journal of the ACM (JACM)}, 65(2):1--27, 2018.

\bibitem[LMMS04]{EF1}
Richard~J. Lipton, Evangelos Markakis, Elchanan Mossel, and Amin Saberi.
\newblock On approximately fair allocations of indivisible goods.
\newblock In {\em Proceedings 5th {ACM} Conference on Electronic Commerce
  (EC)}, pages 125--131, 2004.

\bibitem[LV21]{li2021fair}
Zhentao Li and Adrian Vetta.
\newblock The fair division of hereditary set systems.
\newblock {\em ACM Transactions on Economics and Computation (TEAC)},
  9(2):1--19, 2021.

\bibitem[PW14]{procaccia2014fair}
Ariel~D Procaccia and Junxing Wang.
\newblock Fair enough: Guaranteeing approximate maximin shares.
\newblock In {\em Proc.\ 15th Conf.\ Economics and Computation (EC)}, pages
  675--692, 2014.

\bibitem[Ste48]{steinhaus1948problem}
Hugo Steinhaus.
\newblock The problem of fair division.
\newblock {\em Econometrica}, 16:101--104, 1948.

\bibitem[TL20]{truszczynski2020maximin}
Miroslaw Truszczynski and Zbigniew Lonc.
\newblock Maximin share allocations on cycles.
\newblock {\em Journal of Artificial Intelligence Research}, 69:613--655, 2020.

\bibitem[UF23]{uziahu2023fair}
Gilad~Ben Uziahu and Uriel Feige.
\newblock On fair allocation of indivisible goods to submodular agents.
\newblock {\em arXiv}, abs/2303.12444, 2023.

\bibitem[Var74]{Varian74}
Hal Varian.
\newblock Equity, envy and efficiency.
\newblock {\em J. Econom.\ Theory}, 29(2):217--244, 1974.

\bibitem[Woe97]{woeginger1997polynomial}
Gerhard~J Woeginger.
\newblock A polynomial-time approximation scheme for maximizing the minimum
  machine completion time.
\newblock {\em Operations Research Letters}, 20(4):149--154, 1997.

\end{thebibliography}

\end{document}